\pgfplotsset{compat=newest}
\newtheorem{remark}{Remark}
\newtheorem{problem}{Problem}
\numberwithin{equation}{section}
\numberwithin{figure}{section}
\numberwithin{algorithm}{section}
\newcommand{\floor}[1]{\lfloor #1 \rfloor}
\newcommand{\argmax}{\operatornamewithlimits{argmax}}
\newcommand{\E}{\mathbb{E}}
\newcommand{\lat}{\mathcal{L}}
\newcommand{\R}{\mathbb{R}}
\newcommand{\OPT}{\mathrm{OPT}}
\newcommand{\smallset}{\mathrm{small}}
\newcommand{\Prob}{\mathrm{P}}
\newcommand{\V}{\mathbb{V}}
\newcommand{\adm}[1]{\mathrm{adm}({#1})}
\newcommand*{\diffdchar}{\mathrm{d}}   
\newcommand*{\dd}{\mathop{\diffdchar\!}}
\newcommand{\curve}{u}
\newcommand{\ulm}[2]{\curve_{{#1},{#2}}}
\newcommand{\Cube}{\mathrm{Cube}}
\newcommand{\COMM}[2]{{
\begin{CJK}{UTF8}{ipxm}
\ifthenelse{\equal{#1}{SN}}{\color{blue}}{
\ifthenelse{\equal{#1}{TM}}{\color{red}}{
\ifthenelse{\equal{#1}{YY}}{\color{magenta}}{
\ifthenelse{\equal{#1}{BB}}{\color{cyan}}}}}
[#1: #2]
\end{CJK}
}}
\begin{document}


\title{Multiple Knapsack-Constrained Monotone DR-Submodular Maximization on Distributive Lattice \\ --- Continuous Greedy Algorithm on Median Complex ---}
\author{
Takanori Maehara\thanks{RIKEN AIP.}
\and 
So Nakashima\thanks{The University of Tokyo.}
\and
Yutaro Yamaguchi$^*$\thanks{Osaka University.}}
\date{}
\maketitle

\begin{abstract}
We consider a problem of maximizing a monotone DR-submodular function under multiple order-consistent knapsack constraints on a distributive lattice. 
Since a distributive lattice is used to represent a dependency constraint, the problem can represent a dependency constrained version of a submodular maximization problem on a set.
We propose a $1 - 1/e$ approximation algorithm for this problem.
To achieve this result, we generalize the continuous greedy algorithm to distributive lattices:
We choose a \emph{median complex} as a continuous relaxation of a distributive lattice and define the \emph{multilinear extension} on it.
We show that the median complex admits special curves, named \emph{uniform linear motions}, such that the multilinear extension of a DR-submodular function is concave along a positive uniform linear motion, which is a key property of the continuous greedy algorithm.
\end{abstract}


\tableofcontents
\thispagestyle{empty}
\setcounter{page}{0}

\clearpage 

\section{Introduction}
\label{sec:introduction}

\subsection{Problem and Result}

In this study, we consider multiple knapsack-constrained monotone DR-submodular maximization problem on finite distributive lattices.

Let $(\lat, \le)$ be a finite distributive lattice, and let $J(\lat)$ be the set of join-irreducible elements of $\lat$. 
For $X \in \lat$, we denote by $\adm{X}$ the set of join-irreducible elements that is admissible to $X$ (see Section~\ref{sec:preliminaries} for definitions).
A function $f \colon \lat \to \mathbb{R}$ is \emph{monotone} if $f(X) \le f(Y)$ for all $X \le Y$. 
$f$ is a \emph{DR-submodular function}~\cite{gottschalk2015submodular,nakashima2019subspace} if
\begin{align}
    f(X \lor p) - f(X) \ge f(Y \lor q) - f(Y)
\end{align}
for all $X, Y \in \lat$ and $p, q \in J(\lat)$ such that $X \le Y$, $p \le q$, $p \in \adm{X}$, and $q \in \adm{Y}$.
Let $c \colon J(\lat) \to \mathbb{R}_+$ be the weight on the join-irreducible elements.
We identify $c$ as the function $c \colon \lat \to \mathbb{R}$ by $c(X) = \sum_{p \le X} c(p)$ the weight of $X$. 
We say that $c$ is \emph{order-consistent} if $c(p) \le c(q)$ for all $p, q \in J(\lat)$ with $p \le q$.
A \emph{knapsack constraint} is represented by $c(X) \le b$ for $b \in \mathbb{R}_+$.
Then, the \emph{multiple knapsack-constrained monotone DR-submodular maximization problem} is the following optimization problem:
\begin{align}
\label{eq:problem}
    \begin{array}{ll}
         \text{maximize} & f(X)  \\
         \text{subject to} & c_\lambda(X) \le b_\lambda, \ (\lambda \in \Lambda),
    \end{array}
\end{align}
where $f \colon \lat \to \mathbb{R}$ is a monotone DR-submodular function, $\Lambda$ is a finite set with $|\Lambda| = O(1)$, and $c_\lambda(X) \le b_\lambda$ is a knapsack constraint with order-consistent $c_\lambda$ for each $\lambda \in \Lambda$.

We prove that this problem is solvable in polynomial time within an approximation factor of $1 - 1/e$.
This is our main theorem.
\begin{theorem}
\label{thm:main}
For any $\epsilon > 0$, there exists a polynomial-time $(1 - \epsilon) (1 - 1/e)$-approximation algorithm for the multiple knapsack-constrained monotone DR-submodular maximization problem.
\end{theorem}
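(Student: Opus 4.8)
The plan is to mimic the classical continuous greedy framework for monotone submodular maximization, but with the Boolean cube $[0,1]^{J(\lat)}$ replaced by the median complex of $\lat$ and the ordinary multilinear extension replaced by the one announced in the abstract. Concretely: (i) lift Problem~\eqref{eq:problem} to a ``fractional'' problem on the median complex, whose feasible region $Q$ is obtained by extending each weight $c_\lambda$ linearly on every cube and imposing $c_\lambda(x)\le b_\lambda$; (ii) approximately maximize the multilinear extension $F$ over $Q$ by a time-discretized continuous greedy whose infinitesimal steps are positive uniform linear motions; (iii) round the resulting fractional point to an element of $\lat$ satisfying all knapsack constraints, losing only a $(1-\epsilon)$ factor. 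Throughout we use the standard guessing trick for knapsacks: since $|\Lambda|=O(1)$, we can enumerate in polynomial time the $O(1/\epsilon)$ join-irreducibles of largest $c_\lambda$-weight contained in an optimal solution $X^\star$, include them, decrease the budgets accordingly, and restrict attention to the residual instance in which $c_\lambda(p)\le \epsilon\, b_\lambda$ for every remaining join-irreducible $p$ and every $\lambda$.

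For the fractional step, initialize at the minimum element of $\lat$ and run $t$ from $0$ to $1$ in increments of $\delta=\mathrm{poly}(\epsilon,|J(\lat)|)^{-1}$. At the current point $X(t)$ we estimate $\nabla F(X(t))$ by random sampling (as in the classical analysis; the variance is controlled by bounded second derivatives of $F$ along uniform linear motions), and then select a positive uniform linear motion emanating from $X(t)$, of unit speed, that maximizes the directional derivative of $F$ subject to remaining inside $Q$; we advance along it for time $\delta$. Since $X^\star$ is feasible and $f$ is monotone (so $\nabla F\ge 0$ coordinatewise on the complex), the motion pointing from $X(t)$ toward $X(t)\lor X^\star$, suitably truncated so that it stays in $Q$, is an admissible candidate, whence
\begin{align}
    \frac{\dd}{\dd t}\, F(X(t)) \ \ge\ F\!\left(X(t)\lor X^\star\right) - F(X(t)) \ \ge\ F(X^\star) - F(X(t)),
\end{align}
the first inequality being concavity of $F$ along the positive uniform linear motion (the key property from the abstract, applied on each cube the trajectory traverses) and the second being monotonicity. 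A standard differential-inequality (Gr\"onwall-type) argument then yields $F(X(1))\ge(1-1/e-O(\delta))\,\OPT$, and the sampling and discretization errors are absorbed into the $(1-\epsilon)$ factor.

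It remains to round $X(1)\in Q$ to a lattice element. Here we exploit that $F$ is multilinear on every cube of the median complex, and that the cubes of the median complex of a distributive lattice correspond to Boolean intervals of $\lat$: moving $X(1)$ to a vertex of the cube containing it one coordinate at a time can only increase $\E[F]$, and, if the choices are made consistently with the order-ideal structure of $\lat$, the rounded point remains a genuine element of $\lat$. Because the residual weights satisfy $c_\lambda(p)\le \epsilon b_\lambda$, the total change of $c_\lambda$-weight over the whole rounding is $O(\epsilon)\, b_\lambda$, so each knapsack constraint is violated by at most an $O(\epsilon)$ fraction and can be restored (e.g.\ by deleting one further join-irreducible) at an $O(\epsilon)$ loss of value. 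Iterating cube by cube along a descending path from $X(1)$ to a vertex, and re-attaching the guessed large elements, produces a feasible $X\in\lat$ with $f(X)\ge(1-\epsilon)(1-1/e)\,\OPT$ in expectation; repetition or a standard derandomization removes the randomness.

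The \textbf{main obstacle} is the rounding step. Unlike a matroid or a single-knapsack polytope, the median complex is a non-convex polyhedral complex, so the usual pipage/swap-rounding theorems do not apply verbatim; one must show that at each stage a rounding move can be chosen that (a) keeps the point inside $\lat$, (b) does not decrease $\E[F]$ by multilinearity, and (c) keeps each $c_\lambda$ within a $(1+O(\epsilon))$ factor of its budget, and that such moves compose all the way to a vertex. The order-consistency of the weights $c_\lambda$ and the structure of join-irreducibles and admissible elements are precisely what make (a) and (c) tractable. A secondary technical point is the verification that the sampled-gradient, time-discretized continuous greedy still attains $1-1/e$ up to a $(1-\epsilon)$ factor, which relies on the Lipschitz and smoothness estimates for $F$ along uniform linear motions.
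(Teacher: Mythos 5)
Your overall architecture matches the paper's: a time-discretized continuous greedy on the median complex whose analysis rests on concavity of the multilinear extension along positive uniform linear motions (with the motion toward $x^k\lor x^\star$ as the witness direction), combined with partial enumeration and a rounding phase. The continuous part of your proposal is essentially the paper's Theorem~\ref{thm:conti-greedy-for-knapsack}, modulo the paper's extra care for steps where the $\epsilon$-neighborhood leaves the current cube (which costs only $|P|$ wasted iterations).

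The genuine gap is in your rounding step. You propose deterministic, coordinate-by-coordinate pipage-style rounding and assert that ``the total change of $c_\lambda$-weight over the whole rounding is $O(\epsilon)\,b_\lambda$'' because each residual element has weight at most $\epsilon b_\lambda$. That bound does not follow: rounding each coordinate in the direction that does not decrease $F$ can increase $c_\lambda$ by up to the full weight of that element, and there are up to $|P|$ such coordinates, so the cumulative violation can be $\Omega(|P|\epsilon)\,b_\lambda$. Pairwise pipage moves that conserve one linear budget cannot conserve several simultaneously; this is precisely the known obstruction for multiple knapsacks, and it is why the paper (following Kulik--Shachnai--Tamir) instead rounds by \emph{independent randomized} sampling from the slightly truncated point $\ulm{\bot}{\bar x}(1-\epsilon)$, enumerates the big-cost elements separately so that the remaining weights are at most $\epsilon^4 b_\lambda$, and invokes a Chebyshev--Cantelli concentration bound (Lemmas~\ref{lem:kulik2.2}--\ref{lem:kulik2.4}) to show feasibility holds with high probability. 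Two secondary issues: your claim that restoring feasibility by ``deleting one further join-irreducible'' loses only $O(\epsilon)$ of the value requires that every remaining element has small marginal profit, which is exactly what the paper's \emph{value} residual problem and the $\lceil ed\epsilon^{-3}\rceil$-element partial enumeration guarantee and your weight-only guessing does not; and deleting a non-maximal join-irreducible destroys the ideal property, which the paper repairs with the push-down procedure (Algorithm~\ref{alg:pushdown}) whose value and cost analysis uses DR-submodularity and order-consistency. As written, your rounding argument would not go through.
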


This problem naturally arises in ``dependency-constrained'' problems as follows.
\begin{Example}
Consider a sensor activation problem: Let $P$ be a set of sensors placed on a space, and we want to activate a subset of sensors that maximizes the coverage area under some constraint.
This is a typical problem of maximizing a monotone submodular set function.
Now we consider dependency constraints represented by a directed acyclic graph $\mathcal{G} = (P, E)$, where each edge $(p, q) \in E$ means that sensor $p$ can be selected only if sensor $q$ is already selected. 
Then, the set of possible sensor selection $X \subseteq P$ forms a distributive lattice.
If all the marginal covered area of $p$ is smaller than that of $q$ for all $(p, q) \in E$, the function $f$ forms a monotone DR-submodular function on this lattice.
Also if the cost $c(p)$ is more expensive $c(q)$ if $(p, q) \in E$, the cost function $c$ is order-consistent.
Thus, if these conditions are met, the problem can be represented as a (multiple) knapsack-constrained monotone DR-submodular maximization problem.
\end{Example}

\subsection{Background and Motivation}

Let $V$ be a finite set. 
A function $f \colon 2^V \to \mathbb{R}$ is \emph{submodular} if it satisfies the submodular inequality: for all $X, Y \subseteq V$, 
\begin{align}
\label{eq:set-submodular}
    f(X) + f(Y) \ge f(X \cap Y) + f(X \cup Y).
\end{align}
$f$ is submodular if and only if it satisfies the \emph{diminishing return property}: for all $X, Y \subseteq V$ with $X \subseteq Y$ and $p \in V \setminus Y$,
\begin{align}
\label{eq:set-diminishing-return}
    f(X \cup p) - f(X) \ge f(Y \cup p) - f(Y).
\end{align}
Submodular functions are ubiquitous in many fields, including combinatorial optimization~\cite{fujishige1983note}, economics~\cite{murota2016discrete}, and machine learning~\cite{krause2014submodular}.
Thus, maximizing such a function is regarded as one of the fundamental combinatorial optimization problems.
The problem is NP-hard in general~\cite{cornnejols1977location}; however,  after a seminal work by Nemhauser, Wolsey, and Fisher~\cite{nemhauser1978analysis}, many approximation algorithms have been proposed for several types of constraints such as cardinality constraint~\cite{nemhauser1978analysis}, matroid constraint~\cite{calinescu2011maximizing}, knapsack constraint~\cite{sviridenko2004note}, and multiple knapsack constraint~\cite{kulik2009maximizing}.

Several attempts have been conducted to generalize the domain of the problem from a finite set to more general space, say, a lattice.
In general, submodular functions can be defined via the lattice submodularity: Let $\lat$ be a lattice. Then, $f \colon \lat \to \mathbb{R}$ is lattice submodular~\cite{topkis1978minimizing} if for all $X, Y \in \lat$,
\begin{align}
\label{eq:lattice-submodular}
    f(X) + f(Y) \ge f(X \land Y) + f(X \lor Y).
\end{align}

For the integer lattice $\mathbb{Z}^n$, 
Alon et al.~\cite{alon2012optimizing} studied a particular monotone submodular function on the integer lattice and proposed $1 - 1/e$ approximation algorithm to maximize the function under a knapsack constraint.
Soma and Yoshida~\cite{soma2018maximizing} generalized this technique for general submodular functions on the integer lattice.
They also introduce \emph{DR-submodular functions} as follows: for $X, Y \in \mathbb{Z}^n$ and $j \in \{1, \dots, n\}$, 
\begin{align}
\label{eq:int-diminishing-return}
    f(X + e_j) - f(X) \ge f(Y + e_j) - f(Y),
\end{align}
where $e_j$ is the $j$-th unit vector.
For a set function, the submodularity~\eqref{eq:set-submodular} and the diminishing return property~\eqref{eq:set-diminishing-return} is equivalent; however, it is not the case on integer lattices, i.e., \eqref{eq:lattice-submodular} does not imply \eqref{eq:int-diminishing-return}.
The DR-submodularity is obtained by generalizing of the diminishing return property.
The relation between the DR-submodular functions and submodular set functions are explored in \cite{ene2016reduction}.

For more general lattices, 
Gottschalk and Peis~\cite{gottschalk2015submodular} generalized the DR-submodularity on distributive lattices.
They proved that the cardinality-constrained problem can be solved within an approximation factor of $1 - 1/e$.
Also, they showed that the knapsack-constrained problem is hard to approximate in general.
The first and second authors~\cite{nakashima2019subspace} generalized the DR-submodularity to modular lattices and prove that cardinality constrained problem and order-consistent knapsack-constrained problem can be solved within constant approximation factors.

Thus far, all the existing submodular maximization algorithms on general lattices are ``combinatorial,'' i.e., the greedy algorithm.
On the other hand, the most powerful algorithm for submodular maximization on sets is based on the continuous relaxation~\cite{chekuri2014submodular}.
For a submodular set function $f \colon 2^V \to \mathbb{R}$, its \emph{multilinear extension}~\cite{vondrak2008optimal,calinescu2011maximizing} $F \colon [0, 1]^V \to \mathbb{R}$ is defined by
\begin{align}
    F(x) = \E_{\hat X \sim x}[ f(\hat X) ] = \sum_{X \subseteq V} f(X) \prod_{i \in X} x_i \prod_{j \in V \setminus X} (1 - x_j),
\end{align}
where $\E_{\hat X \sim x}$ is the expectation with respect to the probability distribution such that $\Prob[i \in \hat X] = x_i$ for all $i \in V$ independently.
This function is not a concave function but can be maximized within a constant approximation factor under some conditions by the continuous greedy algorithm~\cite{vondrak2008optimal,calinescu2011maximizing} or projected gradient method~\cite{hassani2017gradient}.
Once we obtain a continuous approximate solution, we can obtain a discrete approximate solution via rounding method such as the pipage rounding~\cite{ageev2004pipage} or the contention resolution scheme~\cite{chekuri2014submodular}.
Soma and Yoshida~\cite{soma2018maximizing} generalized the multilinear extension to a DR submodular function on the integer lattice.
However, this approach has not been generalized to general lattices.
Thus, we had the following research question:
\begin{problem}
Can we generalize the continuous greedy algorithm to general lattices?
\end{problem}

This study gives the first positive answer to this question.
Our main result (Theorem~\ref{thm:main}) is obtained by the continuous greedy algorithm on distributive lattices.

\subsection{Proof Outline}

Let us recall Kulik, Shachnai, and Tamir~\cite{kulik2009maximizing}'s algorithm for the multiple knapsack-constrained submodular maximization problem on sets.
Their algorithm uses the \emph{continuous greedy algorithm} to find a good approximate continuous solution.
Then, it uses \emph{partial enumeration} to obtain a good approximate discrete solution.
%
Our proof generalizes their proof to distributive lattices.
The main difficulty is generalizing the continuous relaxation to distributive lattices.

Let $\lat$ be a distributive lattice.
By the Birkhoff representation theorem~\cite{gratzer2002general}, the ideals (i.e., the downward-closed sets) of the poset $P$ of the join-irreducible elements of $\lat$ is isomorphic to $\lat$.
This motivates us to define the continuous domain, $K(\lat)$, as a subset of $[0, 1]^P$. 
The obtained domain, $K(\lat)$, forms a \emph{cubical complex} (more precisely, \emph{median complex}), which is locally isomorphic to a Boolean hypercube $[0, 1]^n$, and is obtained by gluing the hypercubes by these faces. 
We define the multilinear extension $F \colon K(\lat) \to \mathbb{R}$ of a lattice DR-submodular function $f$ by gluing the multilinear extensions of $f$ in each hypercube.

In the set submodular maximization, the most important property of the continuous relaxation is the \emph{concavity along any positive direction}, i.e., for any two points $x, y \in [0, 1]^V$ with $x \le y$, the function $F((1 - t) x + t y)$ is concave in $t$. 
We generalize this property as follows. 
As a generalization of the straight line $[0,1] \ni t \mapsto (1 - t) x + t y$, we introduce \emph{uniform linear motion} $c_{x,y} \colon [0,1] \to K(\lat)$ from $x \in K(\lat)$ to $y \in K(\lat)$.
The uniform linear motion coincides with a straight line in each hypercube.
At a point that belongs to the common face of two hypercubes, the velocity of the uniform linear motion must satisfy the ``flow conservation law.''
We prove that for any two points $x, y \in K(\lat)$, there uniquely exists uniform linear motion from $x$ to $y$ (Theorem~\ref{thm:unique-existence-uniform-linear-motion}).
Also, we prove that the function $F(c_{x,y}(t))$ is concave in $t$ (Theorem~\ref{thm:multilinear-concave}).
Using this property, we can generalize the continuous greedy algorithm (Theorem~\ref{thm:conti-greedy-for-knapsack}) that has an approximation factor of $1 - 1/e - \epsilon$.

The rounding part is a generalization of Kulik, Shachnai, and Tamir~\cite{kulik2009maximizing}'s algorithm.
However, there are several minor difficulties caused by a distributive lattice.
The proof in Section~\ref{sec:rounding} verifies the generalization is valid.

\subsection{Other Related Work}

Submodular ``minimization,'' instead of the maximization, is also a well-studied problem~\cite{fujishige1983note}. 
The key technique in the submodular minimization is the \emph{Lov\'asz extension}, which is another continuous relaxzation defined on $[0,1]^V$.
A set function is submodular if and only if its Lov\'asz extension is convex~\cite{lovasz1983submodular}.

Submodular minimization has been generalized to lattices~\cite{topkis1978minimizing}.
Brady and McCammond~\cite{brady2010braids} introduced \emph{orthoscheme complex} in the context of geometric group theorey. 
Chalopin, Chepoi, Hirai, and Osajda~\cite{chalopin2014weakly} showed that orthoscheme complex of modular lattices with $L_2$ norm forms a \emph{CAT(0)-space}~\cite{bridson1999metric}, which admits unique geodesic for any two points.
This also makes the orthoscheme complex as a \emph{geodesic convex space}.
Hirai~\cite{hirai2018convexity} showed that a function on a modular lattice is submodular if and only if its Lov\'asz extension (generalized to a lattice) is geodesic convex.

Our approach differs from this line.
To define the multilinear extension of submdoular function over distributive lattices, we use median complex, which is different from orthscheme complex.

Chepoi~\cite{chepoi2000graphs} and Roller~\cite{roller1998poc} independently proved that a median complex equipped with $l_2$-metric also forms a CAT(0)-space.
Our uniform linear motion is different from the geodesic in this context (see Example in Section~\ref{subsec:ulm}).


%
\section{Preliminaries}
\label{sec:preliminaries}

%

\subsection{Distributive Lattice}

Let $(P, \le)$ be a poset.We say $p \in P$ \emph{covers} $p' \in P$ if $p' < p$ and there are no $p'' \in P$ such that $p' \le p'' \le p$.
We denote this relation by $p' \prec p$.
A subset $I \subseteq P$ is an \emph{ideal} if $I$ is downward closed: $p' \le p$ and $p \in I$ implies that $p' \in I$.
For $p \in P$, let $I_p$ be the principal ideal $\{p' \in P \mid p' \le p \}$.
A subset $X \subseteq P$ is an \emph{antichain} if all pairs in $X$ are incomparable.

A \emph{lattice} $(\lat, \le)$ is a partially ordered set with the largest common lower bound $X \land Y$ and the least common upper bound $X \land Y$ for any $X,Y \in \lat$.
The former is called \emph{meet} and the latter $\emph{join}$.
An element $X \in \lat$ is \emph{join-irreducible} if there is the unique $Y \prec X$.
A lattice $\lat$ is said to be \emph{distributive} if $X \land (Y \lor Z) = (X \lor Y) \land (X \lor Z)$ and $X \lor (Y \land Z) = (X \lor Y) \land (X \lor Z)$ for any $X,Y,Z \in \lat$.
In this paper, we only deal with finite distributive lattices.

There is a one-to-one correspondence between a distributive lattice $\lat$ and a poset $P$ up to isomorphism owing to the Birkhoff representation theorem.
Indeed, the family $I(P)$ of the ideals of $P$ is a distributive lattice with meet $X \land Y := X \cap Y$ and join $X \lor Y := X \cup Y$.
Conversely, We can construct a poset $J(\lat)$ of the join-irreducible elements of $\lat$, where the order is inherited from $\lat$.
Then, Birkhoff representation theroem states that $P \cong J(I(P))$ and $\lat \cong I(J(\lat))$.
In the rest of the paper, $\lat$ denotes a distributive lattice and $P$ the corresponding poset $J(\lat)$.

\subsection{DR-Submodular Function}

Let $\lat$ be a distributive lattice.
Let $\adm{X}$ denote the set of the minimal elements of $P \setminus X$, which is called \emph{admissible elements}.
A function $f \colon \lat \to \mathbb{R}$ is a \emph{DR-submodular function} if
\begin{align}
    f(X \lor a) - f(X) \le f(Y \lor b) - f(Y)
\end{align}
for all $X, Y \in \lat$ and $a, b \in J(\lat)$ such that $X \le Y$, $a \le b$, and $a \in \mathrm{adm}(X)$, $b \in \mathrm{adm}(Y)$.
A function $f \colon \lat \to \mathbb{R}$ is a \emph{DR-supermodular function} if $-f$ is a DR-submodular function.

\subsection{Notation}
For a vector $x \in \mathbb{R}^n$, let $x_p$ be the $p$-th component of $x$.
We define $\mathrm{sign} \colon \mathbb{R} \to \{-1,1\}$ by $\mathrm{sign}(w) = 1$ if $w \ge 0$ and $\mathrm{sign}(w) = -1$ otherwise.
Let $h$ be a function of $\epsilon$.
We write one-sided limit to $0$ from above as
$\lim_{\epsilon \to 0^+} h(\epsilon)$.
For a function $f \colon \mathbb{R} \to \mathbb{R}^n$, let
\begin{align}
    &\frac{\dd^+ f(t)}{\dd t} = \lim_{\epsilon \to 0^+} \frac{f(t + \epsilon) - f(t)}{\epsilon},\\
    &\frac{\dd^- f(t)}{\dd t} = \lim_{\epsilon \to 0^+} \frac{f(t) - f(t-\epsilon)}{\epsilon}.
\end{align}
For a multivariate function $f \colon\mathbb{R}^m \to \mathbb{R}^n$ and one of the coordinate $p$, let
\begin{align}
     &\frac{\partial^+ f(x)}{\partial x_p} = \lim_{\epsilon \to 0^+} \frac{f(x + \epsilon e_p) - f(x)}{\epsilon},\\
     &\frac{\partial^- f(x)}{\partial x_p} = \lim_{\epsilon \to 0^+} \frac{f(x) - f(x - \epsilon e_p)}{\epsilon},
\end{align}
where $e_p$ is the unit vector for the $p$-th coordinate.

For a function $f \colon \lat \to \mathbb{R}$ and ideal $T\subseteq P$, we denote $f(I(T))$ as $f(T)$ owing to the Birkhoff representation theorem.
For ideals $S,T$, let $f_T(S) = f(T \cup S) - f(T)$.

\section{Continuous Extension of Distributive Lattices}

\subsection{Median Complex Arose From a Distributive Lattice}
In this section, we introduce a continuous extension of distributive lattices.
A \emph{continuous extension} $K(\lat)$ of a distributive lattice $\lat = I(P)$ is defined as
\begin{align}
    K(\lat) = \{x \in [0,1]^P \mid (x_p > 0 \land p' < p) \Rightarrow x_{p'} = 1\}.
\end{align}

We will see the properties of $K(\lat)$.
\begin{lemma}
\label{lem:support-forms-an-ideal}
A point $x \in [0,1]^P$ is in $K(\lat)$ if and only if its support $\mathrm{supp}(x) = \{ p \in P : x_p > 0 \}$ forms an ideal and $x_p = 1$ for non-maximal element $p$ of $\mathrm{supp}(x)$.
\end{lemma}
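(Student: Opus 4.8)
The statement is an unpacking of the definition
\[
K(\lat) = \{x \in [0,1]^P \mid (x_p > 0 \ \land\ p' < p) \Rightarrow x_{p'} = 1\},
\]
so the plan is to show the two bracketed conditions in the lemma — (i) $\mathrm{supp}(x)$ is an ideal, and (ii) $x_p = 1$ whenever $p$ is a non-maximal element of $\mathrm{supp}(x)$ — together are equivalent to the defining implication. I would prove this as a biconditional, handling each direction separately.

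\textbf{Forward direction ($x \in K(\lat) \Rightarrow$ (i) and (ii)).} Suppose the defining implication holds. For (i), take $p \in \mathrm{supp}(x)$ and $p' < p$; since $x_p > 0$, the implication gives $x_{p'} = 1 > 0$, so $p' \in \mathrm{supp}(x)$, i.e.\ $\mathrm{supp}(x)$ is downward closed, hence an ideal. For (ii), let $p$ be a non-maximal element of $\mathrm{supp}(x)$: there is some $q \in \mathrm{supp}(x)$ with $p < q$. Then $x_q > 0$ and $p < q$, so the implication yields $x_p = 1$.

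\textbf{Reverse direction ((i) and (ii) $\Rightarrow x \in K(\lat)$).} Suppose $\mathrm{supp}(x)$ is an ideal and every non-maximal element of $\mathrm{supp}(x)$ has $x$-value $1$. To verify the defining implication, take any $p, p'$ with $x_p > 0$ and $p' < p$. Then $p \in \mathrm{supp}(x)$, and since $\mathrm{supp}(x)$ is an ideal (downward closed) and $p' < p$, we get $p' \in \mathrm{supp}(x)$. Moreover $p'$ is non-maximal in $\mathrm{supp}(x)$, witnessed by $p$ itself (as $p' < p$ and $p \in \mathrm{supp}(x)$), so by (ii) $x_{p'} = 1$. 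This establishes the implication, so $x \in K(\lat)$.

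\textbf{Expected main obstacle.} There is essentially no obstacle: the proof is a routine logical manipulation of the definitions, the only mildly delicate point being to correctly use the chosen element $p$ (resp.\ $q$) as the witness for non-maximality in the right place. I would be careful that ``non-maximal in $\mathrm{supp}(x)$'' is interpreted with respect to the induced order from $P$, and that strict versus non-strict inequalities are handled consistently with the covering/order conventions fixed in the Preliminaries.
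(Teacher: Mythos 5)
Your proof is correct, and it takes the same route as the paper, which simply asserts that the lemma "immediately follows from the definition of $K(\lat)$"; you have merely written out the two directions of that definitional unpacking explicitly. Both directions of your argument check out, so nothing further is needed.
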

\begin{proof}
This immediately follows from the definition of $K(\lat)$.
\end{proof}

\begin{lemma}
\label{lem:continuous-extension-is-distributive-lattice}
The continuous extension $K(\lat)$ forms a distributive lattice with respect to the element-wise inequality, i.e., $x \le y$ if and only if $x_p \le y_p$ for all $p \in P$.
\end{lemma}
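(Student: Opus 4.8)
The plan is to realize $K(\lat)$ as a sublattice of the ambient lattice $[0,1]^P$ and then invoke the standard fact that every sublattice of a distributive lattice is itself distributive. First I would recall that $[0,1]^P$, ordered coordinate-wise, is a distributive lattice: it is a direct product of copies of the chain $[0,1]$, each of which is trivially distributive, and a direct product of distributive lattices is distributive. In this ambient lattice the meet and join are computed coordinate-wise, $(x \wedge y)_p = \min\{x_p, y_p\}$ and $(x \vee y)_p = \max\{x_p, y_p\}$.

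The only real content is then to check that $K(\lat)$ is closed under these two operations, so that the meet and join taken in $[0,1]^P$ actually lie in $K(\lat)$ and hence serve as meet and join inside $K(\lat)$. For the meet, take $x, y \in K(\lat)$, set $z = x \wedge y$, and suppose $z_p > 0$ and $p' < p$; then $\min\{x_p, y_p\} > 0$ forces both $x_p > 0$ and $y_p > 0$, so applying the defining condition of $K(\lat)$ to $x$ and to $y$ yields $x_{p'} = y_{p'} = 1$, whence $z_{p'} = 1$, so $z \in K(\lat)$. For the join, set $z = x \vee y$ and suppose $z_p > 0$ and $p' < p$; here $\max\{x_p, y_p\} > 0$ only guarantees that at least one of $x_p, y_p$ is positive, but that suffices: if, say, $x_p > 0$, then $x_{p'} = 1$ by the definition of $K(\lat)$, so $z_{p'} = \max\{x_{p'}, y_{p'}\} = 1$, and again $z \in K(\lat)$. (One could alternatively phrase both checks via Lemma~\ref{lem:support-forms-an-ideal}, using that unions and intersections of ideals are ideals, but the direct verification is shorter.)

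With closure established, $K(\lat)$ is a sublattice of $[0,1]^P$ under the coordinate-wise order; since the distributive identities are universally quantified equations in $\wedge$ and $\vee$, they are inherited by any sublattice, so $K(\lat)$ is a distributive lattice. I do not expect a genuine obstacle in this argument; the single point that deserves care is the asymmetry between the two closure checks — the meet case requires both coordinates to be positive, whereas the join case needs only one — and this is precisely where the ``saturation'' clause in the definition of $K(\lat)$ (the demand that $x_{p'} = 1$, not merely $x_{p'} > 0$) does its work.
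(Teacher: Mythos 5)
Your proposal is correct and follows essentially the same route as the paper: both define meet and join as coordinate-wise $\min$ and $\max$, verify closure of $K(\lat)$ under these operations using the defining condition (the paper cites Lemma~\ref{lem:support-forms-an-ideal}, you verify it directly), and inherit the distributive identities from $[0,1]^P$. You simply spell out the closure checks that the paper leaves as ``easily seen.''
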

\begin{proof}
We show that $K(\lat)$ forms a distributive lattice equipped with the following meet and join operators:
\begin{align}
    x \land y &= (\min\{x_1, y_1\}, \min\{x_2, y_2\}, \dots, \min\{x_{|P|}, y_{|P|}\}),\\
    x \lor y &= (\max\{x_1, y_1\}, \max\{x_2, y_2\}, \dots, \max\{x_{|P|}, y_{|P|}\}).
\end{align}
We can easily see that the above operations are closed in $K(\lat)$ by Lemma~\ref{lem:support-forms-an-ideal} and satisfies distributive law.
\end{proof}

A set of hypercubes is a \emph{cubical complex} if (1) if $C \in \mathcal{K}$ then any face of $C$ is also in $\mathcal{C}$, (2) If $C_1, C_2 \in \mathcal{K}$ then $C_1 \cap C_2$ is a face of $C_1$ and $C_2$.

\begin{lemma}
$K(\lat)$ is a cubical complex.
\end{lemma}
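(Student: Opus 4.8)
The plan is to put an explicit cubical structure on $K(\lat)$ indexed by the ideals of $P$, and then verify the two axioms directly. For an ideal $J \subseteq P$, define the cube
\[
\Cube_J \;=\; \{\, x \in K(\lat) : x_p = 0 \text{ for all } p \notin J,\ \ x_p = 1 \text{ for all non-maximal } p \in J \,\}.
\]
By Lemma~\ref{lem:support-forms-an-ideal} this is exactly the set of $x$ whose support is contained in $J$ and which take the value $1$ on $J \setminus \maximal{J}$; its only free coordinates are those indexed by the antichain $\maximal{J}$, each ranging independently over $[0,1]$ while remaining in $K(\lat)$ (again a direct consequence of the Lemma), so $\Cube_J$ is an honest $|\maximal{J}|$-dimensional hypercube sitting inside $K(\lat)$. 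Taking $J = \mathrm{supp}(x)$ shows every $x \in K(\lat)$ lies in $\Cube_{\mathrm{supp}(x)}$, hence $\bigcup_J \Cube_J = K(\lat)$. I would then let $\mathcal{K}$ be the family of all faces of all the cubes $\Cube_J$.

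Next I would give a closed-form description of $\mathcal{K}$. Freezing a subset $B$ of the free coordinates of $\Cube_J$ to the value $1$ and a disjoint subset $Z$ of them to the value $0$ yields the cube
\[
C_{I,J'} \;=\; \{\, x \in K(\lat) : x_p = 1 \text{ for } p \in I,\ \ x_p = 0 \text{ for } p \notin J' \,\},
\qquad I = (J \setminus \maximal{J}) \cup B,\ \ J' = J \setminus Z,
\]
and one checks that $I \subseteq J'$ are again ideals with $J' \setminus I$ an antichain, and conversely that every pair of ideals $I \subseteq J'$ with $J' \setminus I$ an antichain arises this way from $\Cube_{J'}$ (the point being that $I \subseteq J'$ ideals with $J'\setminus I$ an antichain forces $J' \setminus \maximal{J'} \subseteq I$). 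Hence $\mathcal{K} = \{\, C_{I,J} : I \subseteq J \text{ ideals of } P,\ J \setminus I \text{ an antichain} \,\}$, each $C_{I,J}$ being a hypercube of dimension $|J \setminus I|$. Axiom~(1), closure under faces, is then immediate: a face of a face of $\Cube_J$ is again a face of $\Cube_J$, so $\mathcal{K}$ is closed under faces by construction.

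For axiom~(2) I would compute the intersection of two cells by inspection of coordinates. A point lies in $C_{I_1,J_1} \cap C_{I_2,J_2}$ precisely when $x_p = 1$ on $I_1 \cup I_2$ and $x_p = 0$ off $J_1 \cap J_2$; if $I_1 \cup I_2 \not\subseteq J_1 \cap J_2$ these requirements conflict and the intersection is empty (which, as usual, we regard as the common $(-1)$-dimensional face — alternatively, two cells of $\mathcal K$ are then simply disjoint), and otherwise the intersection equals $C_{I_1 \cup I_2,\, J_1 \cap J_2}$. Since unions and intersections of ideals are ideals and $(J_1 \cap J_2) \setminus (I_1 \cup I_2) \subseteq J_1 \setminus I_1$ is an antichain, $C_{I_1 \cup I_2,\, J_1 \cap J_2}$ is a bona fide cell of $\mathcal{K}$; and it is a face of $C_{I_1,J_1}$ because, using the non-conflict conditions $I_2 \subseteq J_1$ and $I_1 \subseteq J_2$, it is obtained from $C_{I_1,J_1}$ by freezing the coordinates in $I_2 \setminus I_1$ to $1$ and those in $J_1 \setminus J_2$ to $0$. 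By symmetry it is also a face of $C_{I_2,J_2}$, which gives axiom~(2).

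The step I expect to be the main obstacle is precisely this last bookkeeping in axiom~(2): one has to confirm that the coordinate sets $I_2 \setminus I_1$ and $J_1 \setminus J_2$ being frozen to $1$ and to $0$ are disjoint and, together with the surviving free set $(J_1 \cap J_2) \setminus (I_1 \cup I_2)$, partition the free coordinates $J_1 \setminus I_1$ of $C_{I_1,J_1}$ — this is exactly where the non-conflict conditions $I_i \subseteq J_{3-i}$ enter. Everything else reduces to Lemma~\ref{lem:support-forms-an-ideal} together with the elementary facts that $\maximal{J}$ is an antichain and that ideals of $P$ are closed under union and intersection.
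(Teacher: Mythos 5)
Your proof is correct, but it takes a genuinely different route from the paper's. The paper builds an abstract cubical complex $K'(\lat)$ whose maximal cubes $\Cube(X)$ are indexed by the maximal antichains $X$ of $P$, specifies an adjacency rule for identifying points on shared faces, and then exhibits a bijection $i\colon K'(\lat)\to K(\lat)$ by gluing the inclusion maps $i_X$; the verification that $K'(\lat)$ satisfies the two axioms and that $i$ is a bijection is left implicit. You instead work intrinsically inside $[0,1]^P$: you parametrize every cell as a set $C_{I,J}$ indexed by a pair of ideals $I \subseteq J$ with $J \setminus I$ an antichain, and check both axioms by coordinate bookkeeping, the key computation being $C_{I_1,J_1} \cap C_{I_2,J_2} = C_{I_1 \cup I_2,\, J_1 \cap J_2}$ when $I_1 \cup I_2 \subseteq J_1 \cap J_2$ (and $\emptyset$ otherwise). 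The two cell structures agree --- your $\Cube_J$ for $J$ the ideal generated by a maximal antichain $X$ is exactly $i_X(\Cube(X))$ --- but your argument buys an explicit closed form for every cell and every pairwise intersection, turning the intersection axiom into a routine check rather than an assertion, while the paper's construction makes the local product structure and the gluing combinatorics (reused later, e.g.\ for $\Cube(x)$ and for straightness at faces) more visible. The steps you flag as delicate do go through: $J \setminus \maximal{J} \subseteq I$ follows from $J\setminus I$ being an antichain, and the frozen sets $I_2\setminus I_1$ and $J_1\setminus J_2$ are disjoint and, together with $(J_1\cap J_2)\setminus(I_1\cup I_2)$, partition $J_1\setminus I_1$ precisely because $I_1\cup I_2\subseteq J_1\cap J_2$.
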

The proof is essentially given by \cite{chepoi2000graphs, roller1998poc}.
For the sake of completeness, we give a proof here.
We remark that this cubical complex is called a \emph{median complex}~\cite{bandelt2008metric}.
\begin{proof}
We show this lemma by constructing a cubical complex $K'(\lat)$ congruent to $K(\lat)$.
Let $K'(\lat)$ be the cubical complex constructed as follows.
For any maximal antichain $X \subseteq P$, there is a distinct corresponding hypercube $\Cube(X) = [0,1]^X$ in $K'(\lat)$.
We also add the faces of $\Cube(X)$ to $K'(\lat)$.
We identify the points on two hypercubes by the following rules.
Two antichains $X$ and $Y$ are \emph{adjacent} if the following conditions are satisfied:
(1) for any $p \in X \setminus Y$, there exists $p' \in Y \setminus X$ such that $p \prec p'$ or $p' \prec p$.
(1') for any $p \in Y \setminus X$, there exists $p' \in X \setminus Y$ such that $p \prec p'$ or $p' \prec p$.
(2) for any $p \in  X \setminus Y$ and $p' \in Y \setminus X$ with $p \prec p'$, we have $p'' \in X$ for any $p'' \prec p'$.
(2') for any $p \in  Y \setminus X$ and $p' \in X \setminus Y$ with $p \prec p'$, we have $p'' \in Y$ for any $p'' \prec p'$.
We identify $x \in \Cube(X)$ and $y \in  \Cube(Y)$ for adjacent $X$ and $Y$ if 
(1) $x_p = y_p$ for any $p \in X \land Y$,
(2) $x_p = 1$ and $y_{p'} = 0$ for any $p \prec p'$, $p \in X$, and $p' \in Y$,
and (3) $x_p = 0$ and $y_{p'} = 1$ for any $p \succ p'$, $p \in X$, and $p' \in Y$.
We can easily see that $K'(\lat)$ is a cubical complex. 

We prove that $K(\lat)$ is congruent to $K'(\lat)$.
We construct an  bijection $i: K'(\lat) \rightarrow K(\lat)$ by gluing inclusive maps $i_x  \colon \Cube(X) \to K(\lat)$ of all maximal antichains $X$.
we define inclusive map $i_X \colon \Cube(X) \to K(\lat)$ by
(1) $i_X(x_p) = x_p$ if $p \in X$;
(2) $i_X(x_p) = 1$ if there exists $p' \in X$ such that $p < p'$;
and (3) $i_X(x_p) = 0$ otherwise. 
This map is well-defined since $X$ is an antichain.
If $x \in \Cube(X) \cap \Cube(Y)$, then $i_X(x) = i_Y(x)$.
Therefore, we can glue $i_X$ for all maximal antichains $X \subseteq P$ and obtain a global map $i: K'(\lat) \rightarrow K(\lat)$.
We can easily see that this map is a bijection.
\end{proof}
In the rest of the paper, we use the notation $\Cube(X)$ in the above proof.

\subsection{Uniform Linear Motion}
\label{subsec:ulm}

On the submodular set function maximization, the following property plays a crucial role: for any $x, y \in [0,1]^V$ with $x \le y$, $F((1-t) x + t y)$ is concave in $t \in [0, 1]$.
To extend this property to distributive lattice, we need to generalize line connecting two points for $K(\lat)$.
Here, we introduce such a concept, uniform linear motion.

The fundamental difference between $K(\lat)$ and Euclidean space is the existence of the face between two hypercubes.
We need to define ``straightness'' at such faces.
Consider a curve $\curve \colon [0, 1] \to K(\lat)$ that passes a face at time $t$.
The velocity vectors $v^-$ and $v^+$ immediately before and after $x$ are defined by
\begin{align}
    \label{eq:velocity-at-face}
    v^- &= \frac{\dd^+ \curve(t)}{\dd t}, \\ 
    v^+ &= \frac{\dd^- \curve(t)}{\dd t}.
\end{align}
We say that $\curve$ is \emph{straight at $\curve(t)$} if there is a flow on a network $\mathcal{N}$ defined below.
The nodes of $\mathcal{N}$ is the disjoint union of the following two sets $P_+$ and $P_-$.
Let $P_+$ be the set of $i \in P$ with $(v_+)_i > 0$ or $(v_-)_i > 0$ and $P_-$ be the set of $i \in P$ with $(v_+)_i < 0$ or $(v_-)_i < 0$.
The edges of $\mathcal{N}$ consists of $p \rightarrow p'$ for $p \le p'$ with $p,p' \in P_+$ or $p \leftarrow p'$ for $p \le p'$ with $p,p' \in P_-$.
The capacities of the edges are $[0, \infty)$.
Nodes in the network $\mathcal{N}$ become sources or sinks by the following rules:
Node $p \in P_+$ with $(v^+)_p > 0$ is a sink  the amount of whose incoming flow is $v^+_p$;
Node $p \in P_-$ with $(v^+)_p < 0$ is a source  the amount of whose outgoing flow is $-v^+_p$;
Node $p \in P_+$ with $(v^-)_p > 0$ is a source  the amount of whose outgoing flow is $v^-_p$;
Node $p \in P_-$ with $(v^-)_p < 0$ is a sink the amount of whose incoming flow is $-v^-_p$.

A \emph{uniform linear motion} from $x \in K(\lat)$ to $y \in K(\lat)$ is a curve $\curve \colon [0,1] \to K(\lat)$ such that $\curve(0) = x$, $\curve(1) = y$,the curve $\curve$ is a line segment in each hypercube, and $\curve$ is straight at any point on a face.
The existence of a uniform linear motion between two points is non-trivial.

\begin{Example}
\label{ex:ulm}
Consider a poset $P = \{p_1, p_2, p_3, p_4\}$ with $p_2 \prec p_3$ and $p_2 \prec p_4$.
Figure~\ref{fig:ulm} shows the Hasse diagram of $P$ and the corresponding median complex $K(\lat)$.
The uniform linear motion $\ulm{\bot}{\top}(t)$ over $K(\lat)$ path through two maximal cubes $\Cube(\{p_1,p_2\})$ and $\Cube(\{p_1, p_3,p_4\})$.
The uniform linear motion intersects with face $\{x \in K(\lat)) \mid  x_{p_2} = 1\}$ at $z_1 = (1/3,1,0,0)$.
The velocity immediately before and after $z_1$ is $v^- = (1/4,3/4,0,0)$ and $v^+ = (1/4,0,3/8, 3/8)$, respectively.
The uniform linear motion is indeed straight at $z_1$ because we have a flow over network $\mathcal{N}$ as follows: $f_{p_1 \to p_1} = 1/4$, $f_{p_2 \to p_3} = 3/8$, and $f_{p_2 \to p_4} = 3/8$.
On the other hand, the geodesic from $\bot$ to $\top$ with respect to $l_2$-metric intersects with the face at $z_2 =  (\sqrt{2} -1,1,0,0)$.
Therefore, our uniform linear motion is a different concept from $l_2$-geodesic.
\end{Example}

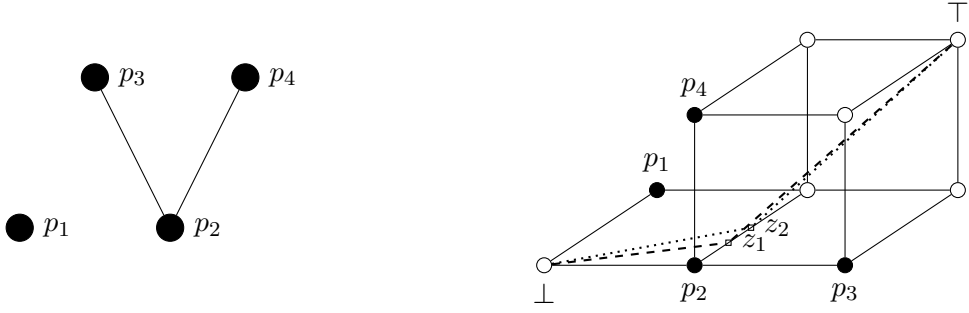
\begin{figure}
\begin{minipage}{0.45 \hsize}
\centering
\begin{tikzpicture}
\node[circle, draw, fill, label=east:{$p_1$}] at (0,0) (1) {};
\node[circle, draw, fill, label=east:{$p_2$}] at (2,0) (2) {};
\node[circle, draw, fill, label=east:{$p_3$}] at (1,2) (3) {};
\node[circle, draw, fill, label=east:{$p_4$}] at (3,2) (4) {};
\draw[-] (2)--(3);
\draw[-] (2)--(4);
\end{tikzpicture}
\end{minipage}
\begin{minipage}{0.45 \hsize}
\centering
\begin{tikzpicture}
\node[inner sep=2pt, circle, draw, label=south:{$\bot$}] at (0,0) (bot) {};
\node[inner sep=2pt, circle, draw, fill, label=south:{$p_2$}] at (2,0) (2) {};
\node[inner sep=2pt, circle, draw, fill, label=south:{$p_3$}] at (4,0) (3) {};
\node[inner sep=2pt, circle, draw, fill, label=north:{$p_1$}] at (1.5,1) (4) {};
\node[inner sep=2pt, circle, draw] at (3.5,1) (5) {};
\node[inner sep=2pt, circle, draw] at (5.5,1) (6) {};
\node[inner sep=2pt, circle, draw, fill, label=north:{$p_4$}] at (2,2) (7) {};
\node[inner sep=2pt, circle, draw] at (4,2) (8) {};
\node[inner sep=2pt, circle, draw] at (3.5,3) (9) {};
\node[inner sep=2pt, circle, draw, label=north:{$\top$}] at (5.5,3) (top) {};
\draw[-] (bot)--(2);
\draw[-] (2)--(3);
\draw[-] (bot)--(4);
\draw[-] (2)--(5);
\draw[-] (3)--(6);
\draw[-] (4)--(5);
\draw[-] (5)--(6);
\draw[-] (2)--(7);
\draw[-] (3)--(8);
\draw[-] (5)--(9);
\draw[-] (6)--(top);
\draw[-] (7)--(8);
\draw[-] (7)--(9);
\draw[-] (9)--(top);
\draw[-] (8)--(top);
\node[inner sep=1pt, draw, label=east:{$z_1$}] at (2.45, 0.3) (x1) {};
\node[inner sep=1pt, draw, label=east:{$z_2$}] at (2.75, 0.5) (x2) {};
\draw[-,thick,dashed] (bot)--(x1)--(top);
\draw[-,thick,dotted] (bot)--(x2)--(top);
\end{tikzpicture}
\end{minipage}
\caption{The median complex and uniform linear motion from $\bot$ to $\top$ in the example. The left figure shows the Hasse diagram of the join-irreducible elements. The right figure shows the median complex. In the right figure, join-irreducible elements are shows by black dots and the others by white ones. The dashed line indicates the uniform linear motion, where the $p_1$ coordinate of $z_1$ is $1/3$. The dotted line indicates the $l_2$-geodesic, where the $p_1$ coordinate of $z_2$ is $\sqrt{2} - 1$.}
\label{fig:ulm}
\end{figure}

The following theorem shows that for any two points in $K(\lat)$ there exists a uniform linear motion.
\begin{theorem}
\label{thm:unique-existence-uniform-linear-motion}
For any two points $x, y \in K(\lat)$, there is a unique uniform linear motion from $x$ to $y$.
\end{theorem}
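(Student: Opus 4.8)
The plan is to prove uniqueness and existence separately, in both cases reducing the problem to a statement about flows on the Hasse diagram of $P$. The key idea is that a uniform linear motion is globally determined by local data: inside each cube it is an affine segment, and when it crosses a face it must obey the flow conservation law defining straightness, so the whole curve is a concatenation of finitely many segments whose directions are pinned down one after another.

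For \textbf{uniqueness}, suppose $\curve$ and $\curve'$ are two uniform linear motions from $x$ to $y$. Consider the first time $t^*$ at which they differ; by choosing coordinates on the cube containing $\curve(t^*)=\curve'(t^*)$, the two curves leave this common point along two affine segments with velocities $v$ and $v'$. I would show that the straightness condition, together with the requirement that the curve stay in $K(\lat)$ (i.e.\ that the support remains an ideal, Lemma~\ref{lem:support-forms-an-ideal}), forces $v$ and $v'$ to point toward $y$ in a way that determines them uniquely up to positive scaling — and since both segments are traversed at unit-time parametrization over the rest of $[0,1]$, the scaling is fixed too. Concretely, the direction in which the motion continues is obtained by pushing the ``demand'' $y-\curve(t^*)$ through the flow network $\mathcal N$: the sink/source structure of $\mathcal N$ is exactly what matches the incoming velocity $v^-$ to an admissible outgoing velocity $v^+$, and because the network is built on a poset (a DAG), conservation of the total flow value plus acyclicity leaves only one consistent choice of the next segment's velocity vector. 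Iterating over the finitely many crossing times gives $\curve = \curve'$.

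For \textbf{existence}, I would construct the motion by the natural ``straight-line-until-you-hit-a-face, then redirect'' algorithm. Start at $x$; inside the current cube move affinely toward the appropriate target, where the target direction is obtained by solving a max-flow / feasible-flow problem on $\mathcal N$ that sends the remaining discrepancy between the current point and $y$ along the comparability edges of $P$; continue until a new face is reached; at that face recompute the outgoing velocity by the same flow computation (this is where one checks the flow conservation law is satisfiable, using order-consistency of ideals and the DAG structure so that a feasible flow always exists); repeat. One must check (a) that each feasible-flow subproblem has a solution — this is a Hall-type / max-flow-min-cut argument on the poset, exploiting that $\mathrm{supp}$ of every point on the curve is an ideal so there is always ``room below'' to route flow; (b) that the process terminates after polynomially (in particular finitely) many face crossings — bounded because each crossing strictly changes which maximal antichain's cube we are in, and the sequence of cubes cannot repeat since each coordinate moves monotonically toward its target value; and (c) that the resulting concatenated curve reaches $y$ at time $1$, which follows from normalizing speeds so the total parameter length is $1$ and from the fact that the flow always decreases the $\ell_1$ distance to $y$.

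The main obstacle I expect is step (a) in the existence argument — showing that at every face a velocity vector satisfying the flow conservation law and keeping the curve inside $K(\lat)$ actually exists. This is not a routine calculation: it requires understanding precisely how the ideal/antichain combinatorics of adjacent cubes interact with the source/sink assignment in $\mathcal N$, and proving a max-flow-min-cut statement says no ``bottleneck'' antichain in $P$ can block the redirection. A secondary subtlety is handling the case where $\curve$ passes through a face of codimension greater than one (a point lying on several cubes simultaneously), where the network $\mathcal N$ must be set up so that all the incident cubes' constraints are simultaneously satisfiable; I would handle this by a limiting/perturbation argument or by treating the high-codimension face as an intersection of the codimension-one conditions. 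Uniqueness, by contrast, should be comparatively clean once the flow-network picture is in place, since a DAG flow with prescribed total value and prescribed boundary behavior has no freedom left.
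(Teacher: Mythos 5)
Your overall strategy---reduce everything to flows on the Hasse diagram and determine the curve segment by segment---is in the right spirit, but it has a genuine gap: \emph{the straightness condition at a single face does not determine the outgoing velocity}, so neither the ``first time they differ'' uniqueness argument nor the greedy ``redirect at each face'' construction goes through as described. Concretely, take the paper's own Example (the poset with $p_2\prec p_3$ and $p_2\prec p_4$) and the crossing point $z_1$ with incoming velocity $v^-=(1/4,3/4,0,0)$. Flow conservation at $z_1$ only requires $v^+_{p_3}+v^+_{p_4}=3/4$; the split $(3/8,3/8)$ is not forced locally---an uneven split such as $(1/2,1/4)$ also admits a feasible flow on $\mathcal{N}$ at $z_1$, and is ruled out only \emph{later}, when $p_3$ reaches $1$ before $p_4$ and the curve hits a face at which no feasible flow exists (a maximal element becomes a source with nowhere to send its flow). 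So acyclicity of the network does not leave ``only one consistent choice'' at each crossing: a DAG flow with prescribed supplies and demands is generally far from unique, and the induced velocity profile depends on choices whose validity can only be certified globally. The same issue undermines your existence construction (which feasible flow do you pick at each face, and why does that choice not lead to a dead end?) and your final normalization step (the speeds in different segments are coupled by conservation, so you cannot rescale them independently to land at $y$ at time $1$).

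The paper resolves exactly this by replacing the local propagation with a global variational principle: the speeds $v_p$ of the motion are characterized as the unique maximizer of the strictly concave objective $\sum_p |w_p|\log v_p$ over the polytope of $\bigl(\sum_p|w_p|\bigr)$-flows on the Hasse-diagram network from $\bot$ to $\top$ (with $w=y-x$ and edges reversed where $w_p<0$). The first-order optimality condition is that the ``travel time'' $\sum_{p\in\pi}|w_p|/v_p$ is equal along all positive paths $\pi$, and this is shown to be equivalent to straightness at \emph{all} faces simultaneously (Lemmas~\ref{lem:same-derivative-of-potential-at-optimal}--\ref{lem:variational-to-uniform-linear}); existence and uniqueness then both follow from the strict convexity of the program. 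If you want to rescue your local approach, you would need to supply precisely such a global selection rule for the flow split at each face---at which point you have essentially rediscovered the variational characterization. Your point (a), the Hall-type feasibility of a flow at each face, is real but secondary; the primary missing idea is the tie-breaking among the many feasible flows.
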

Owing to the theorem, we denote by $\ulm{x}{y}(t)$ the uniform linear motion from $x$ to $y$.
In the rest of the subsection, we prove the theorem.

\paragraph{Overview of the proof}
To prove Theorem~\ref{thm:unique-existence-uniform-linear-motion}, we consider the following maximization problem.
Let $\mathcal{N}'$ be a network whose nodes are $P$ and edges consists of $p \rightarrow p'$ for $p \prec p'$.
Let $w = y - x$.
Remove all nodes with $w_p = 0$ from $\mathcal{N}'$.
If  $w_p < 0$, then we reverse the edge connecting to $p$.
This procedure is consistent by Lemma~\ref{lem:support-forms-an-ideal}.
We add bottom element $\bot$ and top element $\top$ to the following edges to network $\mathcal{N}'$:
$\bot \rightarrow p$ for all minimal $\{p \in P \mid w_p > 0\}$ and all maximal $\{p \in P \mid w_p < 0\}$; and
$p \rightarrow \top$ for all maximal $\{p \in P \mid w_p > 0\}$ and all minimal $\{ p \in P \mid w_p < 0\}$.
All edges in the network have infinite capacity $[0,\infty)$.
In the following, a path means a directed path.
The maximization problem we consider is
\begin{align}
\label{eq:variational-max}
\begin{array}{ll}
    \underset{v}{\text{maximize}} & \sum_{p \in P} |w_p| \log(v_p)\\
    \text{subject to} 
    & v \text{ is a $\sum_{p\in P} |w_p|$-flow from $\bot$ to $\top$}.
\end{array}
\end{align}
Here, the constraint means that $v$ is decomposed as
$v = \sum_\pi f_\pi$ by non-negative $f_\pi$, where $\pi$ runs over all $\bot$ to $\top$ paths and $\sum_\pi f_\pi = \sum_{p \in P} |w_p|$.
We call $\pi$ with $f_\pi > 0$ as \emph{positive} path with respect to $v$ from $\bot$ to $\top$.
A path $\pi'$ from $p \in P$ to $p' \in P$ is said to be positive if there is a positive path $\pi$ from $\bot$ to $\top$ including $\pi'$.
For a path $\pi$, we define $\sum_{p \in \pi}$ as the summation over all node in $\pi$ without the endpoints.

Before we proceed, we explain the interpretation of the maximization problem (\ref{eq:variational-max}) and an overview of the proof.
Given unifrom linear motion $\curve$ from $x$ to $y$, we can define $v_p$ by is the $p$-th component of the speed of $\curve$ when $\curve$ pass throught  $\Cube(X)$ with $p \in X$.
By the definition of uniform linear motion, 
$v = (v_p)_{p \in P}$ is a $\sum_{p \in P} |w_p|$-flow from $\bot$ to $\top$.
Furthermore, let  $t_p$ be the first time $c(t)_p$ pass through the hypercube $\Cube(X)$ with $p \in X$.
Then, we can see that
\begin{align}
    \label{eq:first_move_time}
    t_p = \sum_{p' \in \pi} \frac{|w_{p'}|}{v_{p'}},
\end{align}
for all positive $\bot$ to $p$ path $\pi$.
In particular, the right-hand side does not depend on the choice of $\pi$. (Lemma~\ref{lem:uniform-linear-to-variational}).
Since this property is the optimality condition for the maximization problem~(\ref{eq:variational-max}), $v$ defined above is an optimal solution  (Lemma~\ref{lem:same-derivative-of-potential-at-optimal}, Lemma~\ref{lem:uniform-linear-to-variational}).
Conversely, we can construct a uniform linear motion from the optimal solution of~(\ref{eq:variational-max}) (Lemma~\ref{lem:variational-to-uniform-linear}), which is unique due to the convexity.

\begin{lemma}
\label{lem:same-derivative-of-potential-at-optimal}
Let $v$ satisfies the constraint of the maximization problem~(\ref{eq:variational-max}.
Then, $v$ is an optimal solution of the problem if and only if
\begin{align}
    \label{eq:deriv-potential}
    \sum_{p \in \pi} \frac{|w_p|}{v_p}
\end{align}
has the same value for all positive path $\pi$.
\end{lemma}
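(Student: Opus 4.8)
The plan is to recognize problem~\eqref{eq:variational-max} as a concave maximization over the polytope of $\big(\sum_p |w_p|\big)$-flows from $\bot$ to $\top$ and to derive the stated condition as its first-order optimality (KKT) condition. First I would observe that the feasible region is a polytope (a scaled flow polytope with the extra normalization $\sum_\pi f_\pi = \sum_p |w_p|$), that the objective $v \mapsto \sum_{p\in P} |w_p|\log v_p$ is strictly concave in each coordinate $v_p$ with $|w_p|>0$ (and the coordinates with $|w_p|=0$ have been removed, so every remaining $p$ has $|w_p|>0$), and that an optimum exists in the interior (no $v_p$ can be $0$ at optimum since the objective tends to $-\infty$; and the flow can be spread so that every $v_p>0$ because every remaining node lies on some $\bot$–$\top$ path). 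Hence the problem is a convex program, so a feasible $v$ is optimal if and only if there is no feasible ascent direction.

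The key step is to describe the feasible directions. Since the feasible set is the set of flows of a fixed total value, a direction $\delta$ is feasible at an interior point $v$ precisely when $\delta$ is a circulation in $\mathcal{N}'$ (conservation at every internal node, and zero net change at $\bot$ and $\top$ because the total value is fixed), and circulations decompose into directed cycles; but $\mathcal{N}'$ augmented with $\bot,\top$ and the source/sink edges is acyclic except for cycles formed by a $\bot$–$\top$ path composed with the reverse of another $\bot$–$\top$ path. (Concretely: adding a backward arc $\top\to\bot$ makes every directed cycle pass through that arc, so cycles correspond to $\bot$–$\top$ paths.) Therefore the extreme feasible directions at $v$ are, for each ordered pair of positive paths $\pi_1,\pi_2$, the direction that pushes a unit of flow along $\pi_1$ and removes a unit along $\pi_2$. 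The directional derivative of the objective along such a direction is $\sum_{p\in\pi_1}\frac{|w_p|}{v_p} - \sum_{p\in\pi_2}\frac{|w_p|}{v_p}$, using $\frac{\dd}{\dd v_p}\big(|w_p|\log v_p\big) = \frac{|w_p|}{v_p}$. Optimality is equivalent to all these directional derivatives being $\le 0$, i.e. $\sum_{p\in\pi}\frac{|w_p|}{v_p}$ is the same for every positive path $\pi$; conversely if two positive paths give different values, pushing flow from the larger to the smaller (possible while staying feasible since we may only decrease flow on a path that carries positive flow) strictly increases the objective, contradicting optimality.

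There is one subtlety I would need to handle carefully: a feasible ascent direction of the above ``one path up, one path down'' form must keep $v$ feasible for a positive step, which requires that the path being decreased currently carries strictly positive flow — this is exactly why the condition is phrased in terms of \emph{positive} paths rather than all paths. A path that carries zero flow can only be increased, never decreased, and increasing it alone is not a feasible direction (it violates the total-value constraint unless compensated elsewhere); so such paths impose no constraint, which matches the statement. I would also note that ``the same value for all positive paths'' should be read as: for every node $p$ and every two positive $\bot$–$p$ prefixes the partial sums agree, which is the natural consistency needed to later define the hitting times $t_p$ in~\eqref{eq:first_move_time}; this follows from the $\bot$–$\top$ version by a splicing argument (if two $\bot$–$p$ positive paths had different partial sums, extend both by a common positive $p$–$\top$ suffix to get two full positive paths with different totals).

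The main obstacle is the clean justification that the only relevant feasible directions are the ``swap one positive path for another'' directions — i.e. the cycle-decomposition argument for circulations in the augmented network together with the acyclicity observation. Once that structural fact is in place, the rest is a routine computation of a directional derivative of a separable concave function and the standard equivalence between ``no improving feasible direction'' and ``global optimum'' for a convex program. I would therefore spend most of the write-up making the decomposition of feasible directions precise and treat the derivative computation and the convexity conclusion briefly.
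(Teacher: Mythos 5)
Your proposal is correct and follows essentially the same route as the paper: the paper also treats \eqref{eq:variational-max} as a strictly convex program in the path-flow variables $f_\pi$, identifies \eqref{eq:deriv-potential} as the derivative of the objective with respect to $f_\pi$, and proves the ``only if'' direction by the same exchange argument (shift flow between two positive paths with unequal values). The only difference is one of detail: you justify the reduction of feasible directions to path-swaps via the circulation/cycle decomposition, which the paper's proof asserts implicitly.
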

\begin{proof}
Notice that the problem (\ref{eq:variational-max}) is a strictly convex optimization problem; hence, the optimal solution is unique.
The value (\ref{eq:deriv-potential}) is
the derivative of the objective function of (\ref{eq:variational-max}) with respect to $f_\pi$.
Thus, the optimality condition is that (\ref{eq:deriv-potential}) has the same value for all flow $\pi$ with $v_p > 0$ for all $p \in \pi$.
Otherwise, by increasing a flow along a positive path with the largest value and decreasing a flow along a positive path with the smallest value, we can increase the objective value.
\end{proof}

\begin{lemma}
\label{lem:uniform-linear-to-variational}
Suppose that $\curve$ is a uniform linear motion from $x$ to $y$.
Let 
\begin{align}
    v_p = \max_t \left|\frac{\dd^+ \ulm{x}{y}(t)_p}{\dd t}\right|,
\end{align}
for all $p \in P$.
Then, $v = (v_p)_{p \in P}$ is the optimal solution to the problem~(\ref{eq:variational-max}).
\end{lemma}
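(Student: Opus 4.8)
The plan is to realize, from the given uniform linear motion $\curve$, exactly the flow $v$ sketched in the overview, verify that $v$ is feasible for~\eqref{eq:variational-max}, establish the potential identity~\eqref{eq:first_move_time}, and then invoke Lemma~\ref{lem:same-derivative-of-potential-at-optimal}. First I would record the local structure of $\curve$: inside a maximal cube $\Cube(X)$ it is a line segment, and the inclusion map pins every coordinate $p \notin X$ to a constant ($1$ if some element of $X$ dominates $p$, and $0$ otherwise), so $\dd\curve(t)_p/\dd t = 0$ whenever $\curve(t)$ lies in the interior of such a cube. Using straightness at faces I would show that each coordinate $t \mapsto \curve(t)_p$ is monotone: if the velocity in coordinate $p$ changed sign at a face, then $p$ would lie in both $P_+$ and $P_-$ of the local network $\mathcal{N}$ there, contradicting that $P_+$ and $P_-$ are disjoint. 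Monotonicity forces $|\dd^+\curve(t)_p/\dd t|$ to take a single value on the closed set of times at which coordinate $p$ moves, so $v_p = \max_t |\dd^+\curve(t)_p/\dd t|$ is well defined and nonnegative, and $v_p > 0$ exactly when $w_p \neq 0$, matching the node set of $\mathcal{N}'$.

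Next I would verify that $v$ is a $\sum_{p\in P}|w_p|$-flow from $\bot$ to $\top$ on $\mathcal{N}'$. For conservation at a node $p$, straightness at the faces where coordinate $p$ starts and stops moving provides local $\mathcal{N}$-flows; summing these along the whole trajectory, together with the behaviour at $t=0$ and $t=1$, yields a global nonnegative flow in which the edge $p \to p'$ carries the amount of motion of coordinate $p$ that is handed off to coordinate $p'$ at their common face, and the source/sink demands at $\bot$ and $\top$ are exactly those imposed on the minimal and maximal elements of $\{p : w_p \neq 0\}$. The flow value equals $\sum_{p\in P}|w_p|$; this I would deduce from~\eqref{eq:first_move_time}, since for any path decomposition $v = \sum_\pi f_\pi$ one has $\sum_\pi f_\pi = \sum_\pi f_\pi \big(\sum_{p\in\pi}|w_p|/v_p\big) = \sum_p \frac{|w_p|}{v_p}\sum_{\pi\ni p} f_\pi = \sum_{p\in P}|w_p|$, using that $\sum_{p\in\pi}|w_p|/v_p$ is the common total duration along any positive $\bot$–$\top$ path and that $\sum_{\pi\ni p}f_\pi = v_p$.

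For the potential identity, let $t_p$ be the first time coordinate $p$ moves, i.e., the first time $\curve$ enters a cube containing $p$. I would prove~\eqref{eq:first_move_time}: for every positive path $\pi = (\bot, q_1 \prec \cdots \prec q_k = p)$ (with the reversals appropriate to negative coordinates), $t_p = \sum_{q_i\in\pi}|w_{q_i}|/v_{q_i}$, the sum taken over the interior nodes $q_1, \dots, q_{k-1}$. The bound $t_p \ge \sum_{q_i\in\pi}|w_{q_i}|/v_{q_i}$ is structural and holds for every such path: by Lemma~\ref{lem:support-forms-an-ideal}, coordinate $p$ cannot leave its pinned value until every $q_i \prec p$ on $\pi$ has reached its target, and covering the distance $|w_{q_i}|$ of coordinate $q_i$ at speed at most $v_{q_i}$ and starting no earlier than $t_{q_i}$ takes at least $|w_{q_i}|/v_{q_i}$; inducting along $\pi$ gives the inequality. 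Equality holds when $\pi$ is \emph{positive}: carrying flow along $\pi$ means $\curve$ executes the motions in coordinates $q_1, \dots, q_{k-1}$ one immediately after another, with no intermediate cube in which the relevant coordinate is stationary, which is precisely what straightness at every face forces along a positive path. Consequently $\sum_{p'\in\pi}|w_{p'}|/v_{p'}$ depends only on the endpoint of $\pi$, hence is the same for all positive $\bot$–$\top$ paths, so by Lemma~\ref{lem:same-derivative-of-potential-at-optimal} the feasible $v$ is the unique optimal solution of~\eqref{eq:variational-max}.

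I expect the equality half of~\eqref{eq:first_move_time} to be the main obstacle: making ``straightness $\Rightarrow$ no idling along a positive path'' precise requires tracking which cube $\curve$ occupies at each instant and arguing that whenever the flow decomposition routes positive flow along an edge $p \to p'$, the face at which coordinate $p'$ begins to move is crossed exactly when coordinate $p$ finishes, and doing this uniformly over all positive paths so that $t_p$ is well defined. The mixed increasing and decreasing coordinates (the reversed edges of $\mathcal{N}'$) add bookkeeping but no new idea; the remaining verifications are routine.
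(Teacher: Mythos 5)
Your proposal is correct and follows essentially the same route as the paper: verify from straightness that $v$ is a feasible $\sum_p |w_p|$-flow, establish that the arrival times $t_p$ satisfy the potential identity~\eqref{eq:first_move_time} along every positive path, and conclude via the optimality condition of Lemma~\ref{lem:same-derivative-of-potential-at-optimal}. The only cosmetic difference is that you prove the path-independence of $\sum_{p'\in\pi}|w_{p'}|/v_{p'}$ directly through the ``hand-off with no idling'' argument, whereas the paper argues by contradiction that two positive paths with different values would force the curve to exit $K(\lat)$ through a face --- these are the same observation, and your honest flagging of the equality half as the delicate step matches where the paper's own proof is also terse.
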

\begin{proof}
By the definition of the uniform linear motion, $v_p$ is the unique speed of the $p$-th component when $\ulm{x}{y}(t)$ is on a hypercube $\Cube(X)$.
Hence, $v$ satisfies $\sum_p |w_p|$-flow condition because of the straightness of the uniform linear motion.
It suffices to show that $v$ satisfies the optimal condition shown in Lemma~\ref{lem:same-derivative-of-potential-at-optimal}.
The optimality condition follows from the definition of the uniform linear motion.
Suppose to the contrary that the value (\ref{eq:deriv-potential}) is different for two positive paths $\pi_1$ and $\pi_2$.
Let $\bot = p_0 , p_1 , \dots , p_k = \top$ be the points in $\pi_1 \cap \pi_2$, where $p_i$ is ordered so that they form a subsequence of $\pi_1$.
For $\alpha = 1,2$, let $\pi_\alpha^{(j)}$ be the initial segment of $\pi_\alpha$ until $p_j$.
Take the minimum $j$ such that
\begin{align}
    \sum_{p \in \pi_1^{(j)}} \frac{|w_p|}{v_p} \neq \sum_{p \in \pi_2^{(j)}} \frac{|w_p|}{v_p}.
\end{align}
Such $j$ must exists by the assumption.
We can easily see that, for $\alpha = 1,2$, the value $t_\alpha := \sum_{p \in \pi_\alpha^{(j)}}\frac{|w_p|}{v_p}$ is the time when $\ulm{x}{y}(\alpha)_{p_{\mathrm{end}}} = y_{p_{\mathrm{end}}}$, where $p^\mathrm{end}_\alpha$ is the node $p^{\mathrm{end}}_\alpha \prec p_j$ in path $\pi_\alpha$.
Therefore, at time $\min (t_1,t_2)$, the curve $\ulm{x}{y}$ is at the surface of $K(\lat)$ and is going outside of $K(\lat)$ (in the sense of $[0,1]^P$) due to the straightness, which is a contradiction.
\end{proof}


\begin{lemma}
\label{lem:variational-to-uniform-linear}
Let $v$ be the optimal solution of (\ref{eq:variational-max}).
Then, we can construct a uniform linear motion $\ulm{x}{y}$ from $v$.
\end{lemma}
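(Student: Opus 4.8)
The plan is to read a ``starting time'' $t_p$ for each coordinate off of the optimal flow $v$, write down the curve coordinate by coordinate, and then verify the three defining properties of a uniform linear motion. Throughout I will use two facts about the optimum: first, since $|w_p|\log v_p=-\infty$ when $v_p=0$, every node of the network carries flow, so $v_p>0$ for all $p$ with $w_p\ne 0$; second, refining Lemma~\ref{lem:same-derivative-of-potential-at-optimal}, the quantity $\sum_{p\in\pi}|w_p|/v_p$ (hereafter the \emph{weight} of a path $\pi$, the sum being over interior nodes) equals a common number $\tau$ for every $\bot$--$\top$ path that carries positive flow, and is at most $\tau$ for \emph{every} $\bot$--$\top$ path. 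The inequality comes from a direct perturbation: rerouting an $\varepsilon$ of flow from a positive path onto a path of strictly larger weight would strictly increase the objective.

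For $p$ with $w_p\ne 0$ I set $t_p=\sum_{p'\in\pi}|w_{p'}|/v_{p'}$ for a positive $\bot$--$p$ path $\pi$ and $t_p^{\mathrm{end}}=t_p+|w_p|/v_p$. The first step is that $t_p$ does not depend on $\pi$. Given two positive $\bot$--$p$ paths, extend each to a positive $\bot$--$\top$ path and, between any two consecutive common nodes, reroute a common amount of flow so as to splice the two internally disjoint pieces into each path; this leaves $v$ (hence optimality) unchanged while making the spliced paths positive, so all the paths then have weight $\tau$, and subtracting forces the two pieces between each pair of common nodes to have equal weight. Telescoping up to $p$ gives well-definedness; the same argument shows that any positive tail from $p$ to $\top$, counted including $p$, has weight exactly $\tau-t_p$.

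Now define the curve coordinatewise, up to an overall rescaling of the time axis by $1/\tau$ that I suppress: $\curve(t)_p\equiv x_p$ if $w_p=0$; otherwise $\curve(t)_p$ equals $x_p$ on $[0,t_p]$, moves with velocity $\mathrm{sign}(w_p)\,v_p$ on $[t_p,t_p^{\mathrm{end}}]$, and equals $y_p$ thereafter. Two preliminary observations, each a short check against Lemma~\ref{lem:support-forms-an-ideal}, organize the verification: (i) a cover $p'\prec p$ with $w_{p'}\ne 0\ne w_p$ must have $\mathrm{sign}(w_{p'})=\mathrm{sign}(w_p)$, since otherwise $x$ or $y$ fails to lie in $K(\lat)$ at $p'$, so the network splits into a ``$+$'' part and a ``$-$'' part; and (ii) a coordinate $p$ with $w_p\ne 0$ whose initial and final values are not both extreme has all its strict predecessors fixed at $1$ (resp.\ successors fixed at $0$), hence starts at time $0$ anyway.

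The crux is to show $\curve(t)\in K(\lat)$ for all $t$, i.e.\ that $\curve(t)_p>0$ forces $\curve(t)_{p'}=1$ for all $p'<p$; by (i) and (ii) the only case needing work is $w_p>0$, $x_p=0$, $p'\prec p$, $w_{p'}>0$, where one must show coordinate $p$ does not start before coordinate $p'$ finishes, i.e.\ $t_p\ge t_{p'}^{\mathrm{end}}$. This is where I expect the main obstacle, and it is exactly where both halves of the optimality condition enter: concatenate a positive $\bot$--$p'$ path of weight $t_{p'}$, the edge $p'\to p$, and a positive $p$--$\top$ tail (which exists because $v_p>0$); this $\bot$--$\top$ path has weight $\le\tau$, and since the tail contributes exactly $\tau-t_p$, this reads $t_{p'}^{\mathrm{end}}+(\tau-t_p)\le\tau$, i.e.\ $t_{p'}^{\mathrm{end}}\le t_p$. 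A symmetric argument (successors in place of predecessors) covers $w_p<0$. Granting $\curve(t)\in K(\lat)$, the rest is routine: between consecutive breakpoints the coordinates strictly between $0$ and $1$ form an antichain, so the curve lies in a single cube $\Cube(X)$ and is a line segment there; and at a breakpoint the velocities $v^-,v^+$ are supported on coordinates with speeds $\pm v_p$, so the network flow certifying straightness in the sense of Section~\ref{subsec:ulm} is simply $v$ restricted to the nodes incident to that face, with flow conservation of $v$ at each such node providing precisely the prescribed source/sink balance. This exhibits the uniform linear motion $\ulm{x}{y}$.
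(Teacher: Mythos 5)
Your proposal is correct and follows essentially the same route as the paper: read off starting times $t_p$ from the optimal flow, define the curve coordinatewise with speed $\mathrm{sign}(w_p)v_p$ on $[t_p,t_p+|w_p|/v_p]$, and verify membership in $K(\lat)$, per-cube linearity, and straightness via flow conservation of $v$. Your derivation of the key inequality $t_{p'}^{\mathrm{end}}\le t_p$ for covers $p'\prec p$ (via the one-sided optimality condition that \emph{every} $\bot$--$\top$ path has weight at most $\tau$) is in fact more explicit than the paper's corresponding assertion, which states this bound only "by the definition of $t_p$ and $X(t)$."
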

\begin{proof}
Let $v$ be the optimal solution to problem \eqref{eq:variational-max}.
Let $t_p = \max_{\pi} \sum_{p' \in \pi} |w_{p'}| / v_{p'} $, where $\pi$ runs over all positive $\bot$ to $j$ paths.
By Lemma~\ref{lem:same-derivative-of-potential-at-optimal}, the value in the max does not depend on the positive paths.
Let $X(t) = \{ p \in P : t_{p} \le t  \text{ and } t_{p'} > t \text{ for all } p' \succ p \}$.
We reorder $\{t_p\}_{p \in P}$ as $t_1 < t_2 < \dots < t_m$.
The latter is distinguished from the former by the subscripts $i,j,\dots$.
We define a curve $\curve$ in $\mathbb{R}^P$ by the following differential equation:
\begin{align}
    \label{eq:construct-ULM-from-variational}
    \frac{\dd^+ \curve_p(t)}{\dd t} &= \begin{cases} \mathrm{sign}(w_p) v_p & p \in X(t), \\
    0 & \text{otherwise} \ m \end{cases} \\
    \curve(0) &= x.
\end{align}

We prove that $\curve(t) \in K(\lat)$.
We first show that $\curve(t) \in [0,1]^P$.
Since there exists a positive path containing $p$, the optimality condition implies that there exists $p' \succ p$ such that $t_{p'} - t_p = |w_p| / v_p$.
Therefore, $\curve(t)_p$ moves during $|w_p| / v_p$-time with speed $v_p$, which implies that $\curve(t) \in [0,1]^P$.
We next prove that $\curve(t) \in K(\lat)$ $(t < t_k)$ for all $k$ by showing $\curve(t) \in \Cube(X(t))$ $(t < t_k)$.
It suffices to prove that $c(t_k)_{p'} = y_{p'}$ for all $k = 1,2,\dots, m$, node $p \in X(t_k) \setminus X(t_{k-1})$, and $p' \rightarrow  p$ in network $\mathcal{N}'$.
By the definition of $t_p$ and $X(t)$, we can see that $t_p - t_{p'} \ge |w_{p'}| / v_{p'}$ for all $p' \rightarrow  p$.
Therefore, $\curve(t)_{p'}$ moves during at least $|w_{p'} / v_{p'}|$-time with speed $v_{p'}$, which implies that $\curve(t_k)_{p'} = y_{p'}$.
Here, we also used $\curve(t) \in [0,1]^P$.
In conclusion, we have shown that $\curve(t) \in K(\lat)$.

We finally prove that $\curve$ is indeed a uniform linear motion from $x$ to $y$.
The curve $\curve$ is linear on each cell by $c(t) \in \Cube(X(t))$ and the differential equation of $c$.
The curve $\curve$ is straight at face since
$v$ satisfies flow-conservation.
\end{proof}

\begin{proof}[Theorem \ref{thm:unique-existence-uniform-linear-motion}]
By Lemma \ref{lem:uniform-linear-to-variational} and Lemma \ref{lem:variational-to-uniform-linear}, the uniform linear motion corresponds to the solution of the variational problem (\ref{eq:variational-max}), which has a unique solution by strong convexity.
\end{proof}

\begin{remark}
\rm
We can prove the existence of the uniform linear motion via another optimization problem.
Consider the network $\mathcal{N}'$ defined in the maximization problem (\ref{eq:variational-max}).
In the following minimization problem, we optimize the following variables:
$t \colon P \cup \{\top,\bot\} \rightarrow \R_{\ge 0}$ and $v \colon P \cup \{\top,\bot\} \rightarrow \R$.
\begin{align}
\label{eq:shortest-processing-time}
\begin{array}{ll}
    \underset{t, v}{\text{minimize}} & t_\top \\
    \text{subject to} & t_{p'} \ge t_p + \frac{|w_{p'}|}{v_{p'}}, \quad (\forall \; p \rightarrow p') \\
    & t_\bot = 0, \\
    & v \text{ is a $\sum_p w_p$-flow from $\bot$ to $\top$}
\end{array}
\end{align}
The optimal solution of this problem corresponds to the uniform linear motion because the optimality condition for (\ref{eq:shortest-processing-time}) is equivalent to (\ref{eq:variational-max}).

This minimization problem can be interpreted as a variant of the Program Evaluation and Review Technique (PERT).
Consider the nodes of the network $\mathcal{N}'$ as tasks.
A task $p$ with $p \rightarrow p'$ must be done before task $p'$.
We have $\sum_{p \in P} |w_p|$-\emph{work capacity} in total.
Task $p$ can be done in $|w_p| / v_p$ time if $v_p$-work capacity is assigned.
Our task is to minimize the time when all task is done by appropriately distributing the work capacity.
Here, we have a constraint on the distribution of the work capacity:
A work capacity used to do task $p$ must be used to task $p'$ with  $p \rightarrow p'$, in the next.
In the above problem, $t_p$ is the time when the task $p$ is started.
\end{remark}

\subsection{Multilinear Extension}
\label{subsec:multilinear-extension}
Let $f \colon \lat \to \mathbb{R}$ be a DR-submodular function.
The \emph{multilinear extension} $F \colon K(\lat) \rightarrow  \mathbb{R}$ of $f$ is defined by
\begin{align}
    F(x) = \E_{\hat X \sim x}[ f(\hat X) ] = \sum_{X \subseteq P \text{ :ideal}}
    f(X)
    \prod_{p \in X} x_{p} \prod_{p' \in P \setminus X} (1 - x_{p'}),
\end{align}
where the expectation is taken over the random ideal $\hat X$ defined by $\Prob[p \in \hat X] = x_p$ for all $p \in P$ independently and the summation $X$ is taken for all ideals of $P$.
We call $\hat X$ random ideal generated from $x$.
We omit the subscript of the expectation when $x$ is clear from the contex.
The multilinear extension $F(x)$ coincide with the gluing of the multilinear extension of the set submodular function on each cubes. 
We note that $F$ is monotone if $f$ is monotone.

The DR-submodularity implies the inequality over the gradients as follows.
\begin{lemma}
\label{lem:continuous-DR}
Let $x, y \in K(\lat)$ with $x \le y$ and $p,q \in P$ with $p \le q$.
Then, 
\begin{align}
    \label{eq:continous-DR-from-above}
    &\frac{\partial^+ F(x)}{\partial x_p} \ge \frac{\partial^+ F(y)}{\partial y_q},\\
    &\frac{\partial^+ F(x)}{\partial x_p} \ge \frac{\partial^- F(y)}{\partial y_q},\\
    &\frac{\partial^- F(x)}{\partial x_p} \ge \frac{\partial^+ F(y)}{\partial y_q},\\
    &\frac{\partial^- F(x)}{\partial x_p} \ge \frac{\partial^- F(y)}{\partial y_q},
\end{align}
if the above one-sided partial derivative is defined.
\end{lemma}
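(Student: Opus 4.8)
The plan is to reduce all four inequalities to one \emph{master inequality} about a probabilistic formula for the one-sided partial derivatives of $F$, and then to prove that master inequality by a monotone coupling together with a pointwise application of the (lattice) DR-submodular inequality.

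First I would establish the probabilistic formula. Fix $x\in K(\lat)$ and $p\in P$ and suppose $\tfrac{\partial^+ F(x)}{\partial x_p}$ is defined, i.e.\ $x+\epsilon e_p\in K(\lat)$ for all small $\epsilon>0$. By Lemma~\ref{lem:support-forms-an-ideal} this forces $x_{p'}=1$ for every $p'<p$ and $x_s=0$ for every $s>p$; in particular there is a maximal antichain $X\ni p$ with $x\in\Cube(X)$. On $\Cube(X)$ the restriction $F\circ i_X$ is, by definition of the multilinear extension, the ordinary multilinear extension of the set function $S\mapsto f(D(X)\cup S)$ on $2^X$, where $D(X):=\{p'\in P: p'<p''\text{ for some }p''\in X\}$ is the strict down-set of $X$. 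Since a multilinear extension is affine in each coordinate, differentiating in the $p$-th coordinate gives
\[
\frac{\partial^+ F(x)}{\partial x_p}=\E_{\hat X}\bigl[f(\hat X\vee p)-f(\hat X)\bigr],
\]
where $\hat X$ is the random ideal containing each $r\neq p$ independently with probability $x_r$ and never containing $p$; under the constraints just noted $\hat X$ is almost surely an ideal with $p\in\adm{\hat X}$, so $\hat X\vee p=\hat X\cup\{p\}$. The identical computation, comparing the two endpoints $z_p=0$ and $z_p=1$ of the affine function, shows that whenever $\tfrac{\partial^- F(x)}{\partial x_p}$ is defined it equals the same expectation, and a short check (as in the Example) shows the law of $\hat X$ does not depend on which cube through $x$ is used. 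Hence at a given $(x,p)$ every one-sided partial derivative that exists has this common value, and Lemma~\ref{lem:continuous-DR} reduces to the single statement: if $x\le y$, $p\le q$, and both derivatives exist, then $\E_{\hat X}[f(\hat X\vee p)-f(\hat X)]\ge\E_{\hat Y}[f(\hat Y\vee q)-f(\hat Y)]$.

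Second I would build the coupling. From the existence of $\tfrac{\partial F(y)}{\partial y_q}$ we get $y_{q'}=1$ for all $q'<q$ (so $y_p=1$ if $p<q$), and from the existence of $\tfrac{\partial F(x)}{\partial x_p}$ we get that no proper successor of $p$ lies in $\mathrm{supp}(x)$ (so $x_q=0$ if $p<q$). Take i.i.d.\ uniforms $U_r\in[0,1]$ for $r\in P$ and set $\hat X=\{r\neq p: U_r<x_r\}$ and $\hat Y=\{r\neq q: U_r<y_r\}$. These have the laws required by the first step; they are a.s.\ ideals (by the no-successor facts); they satisfy $p\in\adm{\hat X}$ and $q\in\adm{\hat Y}$ a.s.\ (by the predecessor facts); and $\hat X\subseteq\hat Y$ a.s., since for $r\in\hat X$ we have $r\neq q$ (as $x_q=0$ when $p<q$) and $U_r<x_r\le y_r$, hence $r\in\hat Y$. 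The case $p=q$ is the ordinary monotone coupling in $[0,1]^{P\setminus\{p\}}$ and needs no vanishing conditions. Applying the DR-submodular inequality pointwise with $X:=\hat X$, $Y:=\hat Y$, $a:=p$, $b:=q$ (all hypotheses hold by the above) gives $f(\hat X\vee p)-f(\hat X)\ge f(\hat Y\vee q)-f(\hat Y)$ for every realization; taking expectations and invoking the formula from the first step completes the proof. The main obstacle is the first step: one must track precisely which coordinates are pinned to $0$ or $1$ when a given one-sided derivative exists, both to justify that the random ideal omitting $p$ is genuinely supported on ideals with $p$ admissible and to see that its law is cube-independent; once this bookkeeping is done, the coupling and the pointwise use of DR-submodularity are routine, and the four displayed inequalities are literally the same inequality.
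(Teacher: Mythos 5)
Your proposal is correct and follows essentially the same route as the paper's proof: express the one-sided partial derivative as the expected marginal gain $\E[f(\hat X\vee p)-f(\hat X)]$ over a product-measure random ideal omitting $p$, build a monotone coupling of $\hat X$ and $\hat Y$ (your i.i.d.-uniform construction is the same coupling as the paper's conditional thinning $\Prob[p'\in\hat X\mid p'\in\hat Y]=x_{p'}/y_{p'}$), and apply the DR-submodular inequality pointwise before taking expectations. Your explicit reduction of all four inequalities to a single one, via the observation that both one-sided derivatives equal the same expectation whenever defined, is a cleaner treatment of what the paper dispatches as ``a similar argument,'' but it is not a different method.
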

If both $x$ and $y$ are inside of maximal hypercubes, then the above statement is equivelent to
\begin{align}
    \frac{\partial F(x)}{\partial x_p} \ge \frac{\partial F(y)}{\partial y_q}.
\end{align}
\begin{proof}
We first prove (\ref{eq:continous-DR-from-above}).
One-sided partial derivative in (\ref{eq:continous-DR-from-above}) is defined when the following conditions are satisfied: 
$x_p \neq 1$, $y_q \neq 1$, $x_{p'} = 1$ for all $p' < p$, and $y_{q'} = 1$ for all $q' < q$.
The inequality (\ref{eq:continous-DR-from-above}) is equivalent to
\begin{align}
    \E[ f(\hat X \cup \{p\}) - f(\hat X) ]
    \ge 
    \E[ f(\hat Y \cup \{q\}) - f(\hat Y) ]
\end{align}
where $\hat X$ and $\hat Y$ are the random subsets such that (1) $\Prob[p' \in \hat X] = x_{p'}$ for all $p' \in P \setminus \{p\}$, (2) $\Prob[p \in \hat X] = 0$, (3) $\Prob[q' \in \hat Y] = y_{q'}$ for all $q' \in P \setminus \{q\}$ and (4) $\Prob[q \in \hat Y] = 0$.
We prove this inequality by the coupling method. 
Let $\hat Y$ be the random ideal defined as the above.
Then, we define $\hat X$ as follows: $\hat X$ never contains $p$. 
For $p' \in P \setminus \{p\}$, 
\begin{align}
    p' \in \hat X \text{ if } p' \in \hat Y \text{ and with probability $x_{p'} / y_{p'}$}.
\end{align}
where $p' \in \hat X$ is determined independently for all $p' \in P \setminus \{p\}$.
We can see that $\Prob[p' \in \hat X] = x_{p'}$ for all $p' \in P \setminus \{ p \}$ independently and $\Prob[p \in \hat X] = 0$ as follows.
By definition, for any $p' \in P \setminus \{p, q\}$, we have $\Prob[p' \in \hat X] = \Prob[p' \in \hat Y] \cdot (x_{p'} / y_{p'}) = x_{p'}$, and $\Prob[p \in \hat X] = 0$ follows from by construction.
Therefore, the claim holds.

For any realization of $(\hat X, \hat Y)$, by the DR submodularity, we have
\begin{align}
    f(\hat X \cup \{ p \}) - f(\hat X)  
    \ge f(\hat Y \cup \{ q \}) - f(\hat Y).
\end{align}
Therefore, by taking the expectation over the joint distribution of $(\hat X, \hat Y)$, we obtain the lemma.

We can show the other inequalities by a similar argument.
\end{proof}

Recall that the multilinear extension of a set submodular function is concave along any positive direction.
This property is generalized to the distributive lattice as follows.
We define $\nabla F(x) \in \mathbb{R}^P$ by
\begin{align}
    (\nabla F(x))_p = \left\{
    \begin{array}{ll}
      \displaystyle \frac{\partial^+ F(x)}{\partial x_p} & \text{if } \displaystyle \frac{\partial^+ F(x)}{\partial x_p} \text{ is defined,}\\ 
      \displaystyle\frac{\partial^- F(x)}{\partial x_p} & \text{else if } \displaystyle \frac{\partial^- F(x)}{\partial x_p} \text{ is defined,}\\ 
      0 & \text{otherwise}.
    \end{array}\right.
\end{align}

\begin{lemma}
\label{lem:multilinear-concave-face}
Let $x, y \in K(\lat)$.
Consider any time $t$ such that $\ulm{x}{y}(t)$ is in the face of some hypercubes.
Let $h(t) = F(\ulm{x}{y}(t))$.
Then,
\begin{align}
    \frac{\dd^+ h(t)}{\dd t} \le \frac{\dd^- h(t)}{\dd t}.
\end{align}
\end{lemma}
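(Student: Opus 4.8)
\textbf{Setup.} The plan is to compute the one-sided derivatives $\dd^+ h(t)/\dd t$ and $\dd^- h(t)/\dd t$ explicitly in terms of the velocity vectors $v^+$ and $v^-$ of the uniform linear motion at the face point $z := \ulm{x}{y}(t)$, and then use the flow-conservation (straightness) condition together with the DR-submodularity inequalities from Lemma~\ref{lem:continuous-DR} to compare them. Since $\curve$ is a line segment in each hypercube, on the side of $\Cube(X^-)$ reached just before $t$ the motion has constant velocity $v^-$, and on the side of $\Cube(X^+)$ reached just after $t$ it has constant velocity $v^+$. Because $F$ restricted to any single hypercube is smooth (it is the multilinear extension of a set function on that cube), the chain rule gives
\begin{align}
    \frac{\dd^+ h(t)}{\dd t} = \sum_{p \in P} (\nabla F(z))_p \, v^+_p,
    \qquad
    \frac{\dd^- h(t)}{\dd t} = \sum_{p \in P} (\nabla F(z))_p \, v^-_p,
\end{align}
where one must be slightly careful that the correct one-sided partial derivative of $F$ is used for each coordinate — the coordinate-wise directions agree with the signs of $v^+_p$ and $v^-_p$ respectively, so the definition of $\nabla F$ is the appropriate one. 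Subtracting, it suffices to show $\sum_p (\nabla F(z))_p (v^-_p - v^+_p) \ge 0$.

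\textbf{Using the flow.} By straightness at $z$, there is a flow on the network $\mathcal{N}$ whose sources and sinks encode exactly $v^+$ and $v^-$. Decompose this flow into paths. Each path either (i) goes from a source with outgoing amount $v^-_p$ at a node $p$ with $(v^-)_p > 0$ to a sink with incoming amount $v^+_{p'}$ at a node $p'$ with $(v^+)_{p'} > 0$, traveling \emph{upward} through $P_+$ (so $p \le p'$); or (ii) the analogous statement in $P_-$, traveling so that the net effect transfers a negative increment from coordinate $p'$ (with $(v^+)_{p'} < 0$) ``down'' to coordinate $p$ (with $(v^-)_p<0$, $p \le p'$ in $P_-$ with reversed orientation). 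In case (i), a unit of flow along such a path witnesses that an amount $\delta$ of ``upward motion'' present in $v^-$ at the lower coordinate $p$ becomes upward motion in $v^+$ at the higher coordinate $p'$; the contribution of this path to $\sum_p (\nabla F(z))_p(v^-_p - v^+_p)$ is $\delta\big((\nabla F(z))_p - (\nabla F(z))_{p'}\big) \ge 0$ by Lemma~\ref{lem:continuous-DR}, since $p \le p'$ and $z \le z$. Case (ii) is symmetric: the negative-coordinate flow moves a decrement from the higher coordinate to the lower one, contributing $\delta\big((\nabla F(z))_p - (\nabla F(z))_{p'}\big)\ge 0$ again because the transferred quantity is negative and $p\le p'$. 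Summing over all paths in the decomposition yields $\dd^- h(t)/\dd t - \dd^+ h(t)/\dd t \ge 0$.

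\textbf{Main obstacle.} The delicate point is not the DR-submodular inequality itself but bookkeeping the face geometry: identifying precisely which hypercubes $\Cube(X^-)$, $\Cube(X^+)$ the motion passes through at $z$, checking that the relevant one-sided partial derivatives of $F$ are defined (so that $\nabla F(z)$ genuinely represents the directional derivative in each coordinate on both sides), and confirming that the source/sink data of $\mathcal{N}$ match $v^\pm$ coordinate by coordinate so that the path-decomposition argument translates each unit of flow into exactly one application of Lemma~\ref{lem:continuous-DR}. Once this correspondence between the flow's path decomposition and pairs $p \le q$ in $P$ is pinned down, the inequality follows term by term. I would also note the edge case where a coordinate is both entering on one side and leaving on the other (sign change of velocity across the face); the network construction already splits such a node's roles into a source and a sink, so it is handled automatically by the same decomposition.
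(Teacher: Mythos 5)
Your overall strategy is exactly the paper's: write $\frac{\dd^\pm h(t)}{\dd t} = \langle \nabla F(z), v^\pm\rangle$ by the chain rule inside each hypercube, decompose the witnessing flow on $\mathcal{N}$ into source--sink pairs, and apply Lemma~\ref{lem:continuous-DR} once per unit of flow. Your case (i) (the $P_+$ part) is correct and coincides with the paper's treatment of the pairs it calls $F_+$.

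Case (ii) as you wrote it does not close, however. You assign the source to the \emph{higher} node $p'$ (with $v^+_{p'}<0$) and the sink to the \emph{lower} node $p$ (with $v^-_p<0$, $p \le p'$). With that assignment a unit $\delta$ of flow contributes $-\delta$ to $v^-_p$ and $-\delta$ to $v^+_{p'}$, hence contributes $-\delta(\nabla F(z))_p + \delta(\nabla F(z))_{p'} = \delta\bigl((\nabla F(z))_{p'} - (\nabla F(z))_p\bigr)$ to $\frac{\dd^- h}{\dd t} - \frac{\dd^+ h}{\dd t}$, and Lemma~\ref{lem:continuous-DR} with $p \le p'$ gives $(\nabla F(z))_p \ge (\nabla F(z))_{p'}$, so this contribution is $\le 0$ --- the wrong sign. (Your displayed expression $\delta((\nabla F(z))_p - (\nabla F(z))_{p'})$ is the negative of the actual contribution under your own labelling.) The orientation is also geometrically backwards: in $K(\lat)$ a coordinate can be strictly decreasing only when every coordinate above it is already $0$, so a decreasing ``wave'' propagates downward, and at a face the hand-off is from a \emph{higher} coordinate that was decreasing before $t$ (the $v^-$ node) to a \emph{lower} coordinate that starts decreasing after $t$ (the $v^+$ node). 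This is the convention the paper uses for its pairs $F_-$ (there the $v^+$-node $p$ satisfies $p \prec p'$ for the $v^-$-node $p'$). With that corrected orientation the same computation gives a contribution $\delta\bigl((\nabla F(z))_{q} - (\nabla F(z))_{q'}\bigr) \ge 0$ with $q \le q'$ the after/before nodes, and your argument goes through. So the gap is a fixable orientation/sign error confined to the negative-velocity case, not a flaw in the method.
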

\begin{proof}
We define $v^+$ and $v^-$ as in (\ref{eq:velocity-at-face}).
Let $f$ be the flow over network $\mathcal{N}$ in the definition of straightness at the face where $\ulm{x}{y}(t)$ lies on.
We decompose $f$ as $f = \sum_{p,p'} f_{pp'}$ where $p$ is in the  support of $v^+$ and $p'$ is in the support of $v^-$.
Let $F_+$ be the set of the pairs $(p,p')$ with  $f_{pp'} > 0$ and $v^+_p > 0$.
Also, let $F_-$ be the set of $(p,p')$ with $f_{pp'} > 0$ and  $v^+_p < 0$.
By Lemma \ref{lem:continuous-DR},
\begin{align}
   \label{eq:conti-DR-proof-1}
   \frac{\partial^+ F(\ulm{x}{y}(t))}{\partial x_p} \le \frac{\partial^-F(\ulm{x}{y}(t))}{\partial x_{p'}},
\end{align}
for all $(p,p') \in F_+$ (we remark that $p \succ p'$ in this case), and 
\begin{align}
       \label{eq:conti-DR-proof-2}
       \frac{\partial^- F(\ulm{x}{y}(t))}{\partial x_p} \ge \frac{\partial^+F(\ulm{x}{y}(t))}{\partial x_{p'}},
\end{align}
for all $(p,p') \in F_-$ ($p \prec p'$ in this case).
By (\ref{eq:conti-DR-proof-1}), (\ref{eq:conti-DR-proof-2}) and the definition of straightness of uniform linear motion at face
\begin{align}
   (\nabla F(\ulm{x}{y}(t+0)))_p f_{p p'} \mathrm{sign}(v^+_p) \le (\nabla F(\ulm{x}{y}(t)))_{p'}f_{p p'} \mathrm{sign}(v^-_{p'}),
\end{align}
for all $(p,p') \in F_+ \cup F_-$.
By the definition of the straightness, $|v^+_p| = \sum_{p'} f_{p p'}$, where $p'$ runs over all $p'$ with $(p,p') \in F_+ \cup F_-$ and $|v^-_{p'}| = \sum_p f_{p p'}$, where $p$ runs over all $p$ with $(p,p') \in F_+ \cup F_-$.
By using these equations, we have
\begin{align}
   \frac{\dd^+ h(t)}{\dd t} &= \langle \nabla F(\ulm{x}{y}(t)), v^+ \rangle\\
   &= \sum_{p} (\nabla F(\ulm{x}{y}(t)))_p  \sum_{p'} \mathrm{sign}(v^+_{p}) f_{p p'}\\
   &=  \sum_{p,p'} (\nabla F(\ulm{x}{y}(t)))_p  \mathrm{sign}(v^+_p) f_{p p'}\\
   &\le \sum_{p,p'} (\nabla F(\ulm{x}{y}(t)))_{p'} \mathrm{sign}(v^-_{p'})f_{p p'}\\
   &= \langle \nabla F(\ulm{x}{y}(t)), v^-\rangle\\
   &= \frac{\dd^-h(t)}{\dd t},
\end{align}
where the summations are taken for all $(p,p') \in F_+ \cup F_-$.
\end{proof}

\begin{theorem}
\label{thm:multilinear-concave}
For any $x, y \in K(\lat)$ with $x \le y$, the function $h(t) := F(\ulm{x}{y}(t))$ is concave in $t \in [0,1]$.
\end{theorem}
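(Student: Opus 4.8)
The plan is to reduce global concavity to three ingredients: (i) $h$ is concave on each maximal hypercube the motion passes through, (ii) at each face-crossing the one-sided slope only jumps downward, which is exactly Lemma~\ref{lem:multilinear-concave-face}, and (iii) a continuous one-variable function that is piecewise concave and whose right derivative never jumps up is concave everywhere.

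First I would record the piecewise structure. Since $x \le y$, the vector $w = y - x$ is nonnegative, so in the construction of $\ulm{x}{y}$ in Lemma~\ref{lem:variational-to-uniform-linear} the velocity of each coordinate is $\mathrm{sign}(w_p)\, v_p = v_p \ge 0$; hence every coordinate of $\ulm{x}{y}(\cdot)$ is nondecreasing and piecewise linear with breakpoints among the finitely many times $\{t_p\}_{p\in P}$. Consequently $[0,1]$ splits as $0 = t_0 < t_1 < \cdots < t_m = 1$ so that on each $[t_{i-1}, t_i]$ the curve is a line segment $\ulm{x}{y}(t) = a^i + \sigma_i(t)\, d^i$ contained in a single maximal hypercube $\Cube(X_i)$, where $\sigma_i$ is an affine increasing reparametrisation and the direction $d^i \ge 0$ (again because $w \ge 0$); in particular the curve visits each hypercube on one contiguous interval and never returns.

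Next I would invoke the behaviour inside a hypercube. On $\Cube(X_i)$ the function $F$ agrees with the multilinear extension of the set function obtained by restricting $f$ to subsets of the antichain $X_i$, and DR-submodularity of $f$ restricts to submodularity of this set function (the elements of a maximal antichain are pairwise incomparable, so the admissibility hypotheses are vacuous). The standard property of the multilinear extension of a submodular set function --- off-diagonal second partials are nonpositive, diagonal ones vanish, so $d^\top (\nabla^2 F)\, d \le 0$ for every $d \ge 0$ \cite{calinescu2011maximizing} --- then shows $t \mapsto F(a^i + \sigma_i(t) d^i)$ is concave; hence $h$ is concave on each $[t_{i-1}, t_i]$. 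At each interior breakpoint $t_i$, the curve lies on a face and Lemma~\ref{lem:multilinear-concave-face} gives $\frac{\dd^+ h(t_i)}{\dd t} \le \frac{\dd^- h(t_i)}{\dd t}$.

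Finally I would assemble these. The map $h$ is continuous on $[0,1]$ (composition of the continuous curve with the continuous $F$) and differentiable off the finite set $\{t_1,\dots,t_{m-1}\}$, with $h'$ nonincreasing on each open piece $(t_{i-1},t_i)$ by step two and with $h'(t_i^+) \le h'(t_i^-)$ at each breakpoint by Lemma~\ref{lem:multilinear-concave-face}. Therefore the right derivative $t \mapsto \frac{\dd^+ h(t)}{\dd t}$ is nonincreasing on $[0,1)$, which is equivalent to concavity of $h$ on $[0,1]$. The one genuinely content-bearing point is step two --- identifying $F|_{\Cube(X_i)}$ with an honest multilinear extension of a submodular set function and invoking concavity along the nonnegative direction $d^i$; the monotonicity of the coordinates forced by $x \le y$ is what makes $d^i \ge 0$ and keeps the piecewise decomposition finite, and the remaining assembly is routine single-variable calculus.
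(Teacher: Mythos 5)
Your proof is correct and follows essentially the same route as the paper's: concavity of $h$ on each hypercube via the standard multilinear-extension-of-a-submodular-set-function argument along a nonnegative direction, combined with Lemma~\ref{lem:multilinear-concave-face} to control the downward jump of the one-sided derivative at face crossings. Your write-up simply supplies the details (the finite breakpoint structure forced by $x \le y$, the identification of $F|_{\Cube(X)}$ with a set-function multilinear extension, and the piecewise-concavity gluing) that the paper's two-sentence proof leaves implicit.
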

\begin{proof}
On each hypercube, $F(\ulm{x}{y}(t))$ is concave since $\ulm{x}{y}(t)$ is line in a positive direction and $F$ restricted on this hypercube is the multilinear extension over Boolean lattice.
Thus, it suffices to show that $h$ is concave at faces, which have already shown in Lemma~\ref{lem:multilinear-concave-face}.
\end{proof}

\section{Continous Greedy Algorithm over Median Complex}
\label{sec:maximization}

In this section, we propose an algorithm that finds $1 - 1/e$-approximation solution for the continuous relaxation of the DR-submodular maximization problem under multiple knapsack constraints. 
The multilinear extension of $c_\lambda$ is $C_\lambda(x) = \sum_{p \in P} c_\lambda(p) x_p$ for $x \in K(\lat)$.
Therefore, the continuous relaxation of the feasible region of the multiple knapsack constraint is 
\begin{align}
    \Omega = \{ x \in K(\lat) \mid C_\lambda(x) \le b_\lambda \text{ for all } \lambda \in \Lambda\}.
\end{align}
Let $\epsilon > 0$.
The \emph{epsilon upper neighborhood} $N_{\epsilon}^+(x)$ of $x \in K(\lat)$ associated with knapsack constraint $(c_\lambda, b_\lambda)_{\lambda \in \Lambda}$ is defined by
\begin{align}
    N_{\epsilon}^+(x) = \{y \in K(\lat) \mid y \ge x,\quad C_\lambda(y) \le C_\lambda(x) + \epsilon b_\lambda  \; (\forall \lambda \in \Lambda)\}.
\end{align}
For $x \in k(\lat)$, 
let $\Cube(x)$ be the cube corresponding to the antichain that consists of the maximal elements of $\mathrm{supp}(x) \cup \{p \in P \mid x_{p'} = 1 \text{ for all } p' \prec p\}$.

\paragraph{Overview of the algorithm}
Algorithm~\ref{alg:monotone-test} outputs an approximate solution by updating $x^k$ $(k = 0,1,\dots, \floor{1/\epsilon})$ as follows.
At each step, $x^k$ moves sufficiently small, i.e., $x^{k+1} \in N_\epsilon^+(x)$.
This property guarantees the feasibility of the output $x^{\lfloor 1/\epsilon \rfloor}$.
When $x^k$ is far from the surface of the maximal hypercube where $x^k$ lies on, i.e., $N_\epsilon^*(x^k) \subseteq \Cube(x^k)$, then we solve the following linear programming in $\Cube(x^k)$ to update $x^k$: let $x^{k+1} \gets \mathrm{argmax}_{y \in N_\epsilon^+(x)}. \langle y - x^k, \nabla F(x^k) \rangle$.
This LP is easily solved since $\Cube(x^k)$ is a Euclidean space.
In this case, we can assure that the objective value increases sufficiently (Equation~(\ref{eq:update-objective-value})) by using the concavity of the $F$ for positive directions (Theorem~\ref{thm:multilinear-concave}).
Namely, we prove that the direction $\frac{\dd^+ \ulm{x^k}{x^k \lor x^*}(0)}{\dd t}$ is feasible and have sufficiently large innerproduct with $\nabla F(x^k)$.
Otherwise, i.e. $N_\epsilon^+(x^k) \not \subseteq \Cube(x^k)$, the solution of the LP might be the outside of $\Cube(x^k)$.
To avoid this difficulty, we just set $x^{k+1}$ to some point in an upper hypercube.
The only bound we have is a trivial one: $f(x^{k+1}) \ge f(x^k)$.
However, this is not a problem because such cases happen at most $|P|$-times.

\begin{algorithm}[tb]
\caption{Continuous greedy algorithm for median complex under multiple knapsack constraints}
\label{alg:monotone-test}
\begin{algorithmic}[1]
\State{$x^0 = \bot$}
\For{$k = 0, 1, \dots, \floor{1/\epsilon} - 1$}
\If{$N_\epsilon^*(x^k) \subseteq \Cube(x^k)$}
\State{Find $y \in N_\epsilon^*(x^k)$ that maximizes
$\langle y-x^k, \nabla F(x^k) \rangle $.}
\State{$x^{k+1} \gets y$.}
\Else
\State{Set $x^{k+1}$ to an arbitrary point in $N_\epsilon^*(x) \setminus \Cube(x^k)$. }
\EndIf
\EndFor
\State{Return $x^{\lfloor 1 / \epsilon \rfloor}$.}
\end{algorithmic}
\end{algorithm}

\begin{theorem}
\label{thm:conti-greedy-for-knapsack}
Let $f$ be a monotone DR-submodular function and $F$ its multilinear extension.
Then, Algorithm \ref{alg:monotone-test} outputs a feasible solution $x \in \Omega$ satisfying
\begin{align}
    F(x) \ge \left(1 - (1 - \epsilon)^{\lfloor 1/\epsilon \rfloor - |P|}\right) F(x^*),
\end{align}
where $x^* = \mathrm{argmax}_{x \in \Omega} F(x)$.
\end{theorem}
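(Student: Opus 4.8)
The plan is to run the familiar continuous-greedy bookkeeping, but to extract the per-iteration progress from concavity along uniform linear motions (Theorem~\ref{thm:multilinear-concave}) rather than from a first-order Taylor estimate, so that the resulting recursion is exact. I first record feasibility: $x^0=\bot$ has $C_\lambda(x^0)=0$, and every iteration sets $x^{k+1}\in N_\epsilon^+(x^k)$, hence $C_\lambda(x^{k+1})\le C_\lambda(x^k)+\epsilon b_\lambda$; summing over the $\floor{1/\epsilon}$ steps gives $C_\lambda(x^{\floor{1/\epsilon}})\le\floor{1/\epsilon}\,\epsilon\,b_\lambda\le b_\lambda$, so the output lies in $\Omega$. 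Moreover $x^{k+1}\ge x^k$, so $F(x^{k+1})\ge F(x^k)$ by monotonicity — all we will need on the ``boundary'' iterations.

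The heart of the proof is the progress guarantee on an \emph{interior} iteration, i.e.\ one with $N_\epsilon^+(x^k)\subseteq\Cube(x^k)$. Set $z:=x^k\lor x^*$ (so $x^k\le z$) and consider the uniform linear motion $\ulm{x^k}{z}$. \emph{(a) Its $\epsilon$-point $w:=\ulm{x^k}{z}(\epsilon)$ is feasible for the step.} The map $t\mapsto C_\lambda(\ulm{x^k}{z}(t))$ is affine inside each cube, and its slope does not decrease across a face: along the positive motion $\ulm{x^k}{z}$ the straightness network carries flow only along edges from lower to higher comparable coordinates $p\prec p'$, and order-consistency $c_\lambda(p)\le c_\lambda(p')$ turns the computation in the proof of Lemma~\ref{lem:multilinear-concave-face} into its mirror image, so $C_\lambda\circ\ulm{x^k}{z}$ is convex on $[0,1]$. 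Hence $C_\lambda(w)\le(1-\epsilon)C_\lambda(x^k)+\epsilon C_\lambda(z)$; since $C_\lambda$ is modular and $x^*\in\Omega$ we have $C_\lambda(z)\le C_\lambda(x^k)+C_\lambda(x^*)\le C_\lambda(x^k)+b_\lambda$, so $C_\lambda(w)\le C_\lambda(x^k)+\epsilon b_\lambda$, and together with $w\ge x^k$ this gives $w\in N_\epsilon^+(x^k)$. \emph{(b) Moving to $w$ recovers an $\epsilon$-fraction of the gap.} By concavity of $h(t):=F(\ulm{x^k}{z}(t))$ (Theorem~\ref{thm:multilinear-concave}), $h(\epsilon)\ge(1-\epsilon)h(0)+\epsilon h(1)$, i.e.\ $F(w)-F(x^k)\ge\epsilon\bigl(F(z)-F(x^k)\bigr)\ge\epsilon\bigl(F(x^*)-F(x^k)\bigr)$, the last step using monotonicity ($x^*\le z$).

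It remains to pass from $w$ to the point $x^{k+1}$ actually produced. On an interior iteration everything happens inside the single cube $\Cube(x^k)$, where $\ulm{x^k}{\cdot}$ is an ordinary line segment and $F$ is a genuine multilinear extension; $x^{k+1}$ maximizes $\langle y-x^k,\nabla F(x^k)\rangle$ over $N_\epsilon^+(x^k)$, so it beats the feasible competitor $w$, for which $\langle w-x^k,\nabla F(x^k)\rangle=\epsilon\langle\tfrac{\dd^+\ulm{x^k}{z}(0)}{\dd t},\nabla F(x^k)\rangle\ge\epsilon\bigl(F(z)-F(x^k)\bigr)$ by the one-sided concavity estimate $\tfrac{\dd^+h(0)}{\dd t}\ge h(1)-h(0)$; combining this with the multilinear structure inside the cube yields $F(x^{k+1})-F(x^k)\ge\epsilon\bigl(F(x^*)-F(x^k)\bigr)$, equivalently $F(x^*)-F(x^{k+1})\le(1-\epsilon)\bigl(F(x^*)-F(x^k)\bigr)$. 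I expect this last transfer — making rigorous the passage from the ideal move $w$ along the uniform linear motion to the linear-optimization step $x^{k+1}$, within the cubical-complex geometry — to be the main obstacle. The remaining bookkeeping is routine: on a boundary iteration we have only $F(x^{k+1})\ge F(x^k)$, but such iterations number at most $|P|$, since leaving $\Cube(x^k)$ while moving up must pin a new coordinate to value $1$ (a newly activated coordinate with value in $(0,1)$ would already be a coordinate of $\Cube(x^k)$) and pinned coordinates never unpin because the iterates are nondecreasing. Thus at least $\floor{1/\epsilon}-|P|$ iterations are interior, and chaining the interior recursion from $F(x^0)=f(\bot)\ge0$ gives $F(x^*)-F(x^{\floor{1/\epsilon}})\le(1-\epsilon)^{\floor{1/\epsilon}-|P|}F(x^*)$, i.e.\ $F(x^{\floor{1/\epsilon}})\ge\bigl(1-(1-\epsilon)^{\floor{1/\epsilon}-|P|}\bigr)F(x^*)$, with feasibility already established.
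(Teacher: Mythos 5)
Your overall strategy is the same as the paper's: show the $\epsilon$-point $w=\ulm{x^k}{x^k\lor x^*}(\epsilon)$ lies in $N_\epsilon^+(x^k)$ via order-consistency and subadditivity of $C_\lambda$, use concavity along the uniform linear motion (Theorem~\ref{thm:multilinear-concave}) to certify that $w$ recovers an $\epsilon$-fraction of the gap, compare with the LP maximizer, treat the at most $|P|$ boundary iterations by monotonicity alone, and chain the recursion. Your feasibility argument for $w$ (convexity of $C_\lambda\circ\ulm{x^k}{z}$ as the mirror image of Lemma~\ref{lem:multilinear-concave-face}) is in fact more carefully justified than the paper's one-line appeal to order-consistency.

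However, the step you yourself flag as "the main obstacle" is a genuine gap, and it cannot be closed in the exact form you claim. You correctly get $\langle x^{k+1}-x^k,\nabla F(x^k)\rangle\ge\langle w-x^k,\nabla F(x^k)\rangle\ge\epsilon\bigl(F(x^*)-F(x^k)\bigr)$, but the passage from the linear surrogate back to the true gain goes the wrong way: since $F$ restricted to $\Cube(x^k)$ is multilinear and concave along positive directions, the first-order linearization at $x^k$ is an \emph{upper} bound on $F(x^{k+1})-F(x^k)$, not a lower bound. The LP maximizer $x^{k+1}$ can have a larger linear score than $w$ yet a smaller actual increment, so the exact recursion $F(x^{k+1})-F(x^k)\ge\epsilon\bigl(F(x^*)-F(x^k)\bigr)$ does not follow. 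The paper closes this gap with a second-order Taylor estimate, losing $O(\epsilon^2)$ per step in both directions of the comparison (linearize the true gain at $x^{k+1}$, compare LP values, then un-linearize at $w$), which accumulates to an additive $O(\epsilon)$ in the final bound --- a term that appears in the last display of the paper's proof but is silently dropped from the theorem statement. So your derivation, read literally, proves the clean statement only by asserting the one inequality that is false in general; to repair it you must insert the Taylor remainder (bounding the second derivatives of the multilinear extension and the $\ell_1$-length of each step) and accept the resulting $O(\epsilon)$ additive loss, exactly as the paper does.
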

We note that the approximation ratio converges to $1 - e^{-1}$ as $\epsilon \to 0$.
\begin{proof}
By the definition of the upper epsilon neighborhood, the obtained solution is feasible. 
Therefore, it suffices to prove the approximation factor.
If  $N_\epsilon^+(x) \subseteq C(x^k)$, we have
\begin{align}
\label{eq:update-objective-value}
    F(x^{k+1}) - F(x^k) 
    &= F(y) - F(x^k) \\
    &\ge^{(*)} \langle y - x, \nabla F(x^k) \rangle - O(\epsilon^2) \\
    &\ge^{(**)}  \langle \ulm{x^k}{x^k\lor x^*}(\epsilon) - x, \nabla F(x^k) \rangle - O(\epsilon^2) \\
    &\ge^{(*)} F(\ulm{x^k}{x^k \lor x^*}(\epsilon)) - F(x^k)  - O(\epsilon^2)\\
    &\ge^{(***)} \epsilon \left( F(x^k \lor x^*) - F(x^k) \right) - O(\epsilon^2) \\
    &\ge \epsilon \left( F(x^*) - F(x^k) \right) - O(\epsilon^2).
\end{align}
In (*), we used the Taylor's theorem.
In (**), we used the greedy property of the algorithm and the fact that $\ulm{x^k}{x\lor x^*}(\epsilon) \in N_\epsilon^+(x_k)$.
The above statement holds
since order-consistency of $c_\lambda$ implies that
\begin{align}
    C_\lambda(\ulm{x^k}{x^k\lor x^*}(\epsilon)) - C_\lambda(x^k) \le \epsilon (C_\lambda(x^k \lor x^*) - C_\lambda(x^k)) \le  \epsilon (C_\lambda(x^k) + C_\lambda(x^*) - C_\lambda(x^k)) = \epsilon C_\lambda(x^*) \le \epsilon b_\lambda,
\end{align}
where we used $C_\lambda(x \lor y) \le C_\lambda(x) + C_\lambda(y)$ in the second last inequality.
In (***), we used Theorem~\ref{thm:multilinear-concave}.
In the last inequality, we used the monotonicity of $F$.

If  $N_\epsilon^+(x) \not \subseteq C(x^k)$, then the monotonicity of $F$ implies that $F(x^{k+1}) \ge F(x^k)$.
We note that $N_\epsilon^+(x) \subseteq C(x^k)$ holds except $|P|$-times in the execution of the Algorithm.
Therefore, by solving this recursive inequality, we have
\begin{align}
    F\left(x^{\lfloor 1/\epsilon \rfloor}\right) \ge \left(1 - (1 - \epsilon)^{\lfloor 1/\epsilon \rfloor - |P|}\right) F(x^*) - O(\epsilon).
\end{align}
\end{proof}

\section{Rounding for Multiple Knapsack Constraints}
\label{sec:rounding}

We propose an approximation algorithm for monotone DR-submodular function maximization under multiple knapsack constraint by combining continuous greedy algorithm over median complex (Section \ref{sec:maximization}) and generalization of rounding technique~\cite{kulik2009maximizing}.

\subsection{Preliminary}
We note that the knapsack constraint $c_\lambda$ is also defined for the subset $X$ (not only for ideal) of $P$ by $c_\lambda(X) = \sum_{p \in X} c_{\lambda}(p)$.
Let $\epsilon > 0$ be some sufficiently small ($<1/2$) parameter (different from $\epsilon$ in the previous section).
An element $p \in P$ is said to be \emph{small in dimension $\lambda$} 
if $c_{\lambda}(p)  \le \epsilon^4 b_\lambda$.
An element is \emph{small} if it is small in all dimension $\lambda \in \Lambda$.
An element is big if it is not small.

We will use the following two \emph{residual problems} with respect to an ideal $T$.
\begin{itemize}  
    \item \emph{Value residual problem} with integer parameter $h$. In this problem, the underlying poset $P'$ consists of $p \in P \setminus T$ such that $f_T(I_p) \le f(T) / h$.
    \item \emph{Cost residual problem}. In this problem, the underlying poset $P'$ consists of all small elements in $P \subseteq T$.
\end{itemize}.
The objective function of the residual problems is $f_T(X)$ for an ideal $X$ of $P'$, and the knapsack constraint is $\{X \subseteq P' \mid c_\lambda(X) \le b_\lambda - c_\lambda(T) \}_{\lambda \in \Lambda}$.
In the value residual problem, the underlying poset $P'$ is an ideal of $P \setminus T$ due to the monotonicity of the objective function.
The same thing holds for the cost residual problem due to the order-consistency of $c_\lambda$.

In the following, we use uniform linear motion both for the original problem and the residual problem.
To clarify the considering poset, we use superscript $P'$ as $\ulm{x}{y}^{P'}(t)$ for the uniform linear motion over the median complex arose from $P'$.

\subsection{Overview of the Algorithm}
We later propose an algorithm $\mathcal{A}$ (Algorithm~\ref{alg:rounding}) by combining continuous greedy algorithm and rounding technique~\cite{kulik2009maximizing} in Section~\ref{subsec:rounding}.
Consider an instance $\mathcal{I}' = (P', f', \{(c_\lambda', b_\lambda')\}_{\lambda \in \Lambda})$ of DR-submodular maximization problem under multiple knapsack constraints, where $P'$ is the underlyning poset, $f'$ is the objective function, and $(c_\lambda',b_\lambda')$ is the knapsack constraint.
The algorithm $\mathcal{A}$ outputs a random feasible solution $\hat D \subseteq P'$ of $I$ satisfying
\begin{align}
    \label{eq:approx-gurantee-rounding}
    \E[f(\hat D)] \ge (1 - \Theta(\epsilon))(1 - e^{-1})f(\OPT) - |\Lambda| \epsilon^3 M,
\end{align}
where $\OPT$ is the optimal solution of the instance $\mathcal{I}'$ and $M = \max_{p' \in P'} f(I_{p'})$.
In summary, $\mathcal{A}$ has almost $1 - e^{-1}$-approximation guarantee if the profit $f(I_{p'})$ is sufficiently small for all $p' \in P'$.

To achieve $1 - e^{-1}$ approximation guarantee, we employ a technique called partial enumeration.
In the main algorithm (Algorithm~\ref{alg:partial-enumeration}), we enumerate the sets $X \subseteq P$ consists of at most  $\lceil e d \epsilon^3 \rceil$ elements.
After enumeration, we use algorithm $\mathcal{A}$ to solve the value residual problem for $\bigcup_{p \in X} I_p$.
At some iteration, $X$ contains $p \in \OPT$ with high profit and $\mathcal{A}$ approximately finds a good approximation of the other elements since the profit of such elements are low and $M$ is small.

\subsection{Partial enumeration}
\label{subsec:partial-enumeration}
We propose the main algorithm that uses $\mathcal{A}$ explained in Section~\ref{subsec:rounding} as a subroutine.
We note that the runtime of the proposed Algorithm~\ref{alg:partial-enumeration} is polynomial in $|P|$.

\begin{algorithm}
\caption{Partial enumeration}
\label{alg:partial-enumeration}
\begin{algorithmic}[1]
\State{$\hat S \gets \emptyset$.}
\For{each \emph{subset} $X \subseteq P$ with $|X| \le \lceil e d  \epsilon^{-3} \rceil$:}
\State{$T \gets \bigcup_{p \in X} I_p$. (If $T$ is not feasible, then skip to the next iteration.)}
\State{Let $\hat D$ be the output of $\mathcal{A}$ for the value residual problem of $T$ with respect to parameter $|X|$.}
\If{$f(\hat D \cup T) > f(\hat A)$:}
\State{$\hat S \gets \hat  D \cup T$.}
\EndIf{}
\EndFor
\State{\Return $\hat S$.}
\end{algorithmic}
\end{algorithm}

\begin{theorem}
Let $\hat S$ be the output of Algorithm~\ref{alg:partial-enumeration}.
Then, $\hat S$ is feasible and
\begin{align}
    \E[f(\hat S)] \ge  (1 - \Theta(\epsilon))(1-e^{-1})f(\OPT),
\end{align}
where $\OPT$ is an optimal solution.
\end{theorem}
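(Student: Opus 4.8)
The plan is to follow the partial-enumeration template of Kulik, Shachnai, and Tamir~\cite{kulik2009maximizing}, adapted to the distributive-lattice setting. Fix an optimal ideal $\OPT$. First I would isolate the ``big contributors'' in $\OPT$: by the standard counting argument, there are at most $\lceil e d \epsilon^{-3}\rceil$ join-irreducible elements $p\in\OPT$ whose marginal profit $f_T(I_p)$ (measured against the set of previously accounted-for big elements) exceeds $f(\OPT)/|X|$-type thresholds; here $d=|\Lambda|$. Let $X^\star\subseteq\OPT$ be this collection of big elements, and set $T^\star=\bigcup_{p\in X^\star}I_p$. Since $\OPT$ is an ideal and each $I_p\subseteq\OPT$, the set $T^\star$ is a feasible ideal, so the iteration $X=X^\star$ is not skipped by Algorithm~\ref{alg:partial-enumeration}.

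Next I would analyze that particular iteration. On it, Algorithm~\ref{alg:partial-enumeration} runs $\mathcal{A}$ on the value residual problem of $T^\star$ with parameter $|X^\star|$. The residual poset $P'$ consists of the $p\in P\setminus T^\star$ with $f_{T^\star}(I_p)\le f(T^\star)/|X^\star|$; by the choice of $X^\star$ every element of $\OPT\setminus T^\star$ lies in $P'$, so $\OPT\setminus T^\star$ is a feasible solution of the residual instance $\mathcal{I}'$ with objective $f_{T^\star}(\OPT\setminus T^\star)=f(\OPT)-f(T^\star)$. Hence the residual optimum satisfies $f_{T^\star}(\OPT')\ge f(\OPT)-f(T^\star)$. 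Moreover $M=\max_{p'\in P'}f(I_{p'})$ is controlled: because $P'$ excludes elements of large marginal profit and $T^\star$ already contains all the big elements of $\OPT$, one gets $M=O(\epsilon^3 f(\OPT))$ (here I would spell out the bound $M\le f(\OPT)/|X^\star|$ when $X^\star\neq\emptyset$, and separately handle $X^\star=\emptyset$, i.e.\ $f(\OPT)$ itself small, where the $-|\Lambda|\epsilon^3 M$ term is already negligible). Plugging into the guarantee~\eqref{eq:approx-gurantee-rounding} for $\mathcal{A}$,
\begin{align}
    \E[f(\hat D)] &\ge (1-\Theta(\epsilon))(1-e^{-1})\,f_{T^\star}(\OPT') - |\Lambda|\epsilon^3 M \\
    &\ge (1-\Theta(\epsilon))(1-e^{-1})\,\bigl(f(\OPT)-f(T^\star)\bigr) - \Theta(\epsilon^3)|\Lambda|\,f(\OPT).
\end{align}
Adding $f(T^\star)$ and using $1-e^{-1}<1$ so that the coefficient of $f(T^\star)$ is at least $(1-\Theta(\epsilon))(1-e^{-1})$ as well, the candidate $\hat D\cup T^\star$ satisfies $\E[f(\hat D\cup T^\star)]\ge(1-\Theta(\epsilon))(1-e^{-1})f(\OPT)$, absorbing the $\Theta(\epsilon^3)$ loss into the $\Theta(\epsilon)$ slack. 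Since $\hat S$ is chosen (over all enumerated $X$) to maximize $f(\hat D\cup T)$, we have $f(\hat S)\ge f(\hat D\cup T^\star)$ pointwise, and feasibility of $\hat S$ is immediate because every candidate stored is feasible. Taking expectations gives the claim.

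The main obstacle is making the threshold bookkeeping in the first step precise: one must choose the marginal-profit cutoffs so that simultaneously (i) the number of big elements of $\OPT$ is at most $\lceil e d\epsilon^{-3}\rceil$, matching the enumeration bound of Algorithm~\ref{alg:partial-enumeration}, and (ii) every remaining element of $\OPT$ survives into the residual poset $P'$ with $M=O(\epsilon^3 f(\OPT))$. In the set case this is the classical argument; on a distributive lattice the subtlety is that adding a join-irreducible $p$ forces in all of $I_p$, so ``adding the big elements'' really means taking the ideal $T^\star=\bigcup_{p\in X^\star}I_p$, and one must check that the marginals $f_{T^\star}(I_p)$ for $p\in\OPT\setminus T^\star$ are still bounded by $f(\OPT)/|X^\star|$ — this uses monotonicity and DR-submodularity of $f$ together with the fact that $T^\star\subseteq\OPT$. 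The remaining steps are routine substitutions into the already-established guarantee~\eqref{eq:approx-gurantee-rounding}.
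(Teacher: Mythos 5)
Your overall route is the same as the paper's: enumerate a high-marginal portion of $\OPT$, run $\mathcal{A}$ on the value residual problem, and add back $f(T)$. Two points, however, need attention.

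First, the step you yourself flag as ``the main obstacle'' is genuinely the crux, and the static counting argument you gesture at (``there are at most $\lceil e d\epsilon^{-3}\rceil$ elements whose marginal exceeds a threshold'') is not how it goes through. The paper instead orders $\OPT=\{p_1,\dots,p_k\}$ \emph{greedily}: $K_0=\emptyset$, $p_l=\argmax_{p\in\OPT\setminus K_{l-1}}f_{K_{l-1}}(I_p)$, $K_l=K_{l-1}\cup I_{p_l}$, and takes $T=K_h$ with $h=\lceil ed\epsilon^{-3}\rceil$ (treating $k\le h$ separately, since then $\OPT$ itself is enumerated — a case you should not fold into ``$X^\star=\emptyset$''). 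DR-submodularity plus the greedy choice give $f_{K_{h}}(I_{p_j})\le f_{K_{l-1}}(I_{p_j})\le f_{K_{l-1}}(I_{p_l})$ for all $j>l$, and chaining this over $l$ yields $f_{T}(I_{p_j})\le f(T)/h$ for every $p_j\in\OPT\setminus T$. This single inequality delivers both of your desiderata at once: $\OPT\setminus T\subseteq P'$ (so the residual optimum is at least $f(\OPT)-f(T)$) and $M\le f(T)/h$. Your version, which bounds $M$ by $f(\OPT)/|X^\star|$ with $|X^\star|$ unspecified, does not guarantee $M=O(\epsilon^{3}f(\OPT))$ unless you commit to taking a full prefix of length $h$.

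Second, the additive error you should be plugging in is $|\Lambda|\epsilon^{-3}M$ (Theorem~\ref{thm:rounding_knapsack}), not $|\Lambda|\epsilon^{3}M$: the loss comes from removing up to $\epsilon^{-3}$ big elements per constraint in the rounding step. With that exponent, ``absorb the $\Theta(\epsilon^{3})$ loss into the $\Theta(\epsilon)$ slack'' fails. The paper's accounting is finer: with $\alpha=f(T)/f(\OPT)$ one gets $|\Lambda|\epsilon^{-3}M\le|\Lambda|\epsilon^{-3}\cdot\alpha f(\OPT)/h\le\alpha f(\OPT)/e$, and this is offset not by the $\Theta(\epsilon)$ slack but by the fact that $f(T)=\alpha f(\OPT)$ is counted at full value while only $(1-e^{-1})\alpha f(\OPT)$ is needed, leaving exactly $\alpha f(\OPT)/e$ to pay for the error term. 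You should redo your final display with this bookkeeping; as written the cancellation is not justified.
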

\begin{proof}

The feasibility of $\hat S$ is trivial.
We prove the expected approximation guarantee.
Let $\OPT  = \{p_1,p_2, \dots, p_k \}$, where
we choose the order of $\{p_l\}_{l = 1,2,\dots, k}$ greedily.
Precisely, we choose $p_l$ recursively as follows.
Let $K_0 = \emptyset$.
For $l = 1, 2, \dots, k$, we iterate the following.
\begin{enumerate}
    \item If $K_{l-1} = \OPT$, then take $p_{l'}$ $(l' = l,l+1,\dots, k)$ arbitrary and finish the procedure. 
    \item choose $p_l \in \OPT$ as $p_l = \argmax_{p \in \OPT \setminus K_{l-1}} f_{K_{l-1}}(I_{p})$.
    \item $K_{l} = K_{l-1} \cup I_{p_l}$.
\end{enumerate}
By this construction, we can show the following.
\begin{itemize}
    \item $K_l$ is an ideal.
    \item If $j > l$, then $p_j \in K_l$ or $f_{K_{l-1}}(I_{p_j}) \le f_{K_{l-1}}(I_{p_l})$. Here, we used the directional DR-submodularity and the greedy construction.
    By using this property recursively on $l$, we have
    \begin{align}
        \label{eq:OPT-low-profit}
        f_{K_{l-1}}(I_{p_j}) \le f(K_l) / l.
    \end{align}
\end{itemize}

Let $h = \lceil e d \epsilon^{-3} \rceil$.
If $k \le h$, then $\OPT$ is chosen as $T$ at some iteration and the output is optimal.
We analyze the other case: $k > h$.
In this case, we focus on the iteration where $K_h$ is chosen as $T$ and prove the approximation guarantee of $S \cup T$ at step 5 since $\mathcal{A}$ outputs the maximum $S \cup T$.
At this iteration, $|X| = h$.
Let $\alpha = f(T)/f(\OPT)$.
The optimal solution of the value residual problem is $(1-\alpha)f(\OPT)$
since the underlying poset $P'$ of the value residual problem contains $\OPT \setminus K_h$ due to (\ref{eq:OPT-low-profit}).
Hence, by the approximation gurantee of $\mathcal{A}$ (Theorem~\ref{thm:rounding_knapsack}),
\begin{align}
   \E[f_T(\hat D)] \ge (1 - \Theta(\epsilon))(1 - e^{-1})(1 - \alpha) f(\OPT) - |\Lambda| \epsilon^{-3}M,
\end{align}
where $M = \max_{p' \in P'} f_T(I_{p'})$.
By the definition of the value residual problem, we know that 
\begin{align}
    M \le \frac{f(T)}{h} \le h^{-1} \alpha f(\OPT).
\end{align}
Therefore,
\begin{align}
   \E[f(\hat S)] &= \E[ f(\hat D \cup T)]\\
   &= f(T) + \E[f_T(\hat D)]\\
    &\ge \alpha f(\OPT) +  (1 - \Theta(\epsilon))(1 - e^{-1})(1 - \alpha) f(\OPT) - |\Lambda| \epsilon^{-3} \cdot \frac{\epsilon^3}{|\Lambda| e} f(\OPT)\\
    &\ge (1 - \Theta(\epsilon))(1 - e^{-1}) f(\OPT).
\end{align}
\end{proof}

\subsection{Rounding}
\label{subsec:rounding}
We next explain the details on the subroutine $\mathcal{A}$.
The goal of this section to prove the following approximation guarantee (Theorem~\ref{thm:rounding_knapsack}) of $\mathcal{A}$ for any instances $\mathcal{I} = (P', f', \{(c_\lambda', b_\lambda)\}_{\lambda \in \Lambda})$ (possibly different from the original problem over $P$).
To simplify the notation, we consider the case $\mathcal{I} = (P, f, \{(c_\lambda,b_\lambda)\}_{\lambda \in \Lambda})$ without loss of any generality.
Let $\OPT$ be an optimal solution of this problem.
\begin{theorem}
\label{thm:rounding_knapsack}
Let $\hat D$ be the output of $\mathcal{A}$ (Algorithm~\ref{alg:rounding}).
Then, $\hat D$ is feasible and
\begin{align}
    \E[f(\hat D)]  \ge (1 - \Theta(\epsilon))(1 - e^{-1})f(\OPT) - |\Lambda|\epsilon^{-3}M,
\end{align}
where $M = \max_{p \in P} f(I_p)$.
\end{theorem}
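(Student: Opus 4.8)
The plan is to reduce Theorem~\ref{thm:rounding_knapsack} to the continuous relaxation of Section~\ref{sec:maximization} and then round, following the template of Kulik, Shachnai, and Tamir~\cite{kulik2009maximizing} adapted to the median complex $K(\lat)$. First note that the $0$--$1$ indicator vector $\mathbf{1}_{\OPT}$ of the optimal ideal lies in $\Omega$ (since $\OPT$ is a feasible ideal) and satisfies $F(\mathbf{1}_{\OPT})=f(\OPT)$, so $F(x^*)\ge f(\OPT)$. Hence, running Algorithm~\ref{alg:monotone-test} with a sufficiently small step size (Theorem~\ref{thm:conti-greedy-for-knapsack}) produces $x\in\Omega$ with $F(x)\ge(1-e^{-1}-\Theta(\epsilon))f(\OPT)$. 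It remains to round $x$ to a feasible ideal $\hat D$ while losing only a multiplicative $(1-\Theta(\epsilon))$ factor and an additive $O(|\Lambda|\epsilon^{-3}M)$ term in expectation.

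The rounding performed by $\mathcal{A}$ proceeds in three steps. \emph{(1) Create slack.} Replace $x$ by $x':=\ulm{\bot}{x}(1-\Theta(\epsilon))$; this must be done along the uniform linear motion rather than by Euclidean scaling $x\mapsto(1-\Theta(\epsilon))x$, since the latter leaves $K(\lat)$ (non-maximal coordinates must remain equal to $1$). By concavity of $F$ along $\ulm{\bot}{x}$ (Theorem~\ref{thm:multilinear-concave}) together with $F(\bot)=f(\emptyset)\ge0$ we get $F(x')\ge(1-\Theta(\epsilon))F(x)$, and, exactly as in the analysis of Algorithm~\ref{alg:monotone-test}, order-consistency of each $c_\lambda$ gives $C_\lambda(x')\le(1-\Theta(\epsilon))b_\lambda$. \emph{(2) Independent rounding.} Draw the random ideal $\hat X$ from $x'$ as in the definition of the multilinear extension; by Lemma~\ref{lem:support-forms-an-ideal} the support of any point of $K(\lat)$ is an ideal whose non-maximal coordinates are $1$, so this rounding always returns an ideal, and $\E[f(\hat X)]=F(x')$. \emph{(3) Restore knapsack feasibility.} Since each element big in dimension $\lambda$ has $c_\lambda$-cost exceeding $\epsilon^4 b_\lambda$, a feasible ideal contains fewer than $\epsilon^{-4}$ big-in-$\lambda$ elements; as $|\Lambda|$ and $\epsilon$ are constants, the algorithm can enumerate over polynomially many candidates the big elements of $\OPT$, fix them, and run steps (1)--(2) only on the small elements with the residual budgets. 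For the small part every selected element contributes at most $\epsilon^4 b_\lambda$ to $C_\lambda(\hat X)$, whose expectation is at most $(1-\Theta(\epsilon))b_\lambda$, so a Hoeffding bound and a union bound over the $O(1)$ dimensions give that all knapsack constraints hold with probability $1-o(1)$; otherwise we repair by deleting a bounded number of elements that are \emph{maximal} in the current ideal (keeping it an ideal), and in the worst case fall back to the guessed big part alone. Because every profit in the instance is at most $M=\max_{p\in P}f(I_p)$, each failure event or deletion contributes only $O(M)$ to the expected loss, and the accounting over dimensions yields the additive $O(|\Lambda|\epsilon^{-3}M)$.

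Multiplying the losses --- $(1-\Theta(\epsilon))$ from the continuous greedy, $(1-\Theta(\epsilon))$ from the slack step, and $(1-o(1))$ from the success probability of the rounding --- and using $F(x^*)\ge f(\OPT)$ gives
\[
  \E[f(\hat D)]\ \ge\ (1-\Theta(\epsilon))(1-e^{-1})f(\OPT)-|\Lambda|\epsilon^{-3}M,
\]
while feasibility is guaranteed by step~(3).

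The main obstacle, relative to the set case of~\cite{kulik2009maximizing}, is the interaction between the rounding and the downward-closure (ideal) constraint. Three points need care: the slack step must travel along the uniform linear motion, which is precisely where order-consistency re-enters the cost estimate; the cleanup step may delete only elements that are currently maximal, so one must argue that a bounded number of such deletions repairs each violated knapsack while touching only low-profit (big) elements; and the correctness of independent rounding as a distribution over ideals rests on Lemma~\ref{lem:support-forms-an-ideal}. Once these structural points are handled, the probabilistic estimates and the bookkeeping of losses are essentially those of the Euclidean argument.
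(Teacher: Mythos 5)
Your overall architecture matches the paper's: guess the big elements, run the continuous greedy on the cost-residual problem over the small elements, shrink along the uniform linear motion $\ulm{\bot}{x}$ (correctly noting that Euclidean scaling would leave $K(\lat)$ and that order-consistency makes $C_\lambda$ convex along the motion), round independently to a random ideal, and repair. However, there are two genuine gaps at exactly the delicate points. First, the accounting for the discard/failure event is wrong as stated: you claim ``each failure event \dots contributes only $O(M)$ to the expected loss,'' but when the concentration bound fails the algorithm throws away $\hat D$ entirely, so the loss in expectation is $\E[f(\hat D)\mid \text{bad}]\cdot\Prob[\text{bad}]$, and there is no a priori bound of $O(M)$ on $f(\hat D)$ conditioned on the bad event. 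The paper controls this by proving a tail bound on the overflow ratio $\hat R$ (Lemma~\ref{lem:kulik2.3}) together with the partition lemma (Lemma~\ref{lem:kulik2.4}), which shows $f(\hat D)\le 2|\Lambda| l\, f(\OPT)$ whenever $\hat R\le l$ by splitting $\hat D$ into $2l|\Lambda|$ knapsack-feasible pieces and converting each piece into a feasible \emph{ideal} via a push-down argument that itself needs DR-submodularity and order-consistency. This is the step where the lattice structure genuinely bites, and your sketch omits it.

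Second, your repair step ``delete a bounded number of elements that are maximal in the current ideal'' does not yield the claimed $|\Lambda|\epsilon^{-3}M$ additive loss: the big elements whose removal is needed to restore feasibility need not be maximal in $\hat D$, so deleting only maximal elements forces you to first strip away all (possibly many, possibly valuable) small elements sitting above them, and the per-knapsack count of $\epsilon^{-3}$ removals no longer bounds the value lost. The paper instead removes each offending big element together with the relevant part of its principal ideal (which may break idealness) and then re-inserts the orphaned small elements at lower admissible positions (Algorithm~\ref{alg:pushdown}); DR-submodularity shows the re-inserted elements retain at least their old marginal values and order-consistency shows the costs do not increase. Without these two ingredients the multiplicative $(1-\Theta(\epsilon))$ factor and the additive $|\Lambda|\epsilon^{-3}M$ term are not justified.
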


\paragraph{Overview of the algorithm}
Let us explain the overview of the rounding (Algorithm~\ref{alg:rounding}) of the fractional solution $x \in K(\lat)$ given by  Algorithm~\ref{alg:monotone-test}.
The difficulty in rounding is that the random ideal $\hat X$ generated from $x \in K(\lat)$ might violate the constraint even though $x$ is feasible.
To avoid the difficulty, we enumerate subsets $T$ of the big elements (step 1) and then obtain $x \in K(\lat)$ by solving the cost residual problem (step 2) with respect to $T$.
Since $p \in \hat X$ is small, the variance of $c_\lambda(\hat X)$ is small.
Owing to the small variance, random ideal $\hat D := T \cup \hat X^{1-\epsilon}$, where $\hat X^{1 - \epsilon}$ is generated from truncated $x^{1-\epsilon} = \ulm{\bot}{x}(1-\epsilon)$, satisfies the constraint with high probability (from step 3 to step 5; Lemma~\ref{lem:kulik2.2}) if $T$ do not occupy the most of the knapsack.
Also, we can guarantee that $\E[f(\hat D)]$ is sufficiently large (Lemma~\ref{lem:kulik2.1} and Lemma~\ref{lem:bound-step-5}).
If $T$ occupies the most of the knapsack, we can make $\hat D$ feasible by
removing big elements from $\hat D$ until $\hat D$ becomes feasible (step 6).
The additive error $|\Lambda|\epsilon^3M$ comes from this removal procedure.
When big elements are removed from $\hat D$, the resulting set might not be an ideal.
Thus, we convert this set to an ideal by Algorithm~\ref{alg:pushdown} (step 7).
We can guarantee that the solution is kept feasible and does not become worse so much by these modifications (see the proof of Theorem~\ref{thm:rounding_knapsack} below).


\begin{algorithm}[tb]
\caption{Rounding algorithm $\mathcal{A}$}
\label{alg:rounding}
\begin{algorithmic}[1]
    \State{Consider all feasible solutions of the form $T = \bigcup_p I_{p}$, where $p$ runs over some subset of the big elements.
        For each $T$, we do the followings steps.
        Let $T_\lambda' \subseteq T$ be the set of the element that is big in dimention $\lambda$.
        Let $T_\lambda = \bigcup_{p \in T_\lambda'} I_p \subseteq T$.}
    \State{Apply Algorithm~\ref{alg:monotone-test} to the cost residual problem with respect to $T$. Let $\bar{x}$ be the solution and $P'$ be the underlying poset of the cost residual problem.}
    \State{Let $\bar{x}^{1-\epsilon}  = \ulm{\perp}{\bar{x}}^{P'}(1 - \epsilon)$ and $x_T^{1-\epsilon} = \ulm{\perp}{x_T}^{P}(1 - \epsilon)$, where $x_T$ is the integer point in $K(\lat(P))$ corresponding to $T$.
        Let $\hat D_1$ be a random ideal generated from $\bar{x}^{1-\epsilon} \in K(I(P'))$ and $\hat D_2$ from $x_T^{1-\epsilon} \in K(I(P))$.} 
    \State{Let $c_\lambda^g = c_\lambda(T_\lambda)$ and $\bar{b}_\lambda = b_\lambda - c_\lambda^g$.}
    \State{If one of the following holds for some $\lambda$, then let $\hat D = \emptyset$ and skip to the step 8.
        \begin{itemize}
            \item $\bar{b_\lambda} > \epsilon b_\lambda$ and $c_\lambda(\hat D) > b_\lambda$.
            \item $\bar{b_\lambda} \le \epsilon b_\lambda$ and $c_\lambda(\hat D \setminus T_\lambda) > \epsilon b_\lambda + \bar{b}_\lambda$.
        \end{itemize}
        }
    \State{For all $\lambda$ such that $\bar{b}_\lambda \le \epsilon b_\lambda$, remove the element of $\hat D$ from $\hat D_2$ until $c_\lambda(\hat D_1 \cup \hat D_2) \le b_\lambda$ for all $\lambda$ as follows.
    \begin{itemize}
        \item Choose arbitrary element $p \in T_\lambda'$, and let $T_\lambda' \gets T_\lambda' \setminus \{p\}$ and $T_\lambda \gets \bigcup_{\alpha \in T_\lambda' \setminus \{p\}} I_\alpha$. 
        \item  $\hat D_2 \gets \bigcup_\lambda T_\lambda$.
    \end{itemize}
    }
    \State{Convert $\hat D$ into an ideal by Algorithm~\ref{alg:pushdown}.}
    \State{Return $\hat D$ with maximum $f(\hat D)$ with respect to the iteration on $T$.}
\end{algorithmic}
\end{algorithm}

\begin{algorithm}[tb]
\caption{Push down}
\label{alg:pushdown}
\hspace*{\algorithmicindent} \textbf{Input}: A set $\hat D_1 \subseteq P$ and an ideal $\hat D_2 \subseteq P$ with $\hat D_1 \cap \hat D_2 = \emptyset$.
\begin{algorithmic}[1]
\State{Take some topological order $\hat D_1 = \{p_1, p_2, \dots, p_{|\hat D_1|}\}$  (A larger element appears later).}
\State{$\hat D \gets \hat D_2$.}
\For{$t = 1,2,\dots, |\hat D_1|$:}
\State{Take some $p'_{t} \le p_{t}$ with $p'_{t} \in \adm{\hat D}$.}
\State{$\hat D \gets\hat  D \cup \{p'_t\}$.}
\EndFor
\State{\Return $\hat D$.}
\end{algorithmic}
\end{algorithm}

In the following, we prove Theorem~\ref{thm:rounding_knapsack}.
It suffices to prove the approximation guarantee of $F$ at the iteration when $T$ contains all big elements in $\OPT$ since
the algorithm takes the maximum at the output step.
Let $\hat D^{(t)}$ be $\hat D$ after step $t$.
We define $\hat D^{(t)}_1$ and $\hat D^{(t)}_2$ in the same way.
\begin{lemma}
\label{lem:kulik2.1}
The set $\hat D^{(3)}$ is an ideal and
\begin{align}
    \E[f(\hat D^{(3)})] \ge (1 - \Theta(\epsilon)) (1 - e^{-1})f(\OPT).
\end{align}
\end{lemma}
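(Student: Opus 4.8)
\textbf{Proof proposal.}
The plan is to fix, once and for all, the iteration of Algorithm~\ref{alg:rounding} in which $T$ contains every big element of $\OPT$; since $\OPT$ is an ideal and $T$ is the ideal generated by those big elements, $T \subseteq \OPT$, an inclusion that will be used several times. Write $\hat D^{(3)} = \hat D_1 \cup \hat D_2$, where $\hat D_1 \subseteq P'$ is the random ideal drawn from $\bar x^{1-\epsilon}$ and $\hat D_2 \subseteq T$ the random ideal drawn from $x_T^{1-\epsilon}$, with $P' \cap T = \emptyset$ and the two sets drawn independently; we may assume $f(\emptyset)=0$ after shifting $f$.

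\textbf{Step 1: $\hat D^{(3)}$ is an ideal.} By Lemma~\ref{lem:support-forms-an-ideal}, $\mathrm{supp}(\bar x^{1-\epsilon})$ is an ideal of $P'$ and $\mathrm{supp}(x_T^{1-\epsilon})$ an ideal of $P$, so downward closure of $\hat D_1 \cup \hat D_2$ in $P$ reduces to the following: if $p$ carries positive mass under $\bar x^{1-\epsilon}$ and $p' <_P p$ lies in $T$, then $p' \in \hat D_2$ almost surely. Order-consistency forces this: $p \in P'$ is small, so $c_\lambda(p') \le c_\lambda(p) \le \epsilon^4 b_\lambda$ for every $\lambda$, hence $p'$ is small, hence $p'$ is not one of the big generators of $T$ and is therefore non-maximal in $T$; combined with the way the uniform linear motion $\ulm{\perp}{x_T}^{P}$ fills up $T$ toward the integral point $x_T$, the coordinate of $x_T^{1-\epsilon}$ on such a $p'$ is $1$, so $p'\in\hat D_2$. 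Consequently $\hat D_1 \cup \hat D_2$ is an ideal, and moreover it is distributed exactly as the random ideal generated from the point $z \in K(\lat)$ that equals $\bar x^{1-\epsilon}$ on $P'$, equals $x_T^{1-\epsilon}$ on $T$, and is $0$ elsewhere; in particular $\E[f(\hat D^{(3)})] = F(z)$.

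\textbf{Step 2: lower bound.} Decompose $f(\hat D_1 \cup \hat D_2) = f(\hat D_2) + \bigl(f(\hat D_1 \cup \hat D_2) - f(\hat D_2)\bigr)$ and take expectations. For the first term, $\E[f(\hat D_2)] = F(x_T^{1-\epsilon}) = F\bigl(\ulm{\perp}{x_T}^{P}(1-\epsilon)\bigr) \ge (1-\epsilon) F(x_T) = (1-\epsilon) f(T)$, using concavity of $t \mapsto F(\ulm{\perp}{x_T}^{P}(t))$ from Theorem~\ref{thm:multilinear-concave} together with $F(\perp)=0$. For the second term, both $\hat D_1 \cup \hat D_2$ and $T \cup \hat D_1$ are ideals (the latter again because order-consistency rules out big elements of $P\setminus T$ below $\hat D_1$), and $\hat D_2 \subseteq T$, so a telescoping application of DR-submodularity gives $f(\hat D_1 \cup \hat D_2) - f(\hat D_2) \ge f(T\cup\hat D_1) - f(T) = f_T(\hat D_1)$ pointwise; taking expectations, $\E[f_T(\hat D_1)] = F_T(\bar x^{1-\epsilon}) \ge (1-\epsilon) F_T(\bar x)$, where $F_T$ is the multilinear extension of the cost-residual objective. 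By Theorem~\ref{thm:conti-greedy-for-knapsack} applied to the cost residual problem, $F_T(\bar x) \ge (1 - e^{-1} - \Theta(\epsilon))\,\OPT'$, and $\OPT' \ge f_T(\OPT\setminus T) = f(\OPT) - f(T)$, because $\OPT\setminus T$ is an ideal of $P'$ (order-consistency) that is residual-feasible: $c_\lambda(\OPT\setminus T) = c_\lambda(\OPT) - c_\lambda(T) \le b_\lambda - c_\lambda(T)$, using $T\subseteq\OPT$. Combining, with $\alpha := f(T)/f(\OPT)\in[0,1]$,
\[
\E[f(\hat D^{(3)})] \;\ge\; (1-\epsilon) f(T) + (1-\epsilon)\bigl(1 - e^{-1} - \Theta(\epsilon)\bigr)\bigl(f(\OPT) - f(T)\bigr) \;\ge\; (1-\epsilon)\bigl(1 - e^{-1} - \Theta(\epsilon)\bigr) f(\OPT),
\]
using $\alpha \ge (1-e^{-1})\alpha$, which is $(1-\Theta(\epsilon))(1-e^{-1}) f(\OPT)$ as claimed.

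\textbf{Main obstacle.} Everything after Step~1 is a routine combination of monotonicity, DR-submodularity, the concavity theorem (Theorem~\ref{thm:multilinear-concave}), and the continuous-greedy guarantee (Theorem~\ref{thm:conti-greedy-for-knapsack}). The delicate part is Step~1: $\hat D_1$ and $\hat D_2$ are built from two \emph{independently} truncated uniform linear motions over two \emph{different} posets, and one must rule out that a $P'$-element carrying positive mass sits above a $T$-element that is absent from $\hat D_2$. The proposal above isolates the crux---order-consistency makes such $T$-predecessors small, hence non-generators of $T$---but upgrading "non-maximal in $T$'' to "coordinate $1$ in $x_T^{1-\epsilon}$'' requires a careful analysis of how $\ulm{\perp}{x_T}^{P}$ propagates along $T$, and this is the step I would spend the most care on.
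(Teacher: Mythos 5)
Your proposal is correct and follows essentially the same route as the paper: the same decomposition $\E[f(\hat D)] \ge \E[f(\hat D_2)] + \E[f_T(\hat D_1)]$, the same use of Theorem~\ref{thm:multilinear-concave} to absorb the $(1-\epsilon)$-truncations, and Theorem~\ref{thm:conti-greedy-for-knapsack} applied to the cost residual problem with $\OPT\setminus T$ as a feasible witness. Your Step~1 argument for the ideal property is the contrapositive of the paper's (elements above the big maximal elements of $T$ are big, hence absent from $P'$), and the "delicate" point you flag about $x_T^{1-\epsilon}$ having coordinate $1$ on non-maximal elements of $T$ is exactly what the paper asserts for sufficiently small $\epsilon$ via Lemma~\ref{lem:support-forms-an-ideal}.
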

\begin{proof}
To simplify the notation, we omit the superscript $(3)$ in this proof.
We first show that the set $\hat D$ is an ideal.
Since $\epsilon$ is sufficiently small, $T / \hat D_2$ consists of maximal elements of $T$.
Therefore, $\hat D_1$ does not contain elements greater than $T \setminus \hat D_2$ since such elements are big.
Hence, $\hat D$ is an ideal.

We next prove the inequality.
The DR-submodularity of $f$ and $\hat D_2 \subseteq T$ imply that
\begin{align}
    \E[f(\hat D)] = \E[f(\hat D_2) + f_{\hat D_2}(\hat D_1)] \ge \E[f(\hat D_2)] + \E[f_T(\hat D_1)].
\end{align}
Let $F_T$ be the multilinear extension of $f_T$ over $K(I(P'))$l
By the concavity of the multilinear extension $F$ along positive directions (Theorem~\ref{thm:multilinear-concave}), we have
\begin{align}
    \E[f(\hat D_2)] = F(x_T^{1 - \epsilon}) \ge (1 - \epsilon) F(x_T),
\end{align}
and
\begin{align}
    \E[f_T(\hat D_1)] = F_T(\bar{x}^{1-\epsilon}) \ge (1 - \epsilon) F_T(\bar{x}).
\end{align}
Since $T$ contains all big elements in $\OPT$, the set $\OPT \setminus T$ is an feasible solution for the cost residual problem.
Therefore, Theorem~\ref{thm:conti-greedy-for-knapsack} implies that
\begin{align}
    F_T(\bar{x}) 
    \ge (1 - e^{-1} - \epsilon) f_T(\OPT \setminus T),
\end{align}
where $x$ in the maximization runs over feasible $x \in K(I(p'))$.
By combining these inequalities and inequalities, we have the desired inequality.
\end{proof}

\begin{lemma}
  \label{lem:kulik2.2}
   The ideal $\hat D^{(3)}$ satisfies the conditions of step 5 for some $\lambda$ with probability at most $|\Lambda| \epsilon$.
\end{lemma}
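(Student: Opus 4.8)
The plan is to bound, for each fixed $\lambda \in \Lambda$, the probability that the bad event of step~5 occurs, and then union-bound over the $O(1)$ coordinates $\lambda$. Fix $\lambda$ and split into the two cases of the definition. In both cases the key point is that $\hat D^{(3)}_1$ is a random ideal generated from a fractional point $\bar x^{1-\epsilon}$ supported on \emph{small} elements of $P'$ (small in dimension $\lambda$, in particular), so that $c_\lambda(\hat D_1)$ is a sum of independent bounded random variables $c_\lambda(p)\mathbf 1[p\in\hat D_1]$ with each summand at most $\epsilon^4 b_\lambda$. First I would record the elementary mean bound: $\E[c_\lambda(\hat D_1)] = C_\lambda(\bar x^{1-\epsilon}) \le (1-\epsilon) C_\lambda(\bar x) \le (1-\epsilon)\bar b_\lambda$, using that $C_\lambda$ is linear, that $\bar x^{1-\epsilon}=\ulm{\perp}{\bar x}^{P'}(1-\epsilon)$ scales every coordinate by $(1-\epsilon)$ along the uniform linear motion from $\bot$, and that $\bar x$ is feasible for the cost residual problem, whose budget in dimension $\lambda$ is exactly $\bar b_\lambda = b_\lambda - c_\lambda^g$.

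Next I would apply a concentration inequality for sums of independent $[0,\epsilon^4 b_\lambda]$-valued variables — a Bernstein- or Hoeffding-type bound, or the Chernoff bound in the multiplicative form — to control the deviation of $c_\lambda(\hat D_1)$ above its mean. In the first case ($\bar b_\lambda > \epsilon b_\lambda$), the bad event is $c_\lambda(\hat D^{(3)}) = c_\lambda^g + c_\lambda(\hat D_1) > b_\lambda$, i.e.\ $c_\lambda(\hat D_1) > \bar b_\lambda \ge (1-\epsilon)^{-1}\E[c_\lambda(\hat D_1)]\cdot$(something); since the mean is at most $(1-\epsilon)\bar b_\lambda$ and each increment is at most $\epsilon^4 b_\lambda < \epsilon^3\bar b_\lambda$, a deviation of an additive $\epsilon\bar b_\lambda$ is at least a $1/\epsilon^2$ number of "granularity units" above the mean, which the Chernoff bound drives below $\epsilon/|\Lambda|$. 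In the second case ($\bar b_\lambda \le \epsilon b_\lambda$), the relevant quantity is $c_\lambda(\hat D^{(3)}\setminus T_\lambda)$, which equals $c_\lambda(\hat D_1)$ plus the contribution of the elements of $\hat D_2\setminus T_\lambda$; by construction $\hat D_2$ is generated from $x_T^{1-\epsilon}$, so its contribution to $c_\lambda$ after removing $T_\lambda$ comes only from elements that are small in dimension $\lambda$, and again $\E$ of this piece is $\le (1-\epsilon)\bar b_\lambda \le \epsilon b_\lambda$; the bad event asks it to exceed $\epsilon b_\lambda + \bar b_\lambda$, an additive $\epsilon b_\lambda$ above a mean of at most $\epsilon b_\lambda$, with granularity $\epsilon^4 b_\lambda$, so the same Chernoff estimate applies.

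Assembling: for each $\lambda$ the failure probability is at most $\exp(-\Omega(\epsilon^{-2}))$ — in particular at most $\epsilon/|\Lambda|$ once $\epsilon$ is small enough — and summing over the $|\Lambda|$ coordinates gives the claimed bound $|\Lambda|\epsilon$ (indeed much better, but $|\Lambda|\epsilon$ is all that is needed downstream). The main obstacle I anticipate is purely bookkeeping: being careful about which elements of $\hat D^{(3)}_1$ and $\hat D^{(3)}_2$ actually contribute to $c_\lambda$ after subtracting $T_\lambda$, verifying that all such contributing elements are genuinely small in dimension $\lambda$ (this is where the definitions of $T_\lambda'$, $T_\lambda$, and the cost residual poset $P'$ must be used precisely), and choosing the concentration inequality whose variance/range parameters match the $\epsilon^4 b_\lambda$ granularity so that the exponent comes out as a negative power of $\epsilon$. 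No new idea beyond independence plus the $\epsilon^4$-smallness threshold should be required.
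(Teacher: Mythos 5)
Your overall strategy coincides with the paper's: fix $\lambda$, write the random part of $c_\lambda(\hat D^{(3)})$ (after subtracting the contribution of $T_\lambda$) as a sum of independent terms $c_\lambda(p)\hat X_p$ over elements that are small in dimension $\lambda$ (those of the cost-residual poset $P'$ together with those of $T\setminus T_\lambda$), bound its mean by $(1-\epsilon)\bar b_\lambda$, split into the two cases of step~5, apply a tail bound, and union-bound over $\Lambda$. The only methodological difference is the tail inequality: the paper uses Chebyshev--Cantelli with the variance bound $\V[\hat Z_{\lambda,1}]\le\epsilon^4 b_\lambda\bar b_\lambda$, which already yields the required $\epsilon$ per coordinate, while your Bernstein/Chernoff route exploits the same $\epsilon^4 b_\lambda$ granularity to get an exponentially small bound; your arithmetic (deviation $\epsilon\bar b_\lambda$, resp.\ $\epsilon b_\lambda$, against granularity $\epsilon^4 b_\lambda$, using $\bar b_\lambda>\epsilon b_\lambda$ in the first case) is sound. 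This buys a stronger bound than needed but changes nothing downstream.

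One justification in your plan is incorrect as stated, even though the inequality it supports is true. The uniform linear motion $\ulm{\bot}{\bar x}$ does \emph{not} scale every coordinate by $t$: the point $t\bar x$ generally lies outside $K(\lat)$ (if $\bar x_p>0$ for a non-minimal $p$ then $\bar x_{p'}=1$ for all $p'<p$, and uniform scaling destroys this), and the motion instead traverses the cubes sequentially, saturating lower coordinates before higher ones start to move. The correct route to $C_\lambda(\ulm{\bot}{\bar x}(1-\epsilon))\le(1-\epsilon)C_\lambda(\bar x)$ is that order-consistency makes $c_\lambda$ DR-supermodular, so Theorem~\ref{thm:multilinear-concave} applied to $-c_\lambda$ shows that $t\mapsto C_\lambda(\ulm{\bot}{\bar x}(t))$ is convex and vanishes at $t=0$. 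This is precisely how the paper bounds $\E[\hat Z_{\lambda,1}]$, and it is the one point where the lattice geometry, rather than mere independence, is doing work; with that correction your plan matches the paper's proof.
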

\begin{proof}
It suffices to show that the condition is satisfied with probability at most $\epsilon$ for each $\lambda$.
Let $\hat Z_{\lambda, 1} : = c_\lambda(\hat D) - c_\lambda(T_r \cap \hat D)$ and $\hat Z_{\lambda, 2} = c_\lambda(T_r \cap \hat D)$.
Clearly, $\hat Z_{\lambda,2} \le c_\lambda^g$.
Let $\hat X_p$ be the random variable defined as follows:
let $\hat X_p = 1$ if $p \in \hat D$ and $\hat X_p = 0$ otherwise
for element $p$.
All $\hat X_p$ are independent.
Let $\smallset(\lambda) = P \setminus T_\lambda$.

We calculate the upper bound of the expectation and variance of $\hat Z_{\lambda,1}$ in order to use the Chebyshev-Cantelli bound:
\begin{align}
    \Prob(\hat Z_{\lambda, 1} - \E[\hat Z_{\lambda,1}] \ge t) \le \frac{\V[\hat Z_{\lambda,1}]}{\V[\hat Z_{\lambda,1}] + t^2}.
\end{align}
Notice that
\begin{align}
    \label{eq:lem-e2-decomposition}
    \hat Z_{\lambda,1} = \sum_{p \in \smallset(\lambda)} c_\lambda(p) \hat X_p = \hat Z^{\setminus T}_{\lambda, 1} + \hat Z^T_{\lambda, 1},
\end{align}
where
\begin{align}
    \hat Z_{\lambda,1}^{\setminus T} &= \sum_{p \in \smallset(\lambda) \setminus T} c_{\lambda}(p) \hat X_p,\\
    \hat Z_{\lambda,1}^{T} &= \sum_{p \in T \setminus T_\lambda} c_{\lambda}(p) \hat  X_p.\\
\end{align}
By the order-consistency, $c_{\lambda}$ is a DR-submodular function on $P'$.
Let $C_\lambda^{P'}$ be the multilinear extension of $c_\lambda$ over $K(I(P'))$.
By Theorem~\ref{thm:multilinear-concave},
\begin{align}
    \E[\hat Z_{\lambda,1}^{\setminus T}] = C_\lambda^{P'}(\bar{x}^{1-\epsilon}) &\le (1 - \epsilon) (b_\lambda - 
    c_\lambda(T)
    ),
\end{align}
where we used the fact that $b_\lambda - c_\lambda(T)$ is the knapsack capacity of the cost residual problem and $\bar{x}$ is a feasible solution of it.
Since $c_\lambda'(T) := c_\lambda(T \setminus T_\lambda)$ is also a DR-supermodular function on $I(T)$, Therorem~\ref{thm:multilinear-concave} implies that
\begin{align}
    \E[\hat Z_{\lambda,1}^{T}] = C_\lambda'(x_T^{1-\epsilon}) &\le (1 - \epsilon) C_\lambda'(x_T) = (1 - \epsilon )c_\lambda(T \setminus T_\lambda)
\end{align}
where $C_\lambda'$ is the multilinear extension of $c_\lambda'$ over $K(\lat)$.
Hence, 
\begin{align}
\E[\hat Z_{\lambda,1}] \le (1 - \epsilon)(b_\lambda - c_\lambda(T) + c_\lambda(T \setminus T_\lambda)) = (1 -\epsilon)\bar{b}_\lambda.
\end{align}

By (\ref{eq:lem-e2-decomposition}) and $c_{\lambda}(p) < \epsilon^4 b_\lambda$ for $p \in \smallset(\lambda)$, we have
\begin{align} 
    \V[\hat Z_{\lambda, 1}] = \sum_{p \in \smallset(\lambda)} \E[c_{\lambda}(p)^2 \hat X_p^2]
     \le  \sum_{p \in \smallset(\lambda)} \E[c_{\lambda}(p) \hat X_p] \cdot c_{\lambda}(p) \le \epsilon^4 b_\lambda \bar{b}_\lambda.
\end{align}
Here, we used the independence of $\hat X_p$, the bound $|\hat X_p| \le 1$, and
\begin{align}
\sum_{p \in \smallset(\lambda)} \E[c_{\lambda}(p) \hat X_p] \le C_\lambda^{P'}(\bar{x}) + c_\lambda(T \setminus T_\lambda) \le \bar{b}_\lambda.
\end{align}

We show the statement by using the Chebyshev-Cantelli bound.
If $\bar{b}_\lambda > \epsilon b_\lambda$, then the event $c_\lambda(\hat D) \ge b_\lambda$ is included by $\hat Z_{\lambda,1} - \E[\hat Z_{\lambda,1}] \ge \epsilon \bar{b}_\lambda$.
Therefore, the Chebyshev-Cantelli bound implies that
\begin{align}
    \Prob[c_\lambda(\hat D) \ge b_\lambda] \le \Prob[(\hat Z_{\lambda,1} - \E[\hat Z_{\lambda,1}] \ge \epsilon \bar{b}_\lambda)] &\le \frac{\V[\hat Z_{\lambda,1}]}{\V[\hat Z_{\lambda,1}] + \epsilon^2 \bar{b}_\lambda^2}
    \le \frac{\epsilon^4 b_\lambda \bar{b}_\lambda}{\epsilon^2 \bar{b}_\lambda^2} \le \epsilon.
\end{align}
In the last inequality, we used the assumption $\bar{b}_\lambda > \epsilon b_\lambda$.
Otherwise, $\bar{b}_\lambda \le \epsilon b_\lambda$.
By a similar argument, we have
\begin{align}
    \Prob[c_\lambda(\hat D \setminus T_\lambda) > \epsilon b_\lambda + \bar{b}_\lambda] \le \Prob[\hat Z_{\lambda,1} - \E[\hat Z_{\lambda,1}] > \epsilon b_\lambda] \le \frac{\epsilon^4 b_\lambda \bar{b}_\lambda}{\epsilon^2 b_\lambda^2} \le \epsilon^3 \le \epsilon.
\end{align}
In any cases, the probability that the condition of the step 5  holds is at most  $\epsilon$ for each $\lambda$.
\end{proof}

Let
\begin{align}
    &\hat R_\lambda = \frac{c(\hat D^{(3)})}{b_\lambda}\\
    &\hat R = \max_\lambda \hat R_\lambda.
\end{align}

\begin{lemma}
\label{lem:kulik2.3}
For any $l > 1$,
\begin{align}
    \Prob[\hat R > l] < \frac{|\Lambda|\epsilon^4}{(l-1)^2}.
\end{align}
\end{lemma}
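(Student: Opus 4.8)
The plan is to recycle the first- and second-moment estimates obtained in the proof of Lemma~\ref{lem:kulik2.2} and combine them with the Chebyshev--Cantelli inequality and a union bound over $\Lambda$.

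First I would fix $\lambda \in \Lambda$ and recall the decomposition $c_\lambda(\hat D^{(3)}) = \hat Z_{\lambda,1} + \hat Z_{\lambda,2}$, where $\hat Z_{\lambda,1} = \sum_{p \in \smallset(\lambda)} c_\lambda(p)\hat X_p$ and $\hat Z_{\lambda,2} = c_\lambda(T_\lambda \cap \hat D^{(3)})$. As observed there, $\hat Z_{\lambda,2} \le c_\lambda^g = b_\lambda - \bar b_\lambda$, and moreover $\E[\hat Z_{\lambda,1}] \le (1-\epsilon)\bar b_\lambda$ and $\V[\hat Z_{\lambda,1}] \le \epsilon^4 b_\lambda \bar b_\lambda$ (note also $0 \le \bar b_\lambda \le b_\lambda$ because $T$ is feasible and $T_\lambda \subseteq T$). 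Consequently, on the event $\{\hat R_\lambda > l\}$ one has $\hat Z_{\lambda,1} > l b_\lambda - \hat Z_{\lambda,2} \ge l b_\lambda - (b_\lambda - \bar b_\lambda) = (l-1)b_\lambda + \bar b_\lambda$, whence $\hat Z_{\lambda,1} - \E[\hat Z_{\lambda,1}] > (l-1)b_\lambda + \bar b_\lambda - (1-\epsilon)\bar b_\lambda \ge (l-1)b_\lambda$.

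Next I would apply the Chebyshev--Cantelli inequality with threshold $t = (l-1)b_\lambda > 0$, obtaining
\begin{align}
\Prob[\hat R_\lambda > l] \le \Prob\!\left[\hat Z_{\lambda,1} - \E[\hat Z_{\lambda,1}] > (l-1)b_\lambda\right] \le \frac{\V[\hat Z_{\lambda,1}]}{\V[\hat Z_{\lambda,1}] + (l-1)^2 b_\lambda^2} < \frac{\epsilon^4 b_\lambda \bar b_\lambda}{(l-1)^2 b_\lambda^2} \le \frac{\epsilon^4}{(l-1)^2},
\end{align}
where the strict inequality uses $\V/(\V + s) < \V/s$ for $s > 0$ (the bound is trivially strict when $\V[\hat Z_{\lambda,1}] = 0$) and the last step uses $\bar b_\lambda \le b_\lambda$. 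A union bound over $\lambda \in \Lambda$ then gives $\Prob[\hat R > l] \le \sum_{\lambda \in \Lambda}\Prob[\hat R_\lambda > l] < |\Lambda|\epsilon^4/(l-1)^2$, as claimed.

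I do not expect a real obstacle: this is essentially bookkeeping on top of the mean/variance bounds from Lemma~\ref{lem:kulik2.2}. The only points requiring a little care are correctly accounting for the contribution $\hat Z_{\lambda,2}$ of the big elements of $T$ retained in $\hat D^{(3)}$ (so that the deviation threshold comes out at least $(l-1)b_\lambda$ rather than something weaker), and extracting the strict inequality from Chebyshev--Cantelli.
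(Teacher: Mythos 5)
Your proposal is correct and follows essentially the same route as the paper's own proof: decompose $c_\lambda(\hat D^{(3)})$ into $\hat Z_{\lambda,1}+\hat Z_{\lambda,2}$, bound $\hat Z_{\lambda,2}\le c_\lambda^g$ so that the deviation threshold becomes $(l-1)b_\lambda$, apply Chebyshev--Cantelli with the mean and variance bounds from Lemma~\ref{lem:kulik2.2}, and finish with a union bound over $\Lambda$. Your handling of the strict inequality is in fact slightly more careful than the paper's.
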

\begin{proof}
The same proof as \cite[Lemma 2.3]{kulik2009maximizing} holds. (With the evaluation of the expectation and the variance of $\hat Z_{\lambda, 1}$ in the previous lemma).
For the sake of completeness, we give a proof here.

We use the same notation as Lemma \ref{lem:kulik2.2}.
By Chebyshev-Cantelli inequality, we have
\begin{align}
    \Prob[\hat R_\lambda > l] &\le \Prob[\hat Z_{\lambda, 1} > l b_\lambda - c_\lambda^g]\\
    &\le \Prob[\hat Z_{\lambda, 1} - \E[\hat Z_{\lambda, 1}] \ge (l-1) b_\lambda])\\
    &\le \frac{\epsilon^4 b_\lambda \bar{b}_\lambda}{(l-1)^2 b_\lambda^2} \le \frac{\epsilon^4}{(l-1)^2}.
\end{align}
By the union bound, we have the statement.
\end{proof}

\begin{lemma}
\label{lem:kulik2.4}
For any integer $l \ge 2$, if $\hat R \le l$, then
\begin{align}
    f(\hat D^{(3)}) \le 2|\Lambda|l f(\OPT).
\end{align}
\end{lemma}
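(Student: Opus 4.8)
I want to bound $f(\hat D^{(3)})$ by relating it to $f(\OPT)$ through the multiple knapsack constraints. The set $\hat D^{(3)} = T \cup \hat D_1$ is an ideal (by Lemma~\ref{lem:kulik2.1}) whose cost in every dimension is controlled: the hypothesis $\hat R \le l$ says $c_\lambda(\hat D^{(3)}) \le l\, b_\lambda$ for every $\lambda \in \Lambda$. The idea is to decompose $\hat D^{(3)}$ into $O(|\Lambda| l)$ pieces, each of which is feasible for the original problem (cost at most $b_\lambda$ in every dimension), and then use monotonicity together with subadditivity of $f$ over such a decomposition to conclude that $f(\hat D^{(3)})$ is at most $O(|\Lambda| l)$ times the optimum.

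\textbf{Key steps.} First I would greedily partition the elements of $\hat D^{(3)}$: process the elements in a topological order (so that each prefix is an ideal, or more carefully each part can be completed to an ideal), and open a new part whenever adding the current element would push the $\lambda$-cost of the current part above $b_\lambda$ for some $\lambda$. Since each part has $\lambda$-cost at most $b_\lambda$ by construction once we also account for one ``overflow'' element per dimension, and since the total $\lambda$-cost is at most $l\, b_\lambda$, a counting argument shows the number of parts is $O(|\Lambda| l)$ — roughly, each time we close a part, at least one of the $|\Lambda|$ budgets has been (nearly) exhausted since the last closing, so there can be at most about $|\Lambda| l$ closings, giving the factor $2|\Lambda| l$ after absorbing the off-by-one from the overflow element (here one uses that every singleton $\{p\}$ is itself feasible, which holds because $c_\lambda(\{p\}) \le b_\lambda$ for any element that can appear — small or big with $I_p$ feasible). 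Second, I would make each part into an ideal: replacing a part $A$ by the ideal $\bigcup_{p \in A} I_p$ only helps the value by monotonicity, but one must check it does not blow up the cost beyond feasibility — this is where order-consistency of $c_\lambda$ enters, since the downward closure of a low-cost antichain has low cost. Third, for ideals $A_1, \dots, A_N$ with $\bigcup_i A_i = \hat D^{(3)}$, DR-submodularity (more precisely, the marginal-value telescoping that submodularity/DR-submodularity gives) yields $f(\hat D^{(3)}) \le \sum_{i} f(A_i) \le N \cdot f(\OPT)$ since each $A_i$ is feasible. Combining with $N \le 2|\Lambda| l$ gives the claim.

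\textbf{Main obstacle.} The delicate point is the partitioning argument: I need to ensure simultaneously that (a) the number of parts is genuinely $O(|\Lambda| l)$ and not larger, (b) each part, after being closed under $I_{(\cdot)}$, is feasible in \emph{every} dimension, and (c) the parts together cover $\hat D^{(3)}$. Getting (a) and (b) to be compatible requires care about the ``overflow'' element that triggers a new part — that element alone might be big in some dimension, but feasibility of $I_p$ for every $p$ appearing in $\hat D^{(3)}$ (which holds since $\hat D^{(3)}$ is built from feasible data) saves the day. The factor $2$ in the statement is exactly the slack needed to absorb these boundary effects, so I expect the proof to be a clean but slightly fiddly bookkeeping argument rather than anything conceptually hard; the substantive inputs are monotonicity, order-consistency, and the subadditivity-over-a-cover consequence of DR-submodularity, all of which are available from earlier in the paper.
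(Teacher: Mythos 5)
Your overall architecture matches the paper's: partition $\hat D^{(3)}$ into $2l|\Lambda|$ parts each satisfying every knapsack constraint (your greedy first-fit split and the paper's pigeonhole-merging argument both deliver this bound), then telescope the marginals of $f$ along a topological order, convert each part into a feasible ideal, and invoke $f(\cdot)\le f(\OPT)$ on each. However, your second step contains a genuine gap. You propose to turn a part $A$ into an ideal by taking its downward closure $\bigcup_{p\in A} I_p$ and claim that order-consistency keeps this feasible because ``the downward closure of a low-cost antichain has low cost.'' That is false: order-consistency only says $c_\lambda(p')\le c_\lambda(p)$ for each individual $p'\le p$, so $c_\lambda(I_p)$ can be as large as $|I_p|\cdot c_\lambda(p)$. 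For a chain $p_1<\dots<p_n$ with $c_\lambda(p_i)=1$ and $b_\lambda=1$, the part $\{p_n\}$ is feasible but $I_{p_n}$ has cost $n$. Once feasibility of the completed parts fails, the inequality $f(A_i)\le f(\OPT)$ — the whole point of the decomposition — is unavailable, and the argument collapses. (A secondary soft spot: your claimed subadditivity $f(\hat D^{(3)})\le\sum_i f(A_i)$ over the downward closures is also not justified as stated; the paper derives the analogous bound only through the per-element marginal comparison described below.)

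The paper's fix is to \emph{not} take the downward closure. Instead, it processes the elements $p_{l_1^i},p_{l_2^i},\dots$ of each part $S_i$ in topological order and, for each, inserts a single substitute $p'_{l_t^i}\le p_{l_t^i}$ that is admissible to the ideal built so far (Algorithm~\ref{alg:pushdown}). This keeps $|S_i'|=|S_i|$, so order-consistency applies element-by-element to give $c_\lambda(S_i')\le c_\lambda(S_i)\le b_\lambda$, i.e.\ genuine feasibility; and DR-submodularity gives $f_{S'_{i,(t-1)}}(\{p'_{l_t^i}\})\ge f_{X_{l_t^i}}(\{p_{l_t^i}\})$ because $S'_{i,(t-1)}\subseteq X_{l_t^i}$ and $p'_{l_t^i}\le p_{l_t^i}$, which after summation yields $f(S_i')\ge\sum_t f_{X_{l_t^i}}(\{p_{l_t^i}\})$ and hence $f(\hat D^{(3)})\le\sum_i f(S_i')\le 2l|\Lambda|f(\OPT)$. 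Your partitioning step and the final counting are fine; it is specifically the ideal-completion step that must be replaced by this one-for-one push-down substitution.
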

\begin{proof}
In this proof, we omit the superscript $(3)$ to simplify the notation.
We can partition $\hat D$ into $2l|\Lambda|$ subsets $S_1, S_2, \dots, S_{2l|\Lambda|}$ such that $c_\lambda(S_i) \le b_\lambda$ for all $i$ and $\lambda$ by the following reason.
Take some arbitrary partition $\mathcal{S}$ of $\hat D$.
If $|\mathcal{S}| > 2 l |\Lambda|$, then there exists at least two $S_1,S_2 \in \mathcal{S}$ with $c_\lambda(S_i) \le b_\lambda / 2$ for all $i = 1,2$ and $\lambda \in \Lambda$.
Since $c_\lambda(S_1 \cup S_2) \le b_\lambda$ for all $\lambda \in \Lambda$, we can reduce $|\mathcal{S}|$ by taking new partition $\mathcal{S}' = \mathcal{S} \cup \{S_1 \cup S_2\} \setminus \{S_1, S_2\}$.
By continuing this procedure, we have a desired partition of $\hat D$.

We convert $S_i$ into a feasible ideal $S_i'$.
We take a topological order $\hat D = \{p_1, p_2, \dots, p_\alpha\}$ (a larger element appears later).
Hence, $\{p_1, p_2, \dots, p_l\}$ is an ideal for $l \le \alpha$.
Let  $X_{k} = \{p_1, p_2, \dots, p_{k-1}\}$ and $S_i = \{p_{l_1^i}, p_{l_2^i}, \dots, p_{l^i_{|S_i|}}\}$ where $l_1^i < l^i_2 < \dots < l^i_{|S_i|}$.
Then,
\begin{align}
    f(\hat D) \le \sum_{i=1}^{2l|\Lambda|} \sum_{t = 1}^{|S_i|} f_{X_{l_t^i}}(\{p_{l_t^i}\}).
\end{align}
We convert $S_i$ into an ideal $S'_i$ by the following procedure (cf. Algorithm~\ref{alg:pushdown}).
\begin{enumerate}
    \item $S'_{i,(0)} \gets \emptyset$.
    \item Iterate the following procedure for $t =  1, 2,\dots, |S_i|$.
    \begin{enumerate}
        \item Take arbitrary $p'_{l_t^i} \le p_{l_t^i}$ with $s'_t \in \adm{S'_{i,(t-1)}}$.
        \item $S'_{i,(t)} \gets S'_{i,(t-1)} \cup \{p'_{l_t^i}\}$
    \end{enumerate}
    \item $S_i' \gets S'_{i,(|S_i|)}$.
\end{enumerate}

We will see the properties of $S_i'$.
By the construction, $S'_{i, (t)}$ is an ideal for all $t$.
Furthermore, the DR-submodularity of $f$ implies that
\begin{align}
    f_{S'_{i, (t-1)}}(\{p'_{l_t^i}\}) \ge f_{X_{l_t^i}}(\{p_{l_t^i}\}),
\end{align}
for all $t$ since $S'_{i, (t-1)} \subseteq X_{l_t^i}$ by the following reason:
for all $t' < t$, we know that (1) $p_{l^i_{t-1}} \in X_{l^i_t}$, (2) $p'_{l_{t-1}^i} \le p_{l_{t-1}^i}$ and (3) $p'_{l_{t-1}^i} \in X_{l_t^i}$ since $X_{l_t^i}$ is an ideal.
By summing up the above inequality, we have
\begin{align}
    \le f(S'_i) \ge    \sum_{t=1}^{|S_i|} f_{X_{l^i_t}}(\{p_{l^i_t}\}) 
\end{align}
for each $i$.
The order-consistency of $c_\lambda$ implies that $c_{\lambda}(S'_i) \le c_\lambda(S_i) \le b_\lambda$ for all $\lambda$ since $c_\lambda(p_{l_t^i}') < c_\lambda(p_{l_t^i})$.
Hence, $S'_i$ is feasible.
By the optimality of $\OPT$, we have $f(S'_i) \le f(\OPT)$.
These inequalities imply that
\begin{align}
    f(\hat D) \le \sum_{i=1}^{2l|\Lambda|} \sum_{t = 1}^{|S_i|} f_{X_{l^t_i}}(\{p_{l^t_i}\}) \le 2 l|\Lambda|f(\OPT).
\end{align}
\end{proof}

\begin{lemma}
\label{lem:bound-step-5}
\begin{align}
    \E[f(\hat D^{(5)})] \ge (1 - \Theta(\epsilon))(1 - e^{-1})f(\OPT)
\end{align}
\end{lemma}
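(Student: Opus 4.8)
The plan is to compare $\hat D^{(5)}$ with $\hat D^{(3)}$ and observe that nothing happens between step~3 and step~5 except that on the event
\[
B \;=\; \Big\{\text{the conditions of step~5 hold for some }\lambda\Big\}
\]
we reset $\hat D=\emptyset$. Step~4 only introduces the quantities $c_\lambda^g,\bar b_\lambda$ and does not touch $\hat D$, so on $B^c$ we have $\hat D^{(5)}=\hat D^{(3)}$, while on $B$ we have $\hat D^{(5)}=\emptyset$. Using non-negativity of $f$ (so $f(\emptyset)\ge 0$) this gives
\begin{align}
\E[f(\hat D^{(5)})] \;\ge\; \E[f(\hat D^{(3)})] - \E[f(\hat D^{(3)});\, B].
\end{align}
The first term is exactly what Lemma~\ref{lem:kulik2.1} delivers, namely $\E[f(\hat D^{(3)})]\ge(1-\Theta(\epsilon))(1-e^{-1})f(\OPT)$, so the whole lemma reduces to showing that the correction term is only $\Theta(\epsilon)f(\OPT)$, after which it is absorbed into the $\Theta(\epsilon)$.

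To estimate $\E[f(\hat D^{(3)});B]$ I would slice the probability space according to the overflow ratio $\hat R=\max_\lambda c_\lambda(\hat D^{(3)})/b_\lambda$, which is the object controlled by Lemmas~\ref{lem:kulik2.3} and~\ref{lem:kulik2.4}. First record the crude almost-sure bound $\hat R\le 1+\epsilon^4|P|$: the part $\hat D_2^{(3)}\subseteq T$ is feasible and contributes at most $b_\lambda$ in each dimension, while $\hat D_1^{(3)}$ is supported on small elements, each of $c_\lambda$-weight at most $\epsilon^4 b_\lambda$, so it contributes at most $\epsilon^4 b_\lambda|P|$. Then write
\begin{align}
\E[f(\hat D^{(3)});B] \;=\; \sum_{l\ge 1}\E\big[f(\hat D^{(3)});\, B,\ l\le \hat R< l+1\big].
\end{align}
On each slice Lemma~\ref{lem:kulik2.4} bounds $f(\hat D^{(3)})\le 2|\Lambda|(l+1)f(\OPT)$. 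For $l=1$ I would bound the probability by $\Prob[B]\le|\Lambda|\epsilon$ via Lemma~\ref{lem:kulik2.2}, giving an $O(|\Lambda|^2\epsilon)f(\OPT)$ contribution; for $l\ge2$ I would instead use $\Prob[\hat R\ge l]\le |\Lambda|\epsilon^4/(l-1)^2$ from Lemma~\ref{lem:kulik2.3}, so that slice contributes at most $2|\Lambda|^2\epsilon^4 (l+1)/(l-1)^2\, f(\OPT)$, and the finitely many such terms (there are at most $1+\epsilon^4|P|$ of them) sum to a quantity that, together with the $l=1$ term, is $\Theta(\epsilon)f(\OPT)$. Plugging back into the first display finishes the proof.

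The main obstacle is making that $l\ge2$ tail genuinely $\Theta(\epsilon)f(\OPT)$: since $(l+1)/(l-1)^2\sim 1/l$, the unrestricted series $\sum_l(l+1)/(l-1)^2$ diverges, so one must use the a priori cutoff $\hat R\le 1+\epsilon^4|P|$ (and the extra power of $\epsilon$ in Lemma~\ref{lem:kulik2.3}), or else sharpen Lemma~\ref{lem:kulik2.3} to a Bernstein-type exponential tail; this is precisely the bookkeeping point at which the $\epsilon^4$-smallness threshold and the step-5 thresholds were calibrated, and it is the step that needs to be carried out carefully. A secondary subtlety worth flagging is that $B$ and ``$f(\hat D^{(3)})$ is large'' are positively correlated — an over-stuffed random ideal has larger $f$-value by monotonicity — so one cannot simply bound $\E[f(\hat D^{(3)});B]$ by $\E[f(\hat D^{(3)})]\cdot\Prob[B]$; slicing by $\hat R$ and invoking Lemma~\ref{lem:kulik2.4} on each slice is exactly what decouples the size from the value.
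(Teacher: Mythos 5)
Your proposal follows essentially the same route as the paper: decompose on the bad event of step~5, keep the good part via Lemma~\ref{lem:kulik2.1}, and control the discarded part by slicing on $\hat R$ and pairing the probability bounds of Lemmas~\ref{lem:kulik2.2} and~\ref{lem:kulik2.3} with the deterministic value bound of Lemma~\ref{lem:kulik2.4} — which, as you correctly note, is exactly what decouples the positive correlation between the overflow event and the value of $f$. The one place you diverge is the obstacle you flag: with unit-width slices $l\le\hat R<l+1$ the series $\sum_l (l+1)/(l-1)^2$ diverges, and your proposed cutoff $\hat R\le 1+\epsilon^4|P|$ only tames it to $O(\epsilon^4\log|P|)$, which is not $\Theta(\epsilon)$ uniformly in the input size. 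The paper avoids this entirely by slicing dyadically, conditioning on $2^l<\hat R\le 2^{l+1}$: Lemma~\ref{lem:kulik2.4} then contributes a factor $2^{l+1}$ while Lemma~\ref{lem:kulik2.3} contributes $\epsilon^4/(2^l-1)^2$, so the tail is bounded by a convergent geometric series of the form $|\Lambda|^2\epsilon^4 f(\OPT)\sum_{l\ge1} 2^{l+2}/2^{2l-2}=O(|\Lambda|^2\epsilon^4)f(\OPT)$, with no cutoff and no dependence on $|P|$. Substituting dyadic for unit slices in your argument closes the gap; everything else in your write-up matches the paper.
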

\begin{proof}
We can prove this lemma by the same argument as \cite[Lemma 2.5]{kulik2009maximizing} with the help of the generalized lemmas (Lemma \ref{lem:kulik2.1}, Lemma \ref{lem:kulik2.2}, Lemma \ref{lem:kulik2.3}, and Lemma \ref{lem:kulik2.4}).
For the sake of completeness, we give a proof here.

Let $B$ be the event where the condition at step 5 is satisfied.
Let $\bar{B}$ be the complementary event of $B$.
By Lemma \ref{lem:kulik2.2}, Lemma \ref{lem:kulik2.3}, and Lemma \ref{lem:kulik2.4}, we have
\begin{align}
    \E[f(\hat D^{(3})] &= E[f(\hat D^{(3)}) \mid B] \Prob[B] + \E[f(\hat D^{(3)}) \mid \bar{B} \land (\hat R < 2)] \Prob[\bar{B} \land (\hat R < 2)]\\
    &+ \sum_{l=1}^\infty \E[f(\hat D^{(3)}) \mid \bar{B} \land (2^l < \hat R \le 2^{l+1})] \Prob[ \bar{B} \land (2^l < \hat R \le 2^{l+1})]\\
    & \le  E[f(\hat D^{(3)}) \mid B] \Prob[B] + \E[f(\hat D^{(3)}) \mid \bar{B} \land (\hat R < 2)] \Prob[\bar{B}]\\
    &+ \sum_{l=1}^\infty \E[f(\hat D^{(3)}) \mid \bar{B} \land (2^l < \hat R \le 2^{l+1})] \Prob[2^l < \hat R]\\
    &\le  E[f(\hat D^{(3)}) \mid B] \Prob[B] + 4 |\Lambda|^2 \epsilon  f(\OPT) + |\Lambda|^2 \epsilon^4 f(\OPT) \sum_{l=1}^\infty \frac{2^{l+2}}{2^{2l-2}}.
\end{align}
In the last inequality, we used the fact that $2^{l} - 1 \ge 2^{l - 1}$ for $l \ge 1$.
Since the last summation is a constant, Lemma \ref{lem:kulik2.1} and the above inequality imply that
\begin{align}
    \E[f(\hat D^{(3)}) \mid B] \Prob[B] \ge (1 - \Theta(\epsilon))(1 - e^{-1})f(\OPT).
\end{align}
Since $\hat D^{(3)} = \hat D^{(5)}$ on the event $B$ and otherwise $\hat D^{(5)} = \emptyset$, we have
\begin{align}
    \E[f(\hat D^{(5)})] = \E[f(\hat D^{(3)}) \mid B] \Prob[B] \ge (1 - \Theta(\epsilon))(1 - e^{-1}) f(\OPT).
\end{align}
\end{proof}

\begin{proof}[Proof of Theorem \ref{thm:rounding_knapsack}]
We first show that $\hat D^{(7)}$ is feasible.
The set $\hat D^{(7)}$ is an ideal because of the property of Algorithm \ref{alg:pushdown}: we can show that $\hat D$ in Algorithm \ref{alg:pushdown} is an ideal at every $t$ by induction.
We show that $c_\lambda(\hat D^{(7)}) \le b_\lambda$ for each $\lambda$.
If $\bar{b}_\lambda > \epsilon b_\lambda$, then $c_\lambda(\hat D_1 \cup \hat D_2) \le b_\lambda$ after step 5 and nothing happens at step 6 and step 7.
Hence, $c_\lambda(\hat D^{(7)}) \le b_\lambda$.
Else, $\bar{b}_\lambda \le \epsilon b_\lambda$.
In this case, we can make $c_\lambda(\hat D_2^{(6)}) = 0$ at step 6 if necessary.
At step 7, Algorithm \ref{alg:pushdown} does not increase $c_\lambda$ since $c_\lambda(p_t') \le c_\lambda(p_t)$ at each $t$ by the order-consistency of $c_{\lambda}$.
In conclusion,
\begin{align}
    c_\lambda(\hat D^{(7)}) \le c_\lambda(\hat D_1^{(5)}) \le \bar{b}_\lambda + \epsilon b_\lambda \le 2 \epsilon b_\lambda \le b_\lambda,
\end{align}
for sufficiently small $\epsilon$ ($< 1/2$).
In conclusion, we have shown that $\hat D^{(7)}$ is feasible.

We next show that $f(\hat D^{(7)}) \ge f(\hat D^{(5)}) - |\Lambda| \epsilon^3 M$.
Together with Lemma~\ref{lem:bound-step-5}, it is sufficient for the desired approximation guarantee.
We first prove that
\begin{align}
    \label{eq:bound-Dto6sub2}
    f(\hat D^{(6)}_2) \ge f(\hat D^{(5)}_2) - |\Lambda| \epsilon^3 M.
\end{align}
If $\bar{b}_\lambda > \epsilon b_\lambda$, then nothing happens at step 6 and the above inequality holds.
Else, $\bar{b}_\lambda \le \epsilon b_\lambda$.
The excess of the knapsack $c_\lambda$ is at most $\epsilon b_\lambda$ due to the definition of step 5.
Thus, the removal procedure at step 6 occurs at most $\epsilon^3$ times for each $\lambda$ since
$c_{\lambda}(p) > \epsilon^4 b_\lambda$ for big element $p$.
Therefore, the removal procedure at step 6 occurs at most $|\Lambda|\epsilon^3$ times in total.
At each removal, we can prove that $f(\hat D_2)$ decreases by $M$ at most.
Indeed, the difference of $\hat D_2$ is at most $I_p$ when big element $p$ is removed.
Together with DR-submodularity of $f$, we can show that $f(\hat D_2)$ decreases at most $f(I_p)$, which is not greater than $M = \max_p f(I_p)$.

By using this bound for $f(\hat D_2^{(6)})$, we prove $f(\hat D^{(7)}) \ge f(\hat D^{5}) - |\Lambda| \epsilon^3 M$.
Let $\hat D_1^{(7)} = \hat D^{(7)} \setminus \hat D_2^{(6)}$.
Notice that
\begin{align}
    f(\hat D^{(5)}) = f(\hat D^{(5)}_2) + f_{\hat D^{(5)}_2}(\hat D_1^{(5)}),\\
    f(\hat D^{(7)}) = f(\hat D^{(6)}_2) + f_{\hat D^{(6)}_2}(\hat D_1^{(7)}).
\end{align}
By the DR-submodularity of $f$ and the definition of Algorithm \ref{alg:pushdown}, we can prove that
$f_{\hat D^{(5)}_2}(\hat D_1^{(5)}) \le f_{\hat D^{(6)}_2}(\hat D_1^{(7)})$.
Indeed, at each iteration of Algorithm \ref{alg:pushdown}, the DR-submodularity implies that
$f_{\hat D}(p'_{t}) \ge f_{\hat D_2'}(p_{t})$,
where $\hat D$ is considered after the $t$-th iteration of Algorithm \ref{alg:pushdown} and $\hat D_2' = \hat D_2^{(5)} \cup \{p_{t'} \mid t' < t\}$.
By summing this inequality, we have 
$f_{\hat D^{(5)}_2}(\hat D_1^{(5)}) \le f_{\hat D^{(6)}_2}(\hat D_1^{(7)})$.
In conclusion, we have
\begin{align}
    f(\hat D^{(7)}) &= f(\hat D^{(6)}_2) + f_{\hat D^{(6)}_2}(\hat D_1^{(7)})\\
    &\ge f(\hat D^{(5)}_2) - |\Lambda|\epsilon^3M + f_{\hat D^{(5)}_2}(\hat D_1^{(5)})\\
    &= f(\hat D^{(5)}) - |\Lambda| \epsilon^3M.
\end{align}
\end{proof}

\bibliographystyle{plain}
\bibliography{main}

\begin{thebibliography}{10}

\bibitem{ageev2004pipage}
Alexander~A Ageev and Maxim~I Sviridenko.
\newblock Pipage rounding: A new method of constructing algorithms with proven
  performance guarantee.
\newblock {\em Journal of Combinatorial Optimization}, 8(3):307--328, 2004.

\bibitem{alon2012optimizing}
Noga Alon, Iftah Gamzu, and Moshe Tennenholtz.
\newblock Optimizing budget allocation among channels and influencers.
\newblock In {\em Proceedings of the 21st international conference on World
  Wide Web}, pages 381--388. ACM, 2012.

\bibitem{bandelt2008metric}
Hans-Jurgen Bandelt and Victor Chepoi.
\newblock Metric graph theory and geometry: a survey.
\newblock {\em Contemporary Mathematics}, 453:49--86, 2008.

\bibitem{brady2010braids}
Tom Brady and Jon McCammond.
\newblock Braids, posets and orthoschemes.
\newblock {\em Algebraic \& Geometric Topology}, 10(4):2277--2314, 2010.

\bibitem{bridson1999metric}
Martin~R Bridson and Andr{\'e} Haefliger.
\newblock {\em Metric spaces of non-positive curvature}.
\newblock Springer-Verlag, Berlin, 1999.

\bibitem{calinescu2011maximizing}
Gruia Calinescu, Chandra Chekuri, Martin P{\'a}l, and Jan Vondr{\'a}k.
\newblock Maximizing a monotone submodular function subject to a matroid
  constraint.
\newblock {\em SIAM Journal on Computing}, 40(6):1740--1766, 2011.

\bibitem{chalopin2014weakly}
J{\'e}r{\'e}mie {Chalopin}, Victor {Chepoi}, Hiroshi {Hirai}, and Damian
  {Osajda}.
\newblock {Weakly modular graphs and nonpositive curvature}.
\newblock {\em arXiv e-prints}, page arXiv:1409.3892, Sep 2014.

\bibitem{chekuri2014submodular}
Chandra Chekuri, Jan Vondr{\'a}k, and Rico Zenklusen.
\newblock Submodular function maximization via the multilinear relaxation and
  contention resolution schemes.
\newblock {\em SIAM Journal on Computing}, 43(6):1831--1879, 2014.

\bibitem{chepoi2000graphs}
Victor Chepoi.
\newblock Graphs of some cat(0) complexes.
\newblock {\em Advances in Applied Mathematics}, 24(2):125 -- 179, 2000.

\bibitem{cornnejols1977location}
G~Cornnejols, M~Fisher, and G~Nemhauser.
\newblock Location of bank accounts of optimize float: An analytic study of
  exact and approximate algorithm.
\newblock {\em Management Science}, 23:789--810, 1977.

\bibitem{ene2016reduction}
Alina Ene and Huy~L Nguyen.
\newblock A reduction for optimizing lattice submodular functions with
  diminishing returns.
\newblock {\em arXiv preprint arXiv:1606.08362}, 2016.

\bibitem{fujishige1983note}
Satoru Fujishige and Nobuaki Tomizawa.
\newblock A note on submodular functions on distributive lattices.
\newblock {\em Journal of the Operations Research Society of Japan},
  26(4):309--318, 1983.

\bibitem{gottschalk2015submodular}
Corinna Gottschalk and Britta Peis.
\newblock Submodular function maximization over distributive and integer
  lattices.
\newblock {\em arXiv preprint arXiv:1505.05423}, 2015.

\bibitem{gratzer2002general}
George Gr{\"a}tzer.
\newblock {\em General lattice theory}.
\newblock Springer Science \& Business Media, 2002.

\bibitem{hassani2017gradient}
Hamed Hassani, Mahdi Soltanolkotabi, and Amin Karbasi.
\newblock Gradient methods for submodular maximization.
\newblock In {\em Advances in Neural Information Processing Systems
  (NeurIPS'17)}, pages 5841--5851, 2017.

\bibitem{hirai2018convexity}
Hiroshi Hirai.
\newblock L-convexity on graph structures.
\newblock {\em Journal of the Operations Research Society of Japan},
  61(1):71--109, 2018.

\bibitem{krause2014submodular}
Andreas Krause and Daniel Golovin.
\newblock Submodular function maximization.
\newblock In {\em Tractability: Practical Approaches to Hard Problems}, pages
  71--104. Cambridge University Press, 2014.

\bibitem{kulik2009maximizing}
Ariel Kulik, Hadas Shachnai, and Tami Tamir.
\newblock Maximizing submodular set functions subject to multiple linear
  constraints.
\newblock In {\em Proceedings of the Twentieth Annual ACM-SIAM Symposium on
  Discrete Algorithms (SODA'09)}, pages 545--554, 2009.

\bibitem{lovasz1983submodular}
L{\'a}szl{\'o} Lov{\'a}sz.
\newblock Submodular functions and convexity.
\newblock In {\em Mathematical Programming The State of the Art}, pages
  235--257. Springer, 1983.

\bibitem{murota2016discrete}
Kazuo Murota et~al.
\newblock Discrete convex analysis: A tool for economics and game theory.
\newblock {\em Journal of Mechanism and Institution Design}, 1(1):151--273,
  2016.

\bibitem{nakashima2019subspace}
So~Nakashima and Takanori Maehara.
\newblock Subspace selection via dr-submodular maximization on lattices.
\newblock In {\em Proceedings of the 33rd AAAI Conference on Artificial
  Intelligence (AAAI'19), Honolulu, Hawaii, USA, January 27 - February 1,
  2019}, 2019.

\bibitem{nemhauser1978analysis}
George~L Nemhauser, Laurence~A Wolsey, and Marshall~L Fisher.
\newblock An analysis of approximations for maximizing submodular set functions
  -- {I}.
\newblock {\em Mathematical Programming}, 14(1):265--294, 1978.

\bibitem{roller1998poc}
Martin Roller.
\newblock Poc sets, median algebras and group actions, an extended study of
  dunwoody's construction and sageev' theorem.
\newblock Technical report, University of Southampton, 1998.

\bibitem{soma2018maximizing}
Tasuku Soma and Yuichi Yoshida.
\newblock Maximizing monotone submodular functions over the integer lattice.
\newblock {\em Mathematical Programming}, 172(1-2):539--563, 2018.

\bibitem{sviridenko2004note}
Maxim Sviridenko.
\newblock A note on maximizing a submodular set function subject to a knapsack
  constraint.
\newblock {\em Operations Research Letters}, 32(1):41--43, 2004.

\bibitem{topkis1978minimizing}
Donald~M Topkis.
\newblock Minimizing a submodular function on a lattice.
\newblock {\em Operations Research}, 26(2):305--321, 1978.

\bibitem{vondrak2008optimal}
Jan Vondr{\'a}k.
\newblock Optimal approximation for the submodular welfare problem in the value
  oracle model.
\newblock In {\em Proceedings of the 40th Annual ACM Symposium on Theory of
  Computing (STOC'08)}, pages 67--74. ACM, 2008.

\end{thebibliography}


\end{document}